\pdfminorversion=4
\documentclass[journal,twoside,web]{ieeecolor}
\usepackage{etoolbox}
\makeatletter
\@ifundefined{color@begingroup}%
{\let\color@begingroup\relax
	\let\color@endgroup\relax}{}%
\def\fix@ieeecolor@hbox#1{%
	\hbox{\color@begingroup#1\color@endgroup}}
\patchcmd\@makecaption{\hbox}{\fix@ieeecolor@hbox}{}{\FAILED}
\patchcmd\@makecaption{\hbox}{\fix@ieeecolor@hbox}{}{\FAILED}
\usepackage{generic}
\usepackage{cite}
\usepackage{amsmath,amssymb,amsfonts,arydshln}
\usepackage{stfloats}
\usepackage{cuted} 
\usepackage{algorithm,algorithmic}
\newcommand{\removelatexerror}{\let\@latex@error\@gobble}
\makeatother
\usepackage{caption} 
\addtolength{\floatsep}{8pt}
\usepackage{lipsum} 
\captionsetup[table]{ 
	labelsep=period, 
	justification=justified, 
	singlelinecheck=on, 
	textfont=normal, 
	labelfont=normal, 
}

\usepackage{units}
\usepackage[dvipsnames]{xcolor}
\definecolor{DR}{RGB}{192, 0, 0}
\definecolor{LR}{RGB}{232, 86, 66}
\definecolor{LB}{RGB}{18, 128, 176}
\definecolor{DB}{RGB}{37, 83, 125}
\definecolor{LG}{RGB}{20,187,99}
\definecolor{DG}{RGB}{32,108,23}
\usepackage{graphicx}
\usepackage{textcomp}
\usepackage{booktabs}
\usepackage{hyperref}
\hypersetup{hidelinks=true}
\newtheorem{theorem}{Theorem}
\newtheorem{lemma}{Lemma}
\newtheorem{remark}{Remark}

\newtheorem{definition}{Definition}
\newtheorem{example}{Example}
\newtheorem{problem}{Problem}

\def\BibTeX{{\rm B\kern-.05em{\sc i\kern-.025em b}\kern-.08em
    T\kern-.1667em\lower.7ex\hbox{E}\kern-.125emX}}

\begin{document}
\title{Open/Closed-loop Active Learning for Data-driven Predictive Control}
\author{Shilun Feng, Dawei Shi, Yang Shi, and Kaikai Zheng
\thanks{Shilun Feng, Dawei Shi, and Kaikai Zheng are with the State Key Laboratory of Intelligent Control and Decision of Complex Systems and MIIT Key Laboratory of Servo Motion System Drive and Control, School of Automation, Beijing Institute of Technology, Beijing 100081, China (e-mail: shilunfeng@bit.edu.cn; daweishi@bit.edu.cn; kaikai.zheng@bit.edu.cn).}
\thanks{
Yang Shi is with the Department of Mechanical Engineering, Faculty of Engineering, University of Victoria, Victoria, BC V8N 3P6, Canada (e-mail: yshi@uvic.ca).}
}

\maketitle

\begin{abstract}
    An important question in data-driven control is how to obtain an informative dataset. In this work, we consider the problem of effective data acquisition of an unknown linear system with bounded disturbance for both open-loop and closed-loop stages. The learning objective is to minimize the volume of the set of admissible systems. First, a performance measure based on historical data and the input sequence is introduced to characterize the upper bound of the volume of the set of admissible systems. On the basis of this performance measure, an open-loop active learning strategy is proposed to minimize the volume by actively designing inputs during the open-loop stage. For the closed-loop stage, a closed-loop active learning strategy is designed to select and learn from informative closed-loop data. The efficiency of the proposed closed-loop active learning strategy is proved by showing that the unselected data cannot benefit the learning performance. Furthermore, an adaptive predictive controller is designed in accordance with the proposed data acquisition approach. The recursive feasibility and the stability of the controller are proved by analyzing the effect of the closed-loop active learning strategy. Finally, numerical examples and comparisons illustrate the effectiveness of the proposed data acquisition strategy.
\end{abstract}

\begin{IEEEkeywords}
    Event-triggered learning, Data-driven predictive control, Active learning.
\end{IEEEkeywords}

\section{Introduction}
\label{sec:introduction}
Due to the advancements in sensing technology and cyber-physical systems, there has been a growing interest in designing controllers directly from data \cite{hou2013from_mbc2ddc, depersis2020formulas, baggio2021NC_DDC}. As a process of learning from data, a major focus in data-driven control has been on how to enhance the learning efficiency. This becomes imperative when resources are limited, or when the control mission needs to be accomplished within a limited time window. Moreover, data-driven control for linear time-invariant (LTI) systems has been widely investigated since various systems behave as LTI systems around their operating points in practice \cite{johanssonQuadrupletankProcessMultivariable2000,spongNonlinearControlReaction2001a}. Thus, {a central challenge} in this context is the development of an effective solution that actively generates and selects informative data samples for unknown LTI systems. Such a solution would promisingly lead to a satisfactory controller design with a smaller size of data samples, reduced data transmission, and shortened learning time.

{When data is randomly generated and collected, learning is often slow and inefficient due to the presence of redundant information \cite{LANG1995331}. {Active learning offers a promising solution to this problem due to its capability to selectively acquire informative data for learning \cite{settles.tr09}.} {Early methods for data generation were proposed by either estimating the potential response of the system using a preliminary model \cite{geversInputDesignOpenloop2006},} or by evaluating the performance of the current controller \cite{hjalmarssonExperimentDesignClosedloop2005}. From a systems and control perspective, active learning can be divided into open-loop active learning and closed-loop active learning, corresponding to open-loop and closed-loop control systems, respectively. For both cases, the objective is to maximize a certain data informativity metric. For the purpose of data collection, open-loop active learning approaches design inputs to drive the system to the desirable state. In contrast, closed-loop active learning maximizes the metric by actively selecting closed-loop data, since the system is driven by the closed-loop controller.} Generally, active learning can be classified by the employed metrics. For {open-loop active learning}, metrics are designed to quantify uncertainty \cite{Lewis1994uncertaintysampling, pmlr-v162-raj22a}, diversity \cite{10.1145/1015330.1015349}, or various information measures \cite{Yu2006Active,PRONZATO2008303,Abraham2019activelearning,2022_Beyond,Monica2022online,muller2023inputdesign}, e.g., Fisher information matrix and Hankel matrix. We refer readers to \cite{pmlr-v125-wagenmaker20a,taylor2021ALinRobotics} for additional designs of metrics and their applications in active learning. Open-loop active learning strategies need to actively excite the system to collect desirable data samples. However, in closed-loop systems, the input cannot be arbitrarily designed by the user to generate desirable samples, since the inputs are generated by a closed-loop controller.

Closed-loop data can also be leveraged to learn the system dynamics online, which is consistent with the considerations in closed-loop system identification \cite{vandenhofClosedloopIssuesSystem1998}. For data acquisition in closed-loop scenarios, an important issue is how to make better use of the data generated by the feedback controller. A natural idea is to collect each data sample. However, a main drawback of this approach is that the online learning algorithm need to be consecutively performed with the updated dataset. In response to this conundrum, closed-loop active learning schemes received considerable attention in recent years for their ability to actively select informative closed-loop data based on a certain criterion \cite{Heemels2012IntroETC,peng2018survey}. Some of the early research contributes to state estimation, where only informative data samples are collected to estimate the true state of the system \cite{Shi2014event,Shi2015setvalue}. Inspired by this idea, there has been an increasing interest in learning unknown system dynamics by actively selecting closed-loop data. {Several studies utilized the performance of the learning result as the learning criterion \cite{Umlauft2020ETL_Gaussian,guo2020prediction,Solowjow2018ETL,Solowjow2020ETL,umlauft2020smart,lederer2021gaussian,jiao2022backstepping,he2022learning,Beuchert2020ETL_Cyclically,zheng2022economic}. For instance, Umlauft and Hirche considered Gaussian processes and utilized the uncertainty of the learned model as the criterion \cite{Umlauft2020ETL_Gaussian}. Solowjow \emph{et al.} considered linear systems and set the mismatch of communication rates between the learned model and the true model as the criterion \cite{Solowjow2018ETL}. In addition to the learning results, some other works focused on the properties of the {closed-loop data} samples \cite{Zheng2023ETL_LACKI,DIAO2018ETL_FIR,gatsis2021adaptive,gatsis2022federated}. As an example, Zheng \emph{et al.} employed the distance between the new data sample and the dataset as the learning criterion for lazily adapted constant kinky inference problems \cite{Zheng2023ETL_LACKI}. For FIR models, Diao \emph{et al.} used the magnitude of data variation as the criterion \cite{DIAO2018ETL_FIR}. However, analyzing the informativity of closed-loop data samples of linear systems without prior knowledge of the true system remains an open challenge.}

{Another topic related to our work is adaptive predictive control, which is widely used due to its ability to handle parameter uncertainties and time-variant parameters \cite{Qin2003Survey_MPC}. These advantages allow us to address common issues in the {data informativity framework \cite{2022_SLemma}}, such as the update of the dataset and the learning uncertainty introduced by noisy datasets.} The research on model-based adaptive predictive control can be traced back to \cite{Mayne1993AMPC}, where nonlinear systems with input constraints were analyzed. Later, several works discussed the convergence of parameter estimates by introducing data-selection methods \cite{Fukushima2007AMPC} or additional constraints \cite{vicente2019stabilizing}. Another line of research integrated set-membership identification with robust predictive control where various representations of uncertainties were constructed \cite{Adetola2011AMPC_Set,Lu2019AMPC_Set,Lorenz2019RMPC_update, Zhang2020AMPC_uncertainty}. To enable data-driven predictive control for LTI systems, {the approaches based on the fundamental lemma} and the set-valued approaches are widely investigated. The fundamental lemma, which was proposed by Willems \textit{et al.} in \cite{willemsNotePersistencyExcitation2005}, illustrates that {all trajectories of a deterministic LTI system can be captured by a finite collection of noise-free trajectories. To enable the application of this lemma in the predictive control of noisy systems, slack variables are employed to guarantee the feasibility of the predictor \cite{coulson2019deepc}.} With this lemma, various data-driven formulations were proposed for different predictive control configurations \cite{mehrnooshEventTriggeredRobustDataDriven2023,liuDataDrivenSelfTriggeringMechanism2023,liuDataDrivenResilientPredictive2023, schmittDataDrivenPredictiveControl2024, barosOnlineDataenabledPredictive2022}. On the other hand, the set-valued approaches utilize a data-driven formulation of admissible systems consistent with the collected data samples. {Some studies analyzed the reachable system states under a given input sequence based on the set of admissible systems and designed robust predictive controllers using these reachable states to replace the prediction of system trajectories given by the true model \cite{russoTubeBasedZonotopicDataDriven2023,alanwarRobustDatadrivenPredictive2022}. {However, enabling a stabilizing controller with a reduced number of data samples by actively selecting informative ones remains unclear for data-driven predictive control.} This gap also motivates our investigations in this work. Note that our work is also different from existing results on data-driven predictive control with event-/self-triggered and denial communications \cite{mehrnooshEventTriggeredRobustDataDriven2023,liuDataDrivenSelfTriggeringMechanism2023,liuDataDrivenResilientPredictive2023}, where the focuses were on exploring event-triggered/denial data transmission and controller update mechanisms to mitigate communication constraints and ensure stability and performance of the closed-loop system. By contrast, our work considers the scenario where the data generation and transmission processes are not restricted, and our aim is to design learning methods to actively generate and select informative data samples for enhanced data efficiency and accelerated learning by evaluating the informativity of the data samples.}

{In this work, we consider the scenario that the learning framework consists of two stages: an open-loop stage and a closed-loop stage. During the open-loop stage, the data acquisition is driven by an active learning strategy to generate informative data samples. In the closed-loop stage, a data-driven adaptive predictive controller with an active learning strategy is proposed. The proposed closed-loop active learning strategy can detect and learn from informative closed-loop data samples, thereby enhancing control performance.} A few challenges, however, need to be addressed to fulfill the design objectives. First, the informativity measure of datasets depends on the unknown system dynamics, which makes it difficult to suitably design the input signals during the {open-loop stage}. Second, the determination of the informativity of a single data sample could be computationally expensive. {The informativity of a single data sample is related not only to the learning result based on the previous dataset but also to the learning result based on the updated dataset that includes the incoming data sample. Therefore, a computationally efficient metric to evaluate the informativity of a data sample need to be proposed.} Furthermore, the effect of learning-based adaptation also adds to the difficulty of analyzing the closed-loop stability and the recursive feasibility of the predictive controller. The main contributions of this work are summarized as follows:
\begin{enumerate}
    \item {For linear systems with bounded process noises, a necessary and sufficient condition is proposed to ensure the contraction of the upper bound of the volume of the  admissible system ellipsoids (Theorem 1). An open-loop active learning strategy is then designed to minimize the volume by actively designing system inputs based on historical state-input trajectories.}
    \item To determine the informativity of a single data sample without actually running the learning algorithm, {an informativity criterion is formulated based on a simple semidefinite program.} {Then, a {closed-loop active learning} strategy which actively selects and learns from the closed-loop data is proposed based on this criterion.} We show the inclusion of unselected data samples cannot benefit the learning algorithm to obtain a smaller set of admissible systems and prove the feasibility of the learning algorithm (Theorem \ref{thm:ETL}).
    \item In the {closed-loop stage}, an adaptive tube-based data-driven predictive controller is designed by applying the proposed closed-loop active learning strategy. We first show the contraction of tubes under the learning and adaptation strategy (Lemma \ref{lem:error_set_inclusion}). Furthermore, given a suitable dataset, the recursive feasibility and the stability of the proposed learning-based adaptive predictive controller are proved (Theorem \ref{thm:ATDPC_stable}).
\end{enumerate}

The rest of the paper is organized as follows. Section \ref{sec:Preliminaries} provides the preliminaries of this work. {Section \ref{sec:Problem Formulation} presents the problems considered in this work.} In Section \ref{sec:MoASE}, the approximation of the volume of admissible systems sets is presented. In Section \ref{sec:Active Online Learning for Modeling}, the active learning strategy is presented. Then, an adaptive predictive controller with a closed-loop active learning strategy is proposed, and sufficient conditions for recursive feasibility and stability are established in Section \ref{sec:Tube-based Data-driven Predictive Control}. Several numerical examples are presented and discussed in Section \ref{sec:Numerical Examples}.

\emph{Notation: }For a matrix $S$, $S_{[i,j]}$ denotes the element at the $i$th row and $j$th column of $S$, $S_{[i]}$ denotes the $i$th column of it, $S_{[i:j]}$ denotes the matrix $\![ S_{[i]} \ S_{[i+1]} \dots \ S_{[j]} ]$, $\!S^{\top}\!$ denotes its transpose, $\!\det(S)\!$ denotes its determinant, $\!\text{col}(S)\!$ denotes the number of its columns, and $\text{vec}(S)\!:=\![ S_{[1]}^{\top} \ S_{[2]}^{\top}  \dots \ S_{[\text{col}(S)]}^{\top} ]^{\!\top}\!\!$ denotes its vectorization. We write $\!S\!\succ\!0\!$ ($S\!\succeq\!0$) if $\!S\!$ is symmetric and positive (semi-)definite, and we denote negative (semi-)definiteness similarly. Besides, we let $\mathbb{Z}_{\geq0}$ denote the set of nonnegative integers and write $\mathbb{Z}_{[i,j]}:=\{i,i+1,\dots,j\}$. Let $\Vert \cdot \Vert_2$ denote the 2-norm of vectors or matrices. {We denote by $I_n$ the identity matrix of dimension $n$, and by $0_{n}$ the square matrix of dimension $n$, where all elements are zero. When subscripts are not specified, $I$ and $0$ represent the identity matrix and zero matrix of appropriate dimensions, respectively.} For two sets $\mathbb{X}$ and $\mathbb{Y}$, Minkowski set addition is defined as $\mathbb{X} \oplus \mathbb{Y}:=\{x+y \ | \ x\in\mathbb{X}, y\in \mathbb{Y}\} $, and Minkowski set subtraction is defined similarly as $\mathbb{X} \ominus \mathbb{Y}:=\{x-y \ | \ x\in\mathbb{X}, y\in \mathbb{Y}\}$. {We also define $\mathcal{D}_n:=\{\text{diag}(d_1,d_2,\dots,d_n) \ | \ d_i \geq 0, i \in \mathbb{Z}_{[1,n]}\}$.}

\section{Preliminaries} \label{sec:Preliminaries}

In this paper, we consider an LTI system
\begin{equation} \label{eq:sys_real}
    x(k+1) = \mathcal{A}x(k)+\mathcal{B}u(k)+w(k)
\end{equation}
where $x(k) \!\in\! \mathbb{R}^{n_x}$ is the state, $u(k)\!\in\! \mathbb{R}^{n_u}$ is the input, $w(k)\!\in \!\mathbb{R}^{n_x}$ is the unknown disturbance, and $\mathcal{A}\!\in\! \mathbb{R}^{n_x \times n_x}$ and $\mathcal{B}\!\in\! \mathbb{R}^{n_x \times n_u}$ denote the unknown state and input matrices. We assume that $w(k) \in \mathcal{W}$ holds for all $k\in \mathbb{Z}_{\geq 0}$, where $\mathcal{W} := \left\{ w \in \mathbb{R}^{n_x} \mid w^{\top}w < \delta\right\}$, {with $\delta>0$ being a known scalar}. For compactness, we define {
\begin{equation}
	\setlength\belowdisplayskip{-6pt}
	\Delta_r:=[\mathcal{A} \ \mathcal{B}], \ h(k):=[x(k)^{\!\top} u(k)^{\!\top}]^{\top}\!, \ n_h := n_x+n_u.
\end{equation}}
\subsection{Matrix Ellipsoids and the Matrix S-Lemma}
{
	Before formulating the problem of active learning for adaptive data-driven predictive control, preliminaries about matrix ellipsoids and the matrix S-lemma are introduced first for the legibility and clarity of this work.
	\begin{definition}[\hspace{-.01em}\cite{2021_Trade-off}] \label{def:matrix_ellipsoid}
		For symmetric matrices $E \!\in\! \mathbb{R}^{p \times p}$, $G \!\in\! \mathbb{R}^{q \times q}$ and matrix $F \!\in\! \mathbb{R}^{p \times q}$, a \emph{matrix ellipsoid} is defined as a set in the following form:
		\begin{equation} \label{eq:matrix_ellipsoid_1}
			\begin{split}
				{\mathcal{E}_{m}} \!:=\! \left\{ {Z \! \in \mathbb{R}^{p \times q} \hspace{2pt}|\hspace{2pt} Z^{\top}EZ+Z^{\top}F+F^{\top}Z+G \preceq 0 }\right\}
			\end{split}
		\end{equation}
		where $E \!\succ\! 0$, ${F^{\top}E^{-1}F-G \!\succ\! 0}$. If we let ${Z_c\!:=\!-E^{-1}F}$, $G_c\!:=\!F^{\top}E^{-1}F-G$, the set in \eqref{eq:matrix_ellipsoid_1} can be reformulated as:
		\begin{equation} \label{eq:matrix_ellipsoid_2}
			{\mathcal{E}^{'}_{m}} := \left\{ {Z \in \mathbb{R}^{p \times q} \hspace{2pt}|\hspace{2pt} (Z-Z_c)^{\top}E(Z-Z_c) \preceq G_c }\right\},
			\end{equation}
		which is another form of matrix ellipsoid. ${Z_c}$ is the \emph{center} of $\mathcal{E}_{m}$. Besides, we also note that ${Z_c}$ always belongs to $\mathcal{E}_{m}$ on the basis of \eqref{eq:matrix_ellipsoid_2}.
	\end{definition}

	As an extension to the classical ellipsoids in the Euclidean space \cite{1994_LMI}, the constraint $E \succ 0$ ensures that $\mathcal{E}_{m}$ is bounded, and the constraint $F^{\top}E^{-1}F - G \succ 0$ ensures that $\mathcal{E}_{m}$ is not empty or a singleton. Furthermore, the measure of matrix ellipsoids was analyzed  using standard measure theory in \cite{2021_Trade-off}.
	\begin{lemma}[\text{see \cite[Lemma 1]{2021_Trade-off}}] \label{Lem:MoME}
		A matrix ellipsoid is measurable. For $\mathcal{E}_m$ defined in \eqref{eq:matrix_ellipsoid_1}, a measure can be defined as $\mu(\mathcal{E}_{m})\!:=\!m(\mathcal{V}(\mathcal{E}_{m}))$, where $\!\mathcal{V}(\mathcal{E}_{m})\!:=\!\{\text{vec}(s) \mid s\!\in\!\mathcal{E}_{m}\}\!$ and $m(\mathcal{V}(\mathcal{E}_{m}))$ is the Lebesgue measure of $\mathcal{V}(\mathcal{E}_{m})$. Accordingly, $\mu(\mathcal{E}_{m})=\beta_m (\det{( F^{\top}E^{-1}F - G)})^{p/2}(\det{(E^{-1})})^{q/2}$, where the constant $\beta_m$ depends only on \( p \) and \( q \), which represent the number of rows and columns, respectively, of the matrices belonging to \(\mathcal{E}_{m}\).
	\end{lemma}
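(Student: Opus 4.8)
The plan is to reduce the matrix ellipsoid to a canonical spectral-norm ball by an invertible linear change of variables on $\mathbb{R}^{p\times q}$, and then track how the Lebesgue measure transforms under vectorization. Starting from the centered form \eqref{eq:matrix_ellipsoid_2}, I note that $E\succ0$ and $G_c:=F^{\top}E^{-1}F-G\succ0$ both admit symmetric positive-definite square roots $E^{1/2}$ and $G_c^{1/2}$. I would introduce the linear bijection $\Phi\colon Z\mapsto Y:=E^{1/2}(Z-Z_c)G_c^{-1/2}$ on $\mathbb{R}^{p\times q}$, whose inverse is $Z=Z_c+E^{-1/2}YG_c^{1/2}$. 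A congruence transformation by $G_c^{-1/2}$ shows that $(Z-Z_c)^{\top}E(Z-Z_c)\preceq G_c$ holds \emph{if and only if} $Y^{\top}Y\preceq I_q$, i.e.\ $\Vert Y\Vert_2\le 1$; hence $\Phi$ maps $\mathcal{E}_{mat}$ bijectively onto the spectral-norm unit ball $\mathbb{B}:=\{Y\in\mathbb{R}^{p\times q}\mid\Vert Y\Vert_2\le 1\}$. Since $\mathbb{B}$ is closed and bounded, its vectorization $\mathcal{V}(\mathbb{B})\subset\mathbb{R}^{pq}$ is compact and hence Lebesgue measurable; set $\beta_m:=m(\mathcal{V}(\mathbb{B}))$, which is finite and, by construction, depends only on $p$ and $q$. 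Because $\Phi$ is an affine bijection, measurability transfers from $\mathcal{V}(\mathbb{B})$ to $\mathcal{V}(\mathcal{E}_{mat})$, which establishes the first assertion of the lemma.

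For the volume formula, I would pass to vectorized coordinates using $\text{vec}(AXB)=(B^{\top}\otimes A)\,\text{vec}(X)$, which gives $\text{vec}(Y)=(G_c^{-1/2}\otimes E^{1/2})\bigl(\text{vec}(Z)-\text{vec}(Z_c)\bigr)$; thus $\Phi$ is realized on $\mathbb{R}^{pq}$ as an affine map with linear part $M:=G_c^{-1/2}\otimes E^{1/2}$. Using $\det(A\otimes B)=(\det A)^{p}(\det B)^{q}$ for $A\in\mathbb{R}^{q\times q}$ and $B\in\mathbb{R}^{p\times p}$ gives $\det M=(\det G_c)^{-p/2}(\det E)^{q/2}\neq 0$. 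The transformation rule for Lebesgue measure under an invertible affine map then yields
\[
\mu(\mathcal{E}_{mat})=|\det M|^{-1}\,m(\mathcal{V}(\mathbb{B}))=\beta_m(\det G_c)^{p/2}(\det E)^{-q/2},
\]
and substituting $G_c=F^{\top}E^{-1}F-G$ together with $(\det E)^{-q/2}=(\det E^{-1})^{q/2}$ recovers the stated expression.

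I do not anticipate a genuine obstacle: measurability is immediate from compactness, and the rest is linear-algebra bookkeeping. The two points that warrant care are (i) verifying that the defining matrix inequality of $\mathcal{E}_{mat}$ is truly \emph{equivalent} to $Y^{\top}Y\preceq I_q$ under $\Phi$ --- not merely implied --- which holds because $\Phi$ and $\Phi^{-1}$ are linear and $E^{1/2}$, $G_c^{1/2}$ are invertible; and (ii) tracking the Kronecker-determinant exponents correctly, where the exponent attached to the $p\times p$ factor $E^{1/2}$ is $q$ and the exponent attached to the $q\times q$ factor $G_c^{-1/2}$ is $p$. It is also worth stressing in the write-up that $\beta_m$ is precisely the $pq$-dimensional Lebesgue volume of the vectorized spectral-norm ball, which makes the asserted dependence on only $p$ and $q$ evident.
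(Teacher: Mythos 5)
Your argument is correct. Note that the paper does not prove this lemma at all --- it is imported verbatim from \cite[Lemma 1]{2021_Trade-off} --- so there is no in-paper proof to compare against; your reduction of the centered form \eqref{eq:matrix_ellipsoid_2} to the spectral-norm unit ball via $Y=E^{1/2}(Z-Z_c)G_c^{-1/2}$, followed by the Kronecker-product Jacobian $\det(G_c^{-1/2}\otimes E^{1/2})=(\det G_c)^{-p/2}(\det E)^{q/2}$, is exactly the standard derivation and yields the stated formula with $\beta_m$ equal to the $pq$-dimensional volume of the vectorized spectral-norm ball. The two delicate points you flag (the two-sided equivalence under congruence by the invertible $G_c^{1/2}$, and which Kronecker factor carries which exponent) are handled correctly.
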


	Note that any element in a matrix ellipsoid satisfies a quadratic matrix inequality (QMI) according to \eqref{eq:matrix_ellipsoid_1}. To offer an insight into QMIs, a generalized version of the S-lemma was discussed in \cite{2022_SLemma}.
	\begin{lemma}[\text{\hspace{-.01em}\cite[Theorem 9]{2022_SLemma}}] \label{lem:S-lemma}
		Let $\tilde{Z} \in \mathbb{R}^{n_a \times n_b}$, $\tilde{M}$ and $\tilde{N}$ be symmetric matrices in $\mathbb{R}^{(n_a+n_b) \times (n_a+n_b)}$. Assume that there exists some matrix $\bar{Z} \in \mathbb{R}^{n_a \times n_b}$ such that
		\begin{equation}  \label{eq:SLemma_Slater}
			\begin{bmatrix}
				I \\ \bar{Z}
			\end{bmatrix}^{\!\top}\!
			\tilde{N}
			\begin{bmatrix}
				I \\ \bar{Z}
			\end{bmatrix}
			\succ 0.
		\end{equation}
		Then, the following statements are equivalent:
		\begin{enumerate}
			\item
			$\left[\begin{smallmatrix}
				I \\ \tilde{Z}
			\end{smallmatrix}\right]^{\!\top} \hspace{-0.3em} \tilde{M}
			\left[\begin{smallmatrix}
				I \\ \tilde{Z}
			\end{smallmatrix}\right] \!\succeq\! 0, \forall \tilde{Z} \in \mathbb{R}^{n_a \times n_b} \hspace{0.2em} \text{with}
			\left[\begin{smallmatrix}
				I \\ \tilde{Z}
			\end{smallmatrix}\right]^{\!\top} \hspace{-0.3em} \tilde{N}
			\left[\begin{smallmatrix}
				I \\ \tilde{Z}
			\end{smallmatrix}\right] \!\succeq\! 0.$ \\[-.8em]
			\item There exists a scalar $\alpha \geq 0$ such that $\tilde{M}-\alpha \tilde{N} \succeq 0.$
		\end{enumerate}
	\end{lemma}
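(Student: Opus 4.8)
The plan is to establish the two implications separately; the implication $(2)\Rightarrow(1)$ is immediate and the real content is $(1)\Rightarrow(2)$. For $(2)\Rightarrow(1)$: given $\alpha\ge 0$ with $\tilde{M}-\alpha\tilde{N}\succeq 0$ and any $\tilde{Z}$ satisfying $\begin{bmatrix} I\\ \tilde{Z}\end{bmatrix}^{\top}\tilde{N}\begin{bmatrix} I\\ \tilde{Z}\end{bmatrix}\succeq 0$, multiplying $\tilde{M}-\alpha\tilde{N}\succeq 0$ on both sides by $\begin{bmatrix} I\\ \tilde{Z}\end{bmatrix}$ yields $\begin{bmatrix} I\\ \tilde{Z}\end{bmatrix}^{\top}\tilde{M}\begin{bmatrix} I\\ \tilde{Z}\end{bmatrix}\succeq \alpha\begin{bmatrix} I\\ \tilde{Z}\end{bmatrix}^{\top}\tilde{N}\begin{bmatrix} I\\ \tilde{Z}\end{bmatrix}\succeq 0$, which is statement~1; this direction uses neither \eqref{eq:SLemma_Slater} nor anything about the dimensions.

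For $(1)\Rightarrow(2)$ I would follow the separating-hyperplane template of the classical scalar S-lemma while compensating for the fact that its key ingredient — Dines convexity of the joint numerical range $\{(v^{\top}\tilde{M}v,\,v^{\top}\tilde{N}v)\}$ — has no matrix-valued analogue, which is precisely why \eqref{eq:SLemma_Slater} is imposed. The steps I envisage: (i) use the Slater point $\bar{Z}$ — both statements are preserved if one replaces $\tilde{Z}$ by $\tilde{Z}-\bar{Z}$ and congruence-transforms $\tilde{M},\tilde{N}$ by $\begin{bmatrix} I&0\\ \bar{Z}&I\end{bmatrix}$ — to reduce to $\bar{Z}=0$, so that the $(1,1)$-block of $\tilde{N}$ is positive definite, and then scale it to the identity; (ii) split on the sign of the $(2,2)$-block $\tilde{N}_{22}$: if $\tilde{N}_{22}\prec 0$, completing the square shows the solution set $\{\tilde{Z}:\begin{bmatrix} I\\ \tilde{Z}\end{bmatrix}^{\top}\tilde{N}\begin{bmatrix} I\\ \tilde{Z}\end{bmatrix}\succeq 0\}$ is a compact matrix ball, whereas if $\tilde{N}_{22}$ has a nonnegative eigenvalue there are recession directions $\begin{bmatrix} 0\\ W\end{bmatrix}$ along which statement~1 forces $\tilde{M}$ to be sign-definite, allowing one to peel off a subspace and recurse back to the compact case; (iii) in the compact case argue by contradiction — if $\tilde{M}-\alpha\tilde{N}\not\succeq 0$ for every $\alpha\ge 0$ then $\sup_{\alpha\ge 0}\lambda_{\min}(\tilde{M}-\alpha\tilde{N})<0$, and a minimax / conic-duality argument over the compact convex constraint set (where the Slater point guarantees strong duality) should produce a single feasible $\tilde{Z}$ together with a unit vector $y$ for which $\begin{bmatrix} y\\ \tilde{Z}y\end{bmatrix}^{\top}\tilde{M}\begin{bmatrix} y\\ \tilde{Z}y\end{bmatrix}<0$ while $\begin{bmatrix} I\\ \tilde{Z}\end{bmatrix}^{\top}\tilde{N}\begin{bmatrix} I\\ \tilde{Z}\end{bmatrix}\succeq 0$, contradicting statement~1.

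The hard part is exactly the passage in step (iii). In the scalar S-lemma the multiplier drops out of a one-dimensional separation, but here the constraint $\begin{bmatrix} I\\ \tilde{Z}\end{bmatrix}^{\top}\tilde{N}\begin{bmatrix} I\\ \tilde{Z}\end{bmatrix}\succeq 0$ is semidefinite, so its natural Lagrange multiplier is a matrix, and one must show that the optimal certificate can after all be taken to be the nonnegative \emph{scalar} $\alpha$ of statement~2 — and simultaneously that the bad direction extracted from the failure of $\tilde{M}-\alpha\tilde{N}\succeq 0$ genuinely lies in the range of some $\begin{bmatrix} I\\ \tilde{Z}\end{bmatrix}$ rather than an arbitrary subspace. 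This collapse from a matrix certificate to a scalar one is where the non-convexity bites and where \eqref{eq:SLemma_Slater} is indispensable; everything else is bookkeeping with congruence transformations and Schur complements. (For the paper's purposes the statement may of course simply be invoked from \cite{2022_SLemma}.)
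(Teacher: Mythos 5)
First, note that the paper does not prove this lemma at all: it is imported verbatim from \cite[Theorem 9]{2022_SLemma}, so there is no in-paper argument to compare yours against. Your direction $(2)\Rightarrow(1)$ is correct and complete: congruence by $[\,I\ \ \tilde{Z}^{\top}]^{\top}$ preserves positive semidefiniteness, and $\alpha\geq 0$ together with the $\tilde{N}$-inequality does the rest.

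The direction $(1)\Rightarrow(2)$, however, is a plan rather than a proof, and the decisive step is exactly the one you leave open. In step (iii) you assert that a ``minimax / conic-duality argument'' over the compact constraint set will, upon failure of $\tilde{M}-\alpha\tilde{N}\succeq 0$ for every $\alpha\geq 0$, manufacture a single feasible $\tilde{Z}$ violating statement 1. That is precisely the losslessness claim that constitutes the entire content of the matrix S-lemma, and nothing in your sketch establishes it: the natural Lagrange multiplier of the semidefinite constraint $[\,I\ \ \tilde{Z}^{\top}]\,\tilde{N}\,[\,I\ \ \tilde{Z}^{\top}]^{\top}\succeq 0$ is a \emph{matrix}, the map $\tilde{Z}\mapsto\lambda_{\min}\bigl([\,I\ \ \tilde{Z}^{\top}]\,\tilde{M}\,[\,I\ \ \tilde{Z}^{\top}]^{\top}\bigr)$ is not concave, so the minimax swap is not ``standard strong duality,'' and you must additionally show that the offending direction can be realized as $[\,y^{\top}\ \ (\tilde{Z}y)^{\top}]^{\top}$ for one feasible $\tilde{Z}$ rather than lying in an arbitrary subspace. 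Step (ii)'s ``peel off a subspace and recurse'' for the noncompact case is likewise not a well-defined reduction. For the record, the cited proof in \cite{2022_SLemma} closes this gap by a different route: the Slater condition \eqref{eq:SLemma_Slater} is used to show that (up to a perturbation handling the degenerate vectors) every $x=(x_1,x_2)$ with $x^{\top}\tilde{N}x\geq 0$ arises as a column of some feasible $[\,I\ \ \tilde{Z}^{\top}]^{\top}$, which reduces the matrix statement to the classical scalar S-procedure of Yakubovich and thereby yields the scalar multiplier $\alpha$. Since the paper only invokes the result, citing \cite[Theorem 9]{2022_SLemma} is the appropriate resolution; as a standalone argument, yours has a genuine gap at its central step.
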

	
	{\begin{remark} \label{rem:onedirect_slemma}
	 	Note that in Lemma \ref{lem:S-lemma}, the first statement follows directly from the second statement without the verification of the generalized Slater condition. Thus, we do not need to check the generalized Slater condition \eqref{eq:SLemma_Slater} to prove the implication ``2)'' $\Rightarrow$ ``1)'' in Lemma \ref{lem:S-lemma}.
	\end{remark}}
}
\subsection{Admissible System Ellipsoids}
{Suppose that the collected data are $u_s(i)$, $i \in \mathbb{Z}_{[0,t_0-1]}$, and $x_s(i)$, $i \!\in\! \mathbb{Z}_{[0,t_0]}$, where $u_s(i)$ denotes the applied input sequence, and $x_s(i)$ denotes the corresponding state trajectory.} Then, following the general framework for data-driven analysis and control in \cite{2022_SLemma}, we term a matrix pair $(A,B)$ an \emph{admissible system} if there exist $w_s(i) \! \in \mathcal{W}$, $i\in \mathbb{Z}_{[0,t_0-1]}$ such that
\begin{equation} \label{eq:sys_cal}
	x_s(i+1) = Ax_s(i)+Bu_s(i)+w_s(i), \forall i\in \mathbb{Z}_{[0,t_0-1]}.
\end{equation}
We define $\!\Delta\!:=\! [A \ B]$, {$h_s(i)\!:=\![x_s(i)^{\!\top} u_s(i)^{\!\top}]^{\top}\!\!$, and
\begin{equation*}
	\xi(h,\dot{x}):=\begin{bmatrix}
		I &\hspace{-2pt} \dot{x} \\
		0 &\hspace{-2pt} - h
	\end{bmatrix}
	\begin{bmatrix}
		\delta I &\hspace{-2pt} 0 \\
		0 &\hspace{-2pt} -I
	\end{bmatrix}
	\begin{bmatrix}
		I &\hspace{-2pt} \dot{x} \\
		0 &\hspace{-2pt} - h
	\end{bmatrix}^{\!\top}\!\!.
\end{equation*}
}{Note that the notation ``dot'' denotes forward time advance in discrete-time systems following the convention in sampled-data systems \cite{chen2012optimal}.}

Then, for any admissible system $(A,B)$, we have
\begin{equation} \label{eq:ase_single}
	\begin{bmatrix}
		I \\[2pt] \Delta^{\top}
	\end{bmatrix}^{\top}\!
	\xi(h_s(i),x_s(i+1))
	\begin{bmatrix}
		I \\[2pt] \Delta^{\top}
	\end{bmatrix}
	\succ 0
\end{equation}
holds for all $i\in\mathbb{Z}_{[0,t_0-1]}$ based on \eqref{eq:sys_cal} and the definition of admissible systems. We denote the number of collected data samples as $n$ and group these samples as
{\begin{equation*}
	\begin{split}
		H &:= [h_s({d_1}), h_s({d_2}),\dots, h_s(d_n)], \\
		\dot{X} &:= [x_s({d_1\!+\!1}), x_s({d_2\!+\!1}),\dots, x_s({d_n\!+\!1})], \\
	\end{split}
\end{equation*}
}where $\{d_1, d_2, \dots, d_n\} \!\subseteq\! \mathbb{Z}_{[0,t_0-1]}$ is {the set of indices} of these samples. Suppose that $W:=[w_s({d_1}), w_s({d_2}),\dots, w_s(d_n)]$. Then, according to \eqref{eq:sys_real}, we have
\begin{equation} \label{eq:matrix_sys}
	\dot{X}=\Delta_r H+W.
\end{equation}
Furthermore, for {$\Lambda\!=\!\text{diag}(\lambda_1,\lambda_2,\dots, \lambda_n)\!\in\!\mathcal{D}_n$}, we define
\begin{equation*}
	{\Xi(H,\dot{X},\Lambda)\!:=\!
	\begin{bmatrix}
		I & \dot{X} \\
		0 & -H
	\end{bmatrix}\!
	\begin{bmatrix}
		\mathrm{tr}(\Lambda) \delta I &\hspace{-4pt} 0 \\
		0 &\hspace{-4pt} -\Lambda
	\end{bmatrix}\!
	\begin{bmatrix}
		I & \dot{X} \\
		0 & -H
	\end{bmatrix}^{\!\top}\!.}
\end{equation*}
We can also derive $\Xi(H,\dot{X},\Lambda) = \sum_{i=1}^{n}\lambda_i \xi(H_{[i]},\dot{X}_{[i]})$. Then, on the basis of \eqref{eq:ase_single}, it follows directly that
\begin{equation} \label{eq:admissible_system}
	\begin{bmatrix}
		I \\ \Delta^{\top}
	\end{bmatrix}^{\top}\!
	\Xi(H,\dot{X},\Lambda)
	\begin{bmatrix}
		I \\ \Delta^{\top}
	\end{bmatrix}
	\succeq 0
\end{equation}
holds for any admissible system $(A,B)$. Now, we can formulate a set that captures all admissible systems as
\begin{equation} \label{eq:ASE}
    \mathcal{C}(H,\dot{X},\Lambda) := \left\{(A,B) \text{ $\vert$ \eqref{eq:admissible_system} is satisfied} \right\}.
\end{equation}
{Note that $\lambda_i$ acts as the weight of $\xi(H_{[i]},\dot{X}_{[i]})$ that {constitutes} $\Xi(H,\dot{X},\Lambda)$, thereby controlling the shape of $\mathcal{C}(H,\dot{X},\Lambda)$.} In fact, $\mathcal{C}(H,\dot{X},\Lambda)$ represents a matrix ellipsoid under certain conditions on the dataset, as asserted in the following lemma.

\begin{lemma} \label{eq:lemma_ellipsoid}
    The set $\mathcal{C}(H,\dot{X},\Lambda)$ defined in \eqref{eq:ASE} is a matrix ellipsoid if $H \Lambda H^{\top} \succ 0$.
\end{lemma}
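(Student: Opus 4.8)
The plan is to expand the quadratic form in \eqref{eq:admissible_system} explicitly in terms of $\Delta$ and then match it against the defining inequality \eqref{eq:matrix_ellipsoid_1} of a matrix ellipsoid, reading off the blocks $E$, $F$, $G$ and checking the two nondegeneracy conditions $E \succ 0$ and $F^{\top}E^{-1}F - G \succ 0$. First I would compute $\Xi(H,\dot X,\Lambda)$ blockwise. Writing it as $\sum_{i=1}^n \lambda_i \xi(H_{[i]},\dot X_{[i]})$, or directly from the factored form, one gets a $(n_x+n_h)\times(n_x+n_h)$ symmetric matrix whose $(1,1)$ block is $\mathrm{tr}(\Lambda)\delta I - \dot X \Lambda \dot X^{\top}$, whose $(1,2)$ block is $\dot X \Lambda H^{\top}$, and whose $(2,2)$ block is $-H\Lambda H^{\top}$. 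Substituting $\begin{bmatrix} I \\ \Delta^{\top}\end{bmatrix}$ and expanding, \eqref{eq:admissible_system} becomes
\begin{equation*}
\Delta (-H\Lambda H^{\top}) \Delta^{\top} + \Delta (H\Lambda \dot X^{\top}) + (\dot X \Lambda H^{\top})\Delta^{\top} + \mathrm{tr}(\Lambda)\delta I - \dot X \Lambda \dot X^{\top} \succeq 0,
\end{equation*}
i.e., transposing into the shape of \eqref{eq:matrix_ellipsoid_1} with $Z = \Delta^{\top}$, we have $E = H\Lambda H^{\top}$, $F = -H\Lambda \dot X^{\top}$, and $G = \dot X \Lambda \dot X^{\top} - \mathrm{tr}(\Lambda)\delta I$.

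Next I would verify the two conditions. The condition $E \succ 0$ is exactly the hypothesis $H\Lambda H^{\top} \succ 0$. For the second condition, the natural route is to exploit the fact — established in the surrounding development, cf.\ \eqref{eq:matrix_sys} and the bound $w_s(d_i)^{\top}w_s(d_i) < \delta$ — that the true system $\Delta_r$ lies in the set with strict inequality in \eqref{eq:admissible_system}. Concretely, since $\dot X = \Delta_r H + W$ with each column of $W$ in $\mathcal W$, a direct substitution shows that $\begin{bmatrix} I \\ \Delta_r^{\top}\end{bmatrix}^{\top}\Xi(H,\dot X,\Lambda)\begin{bmatrix} I \\ \Delta_r^{\top}\end{bmatrix} = \mathrm{tr}(\Lambda)\delta I - W\Lambda W^{\top} \succeq \bigl(\mathrm{tr}(\Lambda)\delta - \mathrm{tr}(W\Lambda W^{\top})\bigr)I \succ 0$ whenever $\Lambda \neq 0$, because $\mathrm{tr}(W\Lambda W^{\top}) = \sum_i \lambda_i \|w(d_i)\|_2^2 < \delta \sum_i \lambda_i = \delta\,\mathrm{tr}(\Lambda)$. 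Evaluating the quadratic form of \eqref{eq:matrix_ellipsoid_1} at its center $Z_c = -E^{-1}F = E^{-1}H\Lambda\dot X^{\top}$ gives precisely $-(F^{\top}E^{-1}F - G)$ on the one hand, and on the other hand it must be $\preceq$ the value at $Z = \Delta_r^{\top}$ only after the appropriate completion-of-squares argument; more cleanly, $F^{\top}E^{-1}F - G = G_c$ is the "radius" matrix, and substituting $\Delta_r$ into $(Z-Z_c)^{\top}E(Z-Z_c) \preceq G_c$ shows $G_c \succeq (\Delta_r^{\top}-Z_c)^{\top}E(\Delta_r^{\top}-Z_c) \succeq 0$, with strictness coming from the strict slack above. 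Thus both conditions hold and $\mathcal{C}(H,\dot X,\Lambda)$ matches \eqref{eq:matrix_ellipsoid_1}.

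The main obstacle I anticipate is the degenerate case $\Lambda = 0$ (equivalently all $\lambda_i = 0$): then $\Xi = 0$, the hypothesis $H\Lambda H^{\top} \succ 0$ fails, and there is nothing to prove — so the hypothesis rules this out, and I should make sure the argument for $F^{\top}E^{-1}F - G \succ 0$ only invokes $\Lambda \neq 0$, which is implied by $H\Lambda H^{\top}\succ 0$. A secondary subtlety is bookkeeping the transpose: the S-lemma/data-driven literature writes the ellipsoid variable as $\Delta^{\top}$ (an $n_h \times n_x$ matrix), so in the notation of Definition \ref{def:matrix_ellipsoid} we have $p = n_h$, $q = n_x$, $Z = \Delta^{\top}$, and I must be careful that $E$ acts on the left ($p\times p$) and $G$ on the right ($q\times q$) — which matches $E = H\Lambda H^{\top} \in \mathbb{R}^{n_h\times n_h}$ and $G = \dot X\Lambda\dot X^{\top} - \mathrm{tr}(\Lambda)\delta I \in \mathbb{R}^{n_x\times n_x}$ as computed. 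Once the identifications are pinned down, the verification is routine.
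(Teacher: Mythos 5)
Your proposal is correct and follows essentially the same route as the paper: identify $E=H\Lambda H^{\top}$, $F=-H\Lambda\dot X^{\top}$, $G=\dot X\Lambda\dot X^{\top}-\mathrm{tr}(\Lambda)\delta I$, get $E\succ 0$ from the hypothesis, and obtain $F^{\top}E^{-1}F-G\succ 0$ by substituting $\dot X=\Delta_r H+W$ and using $\mathrm{tr}(\Lambda)\delta I-W\Lambda W^{\top}\succ 0$. The only (cosmetic) difference is that you organize the radius computation as a completion of squares around $Z_c$ evaluated at $\Delta_r^{\top}$, whereas the paper expands $F_e^{\top}E_e^{-1}F_e-G_e$ directly and uses the annihilation identity $H(\Lambda H^{\top}E_e^{-1}H\Lambda-\Lambda)=0$; both yield the same decomposition $G_c=W\Lambda H^{\top}E_e^{-1}H\Lambda W^{\top}+\mathrm{tr}(\Lambda)\delta I-W\Lambda W^{\top}$.
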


{For the underlying system \eqref{eq:sys_real}, Lemma \ref{eq:lemma_ellipsoid} provides a sufficient condition which ensures that the true system parameter $(\mathcal{A},\mathcal{B})$ lies in a bounded set.} Consequently, given the initial state and the input sequence, the future state of the system can be characterized as bounded. A similar result can be found in the context of behavioral systems approaches, namely, the renowned fundamental lemma \cite{willemsNotePersistencyExcitation2005}, which provides conditions such that the future trajectory can be uniquely determined by the initial trajectory and the input sequence. From a systems theory perspective, both approaches address the problem of identifiability conditions on the data and have been shown to be equivalent when noise-free data is considered \cite{markovskyBehavioralSystemsTheory2021}.

\begin{remark}
	In this study, we focus on systems with disturbances, hence we posit the assumption that $\delta > 0$. However, when we consider noise-free systems by setting $\delta = 0$ and {$\mathcal{W}=\{0\}$}, we have
	\begin{equation} \label{eq:noise_free_remark1}
		\begin{bmatrix}
			I \\ \Delta^{\top}
		\end{bmatrix}^{\top}\!
		\Xi(H,\dot{X},\Lambda)
		\begin{bmatrix}
			I \\ \Delta^{\top}
		\end{bmatrix}
		= 0
	\end{equation}
	holds for any admissible system $\Delta$.
	Therefore, we can conclude $\Delta E_e \Delta^{\top} + \Delta F_e + (\Delta F_e)^{\top} + {G_e} = 0$, which is equivalent to $(\Delta + F_e^{\top}E_e^{-1})E_e(\Delta + F_e^{\top}E_e^{-1})^{\top} = 0$. Recall that $E_e \succ 0$. This implies that \eqref{eq:noise_free_remark1} has only one solution $\Delta = -F_e^{\top}E_e^{-1}$, which is equivalent to $\Delta_r$ given that $\dot{X}=\Delta_r H$.
\end{remark}

For brevity, we define $\mathcal{C}(H,\dot{X},\Lambda)$ that satisfies $H \Lambda H^{\top} \succ 0$ as an \emph{admissible system ellipsoid} (ASE) of \eqref{eq:sys_real}.
\hspace{-1ex}
\subsection{Adaptive Tube-based Data-driven Predictive Control} \label{subsec:ATDPC}
In adaptive tube-based data-driven predictive control (ATDPC), the stage cost function is defined as
\begin{equation} \label{eq:stage_cost}
	l(x,u):=x^{\!\top}Qx+u^{\!\top}Ru,
\end{equation}
where $Q \!\in\! \mathbb{R}^{n_x \times n_x}$, $\!Q \!\succ\! 0$ and $R\!\in\! \mathbb{R}^{n_u \times n_u}$, $\!R \!\succ\! 0$. The terminal cost function is defined as $V_f(x) := x^{\top} P_{T} x$ where $P_T \in \mathbb{R}^{n_x \times n_x}$, $\!P_T\succ 0$. The corresponding terminal constraint set is $\mathcal{X}_T := \left\{ x \mid x^{\top} P_{T} x \leq L_T \right\}$ where $L_T > 0$. According to the classical framework of tube-based model predictive control, we design a nominal system as
\begin{equation} \label{nominal_sys}
	\bar{x}(k+1) = \bar{A}(k) \bar{x}(k) + \bar{B}(k) \bar{u}(k),
\end{equation}
where $\bar{\Delta}(k)\!:=\![ \bar{A}(k) \ \bar{B}(k)]$ is the center of $\mathcal{C}^{\star}_{k}$, $\bar{x}(k) \in \mathbb{R}^{n_x}$ and $\bar{u}(k) \in \mathbb{R}^{n_u}$ are the state and input of the nominal system. {Note that $\mathcal{C}^{\star}_{k},k \in \mathbb{Z}_{\geq 0}$ denotes a sequence of ASEs learned from closed-loop data samples at time instant $k$. Each set $\mathcal{C}^{\star}_{k}$ contains the true system parameters $(\mathcal{A}, \mathcal{B})$, and its parameters are to be designed in Section \ref{subsec:learn_Cs}.}

To measure the difference between the nominal and the true systems, we define the error between their states as $e(k):=x(k)-\bar{x}(k)$. For a matrix $K \in \mathbb{R}^{n_x\times n_u}$, the control input applied to \eqref{eq:sys_real} is designed as
\begin{equation} \label{input}
	u(k)=K e(k)+\bar{u}(k),
\end{equation}
where the term $Ke(k)$ is designed to reduce the error between the true system and the nominal system. The other term, $\bar{u}(k)$, is designed to control the nominal system to meet the control specifications, constraints, and uncertainties. The value of $\bar{u}(k)$ is obtained by solving:
\begin{equation}  \label{eq:TDPC_OPT}
	\setlength{\abovedisplayskip}{3pt}
	\setlength{\belowdisplayskip}{3pt}
	\begin{split}
		&\min_{\bar{u}_{\text{seq},k}} V_s(\bar{x}_{0|k},\bar{u}_{\text{seq},k}) \!=\!{\textstyle\sum\nolimits_{i=0}^{N-1} l(\bar{x}_{i|k},\bar{u}_{i|k}) + V_f(\bar{x}_{N|k})}\\[-2pt]
		& \ \text{s.t.}\\
		& \left\{
		\begin{array}{lc}
			\bar{x}_{i+1|k} = \bar{A}(k)\bar{x}_{i|k} + \bar{B}(k)\bar{u}_{i|k},\\
			\bar{x}_{i|k} \in \mathcal{X} \ominus E_{i|k},\\
			\bar{u}_{i|k} \in \mathcal{U} \ominus K E_{i|k},\\
			\bar{x}_{N|k} \in \mathcal{X}_T \ominus E_{N|k},\\
			\bar{x}_{0|k} = x(k), \ E_{0|k} = e(k), \ i \in \mathbb{Z}_{[0,N-1]},
		\end{array}
		\right.
	\end{split}
\end{equation}
where $\bar{x}_{0|k}:=\bar{x}(k),\bar{u}_{\text{seq},k}:=(\bar{u}_{0|k},\bar{u}_{1|k},\dots,\bar{u}_{N-1|k})$, $\mathcal{X}$ and $\mathcal{U}$ are convex sets that contain the corresponding origin points and restrict the states and inputs, respectively, and $E_{i|k}$ with $k\in \mathbb{Z}_{\geq 0}$ and $i\in \mathbb{Z}_{[0,N]}$ is the convex set of the reachable error $e(i+k)$ computed at instant $k$, i.e., $e(i+k) \in E_{i|k}$. Then, since the objective function is a convex function, the problem in \eqref{eq:TDPC_OPT} is a convex optimization problem\cite{farjadniaRobustDatadrivenPredictive2023}.\footnote{The optimization problem can be solved through standard solvers, e.g., the MATLAB function $\mathtt{fmincon}$.} We pick $\bar{u}(k)$ as the optimal $\bar{u}_{0|k}$. The design of $K$, $\mathcal{C}_k^{\star}$, $E_{i|k}$, and $L_T$ will be investigated in Section \ref{sec:Tube-based Data-driven Predictive Control}.

\section{Problem Formulation} \label{sec:Problem Formulation}

Due to the uncertainty introduced by $w(k)$, the real system model cannot be obtained accurately from the dataset. As an alternative, it is easy to observe that $(\mathcal{A},\mathcal{B}) \in \mathcal{C}(H,\dot{X},\Lambda)$. Therefore, the data-driven analysis of the true system turns to the set-valued analysis of $\mathcal{C}(H,\dot{X},\Lambda)$ and a smaller ASE implies less uncertainty about the true system. Considering the tradeoff between performance and robustness, it is easier to find a feasible controller to handle the shrunken uncertainty with potentially improved performance guarantee{\cite{2021_Trade-off}}. Since the baseline requirement is to effectively learn a controller to stabilize all systems consistent with the data, we aim to design learning strategies that minimize the volume of the ASE through actively generating and selecting data samples.

When the controller is not implemented, we can employ open-loop active learning strategies that design inputs based on an appropriate metric corresponding to the volume of the ASE. {It is noted that the metric relies on $\dot{X}$, which cannot be directly designed and is not fully accessible when the input is being designed. {Specifically, since $x_s(d_n + 1)$ is not available before applying $u_s(d_n)$ to the system and measuring the resulting state, the entire $\dot{X}$ cannot be used in the design of $u_s(d_n)$.} Therefore, to enable open-loop active learning strategies, the design of inputs needs to be performed using only $H$ and $\Lambda$. {However, the available information collected in $H$ is not fully exploited for input design. In particular, random inputs are usually applied in open-loop experiments to facilitate data-driven control \cite{depersis2020formulas,dorflerBridgingDirectIndirect2023}.} To mitigate the effect of the random inputs to the open-loop system, the random inputs are required to be selected from a predefined compact and convex set $\mathcal{U}_o$, which can be designed based on safety and stability considerations for a particular engineering application.}

However, when the controller is implemented, the system inputs are determined by the controller and cannot be arbitrarily designed. {Thus, we consider an ATDPC with a closed-loop active learning strategy in this work to reduce the size of the learned ASE by selecting informative closed-loop data samples.} With the statements above, we formally summarize the problems to be solved as follow.
\begin{problem} \label{prob:measure}
	Find a method to quantify the volume of ASEs {using historical state-input trajectory without the system model}, i.e., find a function $\hat{\mu}(H,\Lambda)$ that can characterize the volume of $\mathcal{C}(H,\dot{X},\Lambda)$.
\end{problem}
\begin{problem} \label{prob:active_learning}
	{Given $\hat{\mu}(\cdot)$,} find a method to actively design $u_s(i) \!\in\! \mathcal{U}_o$, $i\in\mathbb{Z}_{[0,t_0-1]}$ and $\lambda_i\geq0$, $i\in\mathbb{Z}_{[1,n]}$ such that $\hat{\mu}(H,\Lambda)$ is minimized.
\end{problem}
\begin{problem} \label{prob:adaptive_controller}
    Once $\hat{\mu}(H,\Lambda)$ is sufficiently small, design an ATDPC as in \eqref{input} and \eqref{eq:TDPC_OPT} that could stabilize the system \eqref{eq:sys_real} and effectively learn from closed-loop data using an active data {selection strategy}.
\end{problem}

\section{Measure and Overapproximation of ASEs} \label{sec:MoASE}
{This section focuses on the first problem as posed above in Section \ref{sec:Problem Formulation}. While Lemma \ref{Lem:MoME} provides a feasible solution to quantify the volume of ASEs, it requires knowledge of $\dot{X}$. To address this issue, the problem is divided into two parts based on the number of data samples. First, we analyze the volume of ASEs with a square data matrix. An effective method that could characterize the accurate volume of the ASE by $H$ and $\Lambda$ is presented. Furthermore, we extend this result to general ASEs and present an effective method to characterize the upper bound of the volume of ASEs.}
\subsection{Measure of ASEs with Square Data Matrix}
In this section, we focus on ASEs with $n=n_h$, and consider
\begin{equation} \label{eq:ASE_vol_hat}
	\hat{\mu}(H,\Lambda):=\beta(\mathrm{tr}(\Lambda)\delta)^{n_x n_h/2}\det(H\Lambda H^{\top})^{-n_x/2},
\end{equation}
where $\beta$ is a constant that depends only on $n_x$ and $n_u$, as defined in Lemma \ref{Lem:MoME}. We prove that under special cases, $\hat{\mu}(H,\Lambda)$ is a precise representation of the volume of $\mathcal{C}(H,\dot{X},\Lambda)$ in the following lemma.

\begin{lemma} \label{lem:vol_squareC}
    Let the order of the uncertain system be fixed, i.e., $n_x$ and $n_u$ are known and constant parameter. For an ASE $\mathcal{C}(H,\dot{X},\Lambda)$, if number of data samples satisfies $n=n_h$, then $\mu (\mathcal{C}(H,\dot{X},\Lambda)) = \hat{\mu}(H,\Lambda)$.
\end{lemma}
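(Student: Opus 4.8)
The plan is to specialize the matrix-ellipsoid measure formula of Lemma~\ref{Lem:MoME} to $\mathcal{C}(H,\dot{X},\Lambda)$ and show that, when the data matrix is square, the quantity $\det(F_e^{\top}E_e^{-1}F_e-G_e)$ collapses to $(\mathrm{tr}(\Lambda)\delta)^{n_x}$, so that the formula of Lemma~\ref{Lem:MoME} becomes exactly $\hat{\mu}(H,\Lambda)$ in \eqref{eq:ASE_vol_hat}.

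First I would recall from \eqref{eq:ase_ellip}--\eqref{eq:def_ellip_matrix} that, since $\mathcal{C}(H,\dot{X},\Lambda)$ is an ASE (hence $H\Lambda H^{\top}\succ 0$ by convention), it is a matrix ellipsoid of the form \eqref{eq:matrix_ellipsoid_1} with $Z=\Delta^{\top}\in\mathbb{R}^{n_h\times n_x}$, $E=E_e=H\Lambda H^{\top}$, $F=F_e=-H\Lambda\dot{X}^{\top}$, $G=G_e=\dot{X}\Lambda\dot{X}^{\top}-\mathrm{tr}(\Lambda)\delta I$; in the notation of Lemma~\ref{Lem:MoME} this means $p=n_h$ and $q=n_x$, and the constant there (depending only on $p$ and $q$) is precisely the $\beta$ of \eqref{eq:ASE_vol_hat}. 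Lemma~\ref{Lem:MoME} then gives
\[
\mu(\mathcal{C}(H,\dot{X},\Lambda))=\beta\,\bigl(\det(F_e^{\top}E_e^{-1}F_e-G_e)\bigr)^{n_h/2}\bigl(\det(E_e^{-1})\bigr)^{n_x/2},
\]
and the second factor is already $\det(H\Lambda H^{\top})^{-n_x/2}$, matching the corresponding factor of $\hat{\mu}$; it remains to evaluate $F_e^{\top}E_e^{-1}F_e-G_e$.

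Second, I would reuse the identity established inside the proof of Lemma~\ref{eq:lemma_ellipsoid}, namely $F_e^{\top}E_e^{-1}F_e-G_e=\dot{X}(\Lambda H^{\top}E_e^{-1}H\Lambda-\Lambda)\dot{X}^{\top}+\mathrm{tr}(\Lambda)\delta I$ as in \eqref{eq:elli_cal}. The decisive observation for the square case is that $n=n_h$ together with $H\Lambda H^{\top}\succ 0$ forces $H\Lambda^{1/2}$ to be invertible, hence $H$ is invertible and every $\lambda_i>0$; consequently $\Lambda H^{\top}E_e^{-1}H\Lambda=\Lambda H^{\top}(H\Lambda H^{\top})^{-1}H\Lambda=\Lambda$, so the first term vanishes and $F_e^{\top}E_e^{-1}F_e-G_e=\mathrm{tr}(\Lambda)\delta I_{n_x}$. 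Therefore $\det(F_e^{\top}E_e^{-1}F_e-G_e)=(\mathrm{tr}(\Lambda)\delta)^{n_x}$, and substituting back yields $\mu(\mathcal{C}(H,\dot{X},\Lambda))=\beta(\mathrm{tr}(\Lambda)\delta)^{n_x n_h/2}\det(H\Lambda H^{\top})^{-n_x/2}=\hat{\mu}(H,\Lambda)$.

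I do not expect a genuine obstacle here, since the statement is essentially a direct specialization of Lemma~\ref{Lem:MoME}. The only points needing care are bookkeeping: correctly matching which of $n_x,n_h$ plays the role of $p$ and which of $q$ (so that the exponents $n_xn_h/2$ and $-n_x/2$ land correctly), and justifying that $H\Lambda H^{\top}\succ 0$ with a square $H$ really makes $H$ invertible and $\Lambda\succ 0$, which legitimizes the cancellation $\Lambda H^{\top}(H\Lambda H^{\top})^{-1}H\Lambda=\Lambda$. I would state these explicitly but they are routine.
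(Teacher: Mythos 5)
Your proposal is correct and follows essentially the same route as the paper: both invoke Lemma~\ref{Lem:MoME} with $p=n_h$, $q=n_x$, use squareness of $H$ together with $H\Lambda H^{\top}\succ 0$ to get invertibility of $H$ (hence $\Lambda H^{\top}E_e^{-1}H\Lambda=\Lambda$, equivalently the paper's $HE_e^{-1}H^{\top}=\Lambda^{-1}$), and conclude $F_e^{\top}E_e^{-1}F_e-G_e=\mathrm{tr}(\Lambda)\delta I_{n_x}$. Your detour through the identity \eqref{eq:elli_cal} from the proof of Lemma~\ref{eq:lemma_ellipsoid} is just a slightly more explicit way of reaching the same cancellation.
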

\begin{proof}
    {Based on Lemma \ref{Lem:MoME}}, $\mu(\mathcal{C}(H,\dot{X},\Lambda))$ is equivalent to
    \begin{equation} \label{eq:squareC_cal}
            \beta
            \det(F_e^{\top}E_e^{-1}F_e - G_e)^{{n_h}/{2}} \det(E_e^{-1})^{n_x/2},
    \end{equation}
    where $E_e$, $F_e$, and $G_e$ are defined in \eqref{eq:def_ellip_matrix} in the appendix. Note that $H \Lambda H^{\!\top}\!\!\succ\! 0$ and $H$ is a square matrix. We can obtain that $H$ is a square matrix with full rank. Thus, we have {$H^{\top}\!E_e^{-1}H\!=\!\Lambda^{-1}$}, which implies $F_e^{\top}E_e^{-1}F_e - G_e\!=\!\mathrm{tr}(\Lambda)\delta I_{n_x}$. Then, the conclusion follows from the fact that $\det(\mathrm{tr}(\Lambda) \delta I_{n_x}) = (\mathrm{tr}(\Lambda) \delta)^{n_x}$.
\end{proof}

{
\begin{remark} \label{rem:H_full_rank}
	The condition of having a full rank data matrix $H$, i.e. $\operatorname{rank}(H)=n_h$, in Lemma \ref{lem:vol_squareC} is a standard assumption in data-driven control literature \cite{depersis2020formulas,2021_Trade-off,alanwarDataDrivenReachabilityAnalysis2023}. It ensures that the data encode all the information for the direct design of control laws \cite{depersis2020formulas} or for the characterization of a bounded set of admissible systems \cite{2021_Trade-off,alanwarDataDrivenReachabilityAnalysis2023}. This condition can be easily verified given the data set. For the noisy systems discussed in this work, the full rank condition holds with high probability due to the presence of random disturbances \cite{muller2023inputdesign}.
\end{remark}
}

Also note that $\det(H\Lambda H^{\top})=\det(\Lambda)\det(H)^2$ when $H$ is a square matrix. Then, we can reformulate $\hat{\mu}(H,\Lambda)$ as
\begin{equation}
	\hat{\mu}(H,\Lambda) \!=\!  \beta(\mathrm{tr}(\Lambda)\delta)^{{n_x n_h}\!/{2}}\det(\Lambda)^{-{n_x}\!/{2}}\left| \det(H) \right|^{-n_x}\!\!.
\end{equation}
{In addition, we continue to discuss a more general case where $H$ is not a square matrix in the following subsection.}
\subsection{Overapproximation of General ASEs}
{Lemma \ref{lem:vol_squareC} allows us to precisely evaluate the volume of $\mathcal{C}(H,\dot{X},\Lambda)$ by $\hat{\mu}(H,\Lambda)$ when $n = n_h$.  However, it is challenging to extend this result to general ASEs, as $HE_e^{-1}H^{\top}\!=\!\Lambda^{-1}$ no longer holds when $n\geq n_h$. Consequently, $F_e^{\top}E_e^{-1}F_e - G_e$ is not available since $W$ is unknown. Nevertheless, we show in the following lemma that $\hat{\mu}(H,\Lambda)$ serves as an upper bound on the volume of the ASE in this case.

\begin{lemma} \label{lem:vol_generalC}
    For any ASE $\mathcal{C}(H,\dot{X},\Lambda)$, the upper bound of the volume can be given as ${\mu}(\mathcal{C}(H,\dot{X},\Lambda)) \leq \hat{\mu}(H,\Lambda)$.
\end{lemma}
{\begin{proof}
    We define $\mathbf{Q}$ and $\mathbf{Q}_p$ as
    \begin{equation*}
    	\begin{split}
    		\mathbf{Q}:=& F_e^{\top}E_e^{-1}F_e - G_e \\
    		=& \dot{X} \Lambda H^{\top}\!(H \Lambda H^{\top}\!)^{-1} H \Lambda \dot{X}^{\top}\!-\dot{X} \Lambda \dot{X}^{\top}\!+\operatorname{tr}(\Lambda) \delta I, \\
    		\mathbf{Q}_p := & (H \Lambda^{1 / 2})^{\top}(H \Lambda^{1 / 2} \Lambda^{1 / 2} H^{\top})^{-1} H \Lambda^{1 / 2}.
    	\end{split}
    \end{equation*}
    Since $\mathbf{Q}_p$ is a projection matrix, i.e., $\mathbf{Q}_{p}^{2}=\mathbf{Q}_{p}$, we have
    \begin{equation} \label{eq:lem_Gvol_1}
    	\mathbf{Q}=\dot{X} \Lambda^{1 / 2} \mathbf{Q}_{p} \Lambda^{1 / 2} \dot{X}^{\top}\!-\dot{X} \Lambda^{1 / 2} \Lambda^{1 / 2} \dot{X}^{\top}\!+\operatorname{tr}(\Lambda) \delta I.
    \end{equation}
    {With \eqref{eq:matrix_sys}, we have $\mathbf{Q}_{p} \Lambda^{1 / 2} \dot{X}^{\top}-\Lambda^{1 / 2} \dot{X}^{\top}=(\mathbf{Q}_{p}-I) \Lambda^{1 / 2} W^{\top}$ as in the proof of \cite[Lemma 1]{bisoffi2022data}, which yields
    \begin{equation*}
    	\mathbf{Q}=W \Lambda^{1 / 2}(\mathbf{Q}_{p}-I) \Lambda^{1 / 2} W^{\top}\!+\operatorname{tr}(\Lambda) \delta I \preceq \operatorname{tr}(\Lambda) \delta I
    \end{equation*}
    using \eqref{eq:lem_Gvol_1}.} {Furthermore, we have $\mathbf{Q} \!\succ\! 0$ since $\mathcal{C}(H,\dot{X},\Lambda)$ is an ASE. Additionally, since $0 \!\prec\! \mathbf{Q} \!\preceq\! \operatorname{tr}(\Lambda)\delta I$ holds regardless of the selection of $W$, we obtain $\det(\mathbf{Q}) \leq \det(\operatorname{tr}(\Lambda) \delta I_{n_x}) = (\operatorname{tr}(\Lambda) \delta)^{n_x}$ by invoking the matrix determinant lemma.} Thus, $\mu(\mathcal{C}(H, \dot{X}, \Lambda))\! \!=\!\! \beta
    	\det(\mathbf{Q})^{{n_h}/{2}} \det(E_e)^{-n_x/2} \!\leq\! \hat{\mu}(H,\Lambda)$ holds according to Lemma \ref{Lem:MoME}.
\end{proof}}

\section{Open-loop Active Learning for Data-driven Modeling}
\label{sec:Active Online Learning for Modeling}
{In this section, we turn to the second problem in Section \ref{sec:Problem Formulation}. The design of {open-loop active learning} is divided into two different phases, i.e., the phase of formation and the phase of contraction. The phase of formation aims to generate $n_h$ data samples and form an ASE with small volume based on them. The phase of contraction aims to contract the volume of ASEs by properly designing the input.}

Lemmas \ref{lem:vol_squareC} and \ref{lem:vol_generalC} indicate that the volume of ASE and the determinant of the data matrix are closely related. However, the volume of ASE depends on the unknown system. Thus, it is challenging to design a globally optimal input design strategy, which is demonstrated through the following example.

{{\begin{example}
    Consider the condition where $n_x\!=\!n_u\!=\!1$, $\delta\!=\!1$, and the open-loop input constraint set $\mathcal{U}_o=\{u\in\mathbb{R} \ | \ |u| \leq 1\}$. Assume that the initial state $x(0)=1$, and two data samples need to be collected. We aim to actively design $u(i)\in \mathcal{U},\lambda_{i+1}>0,i\in\{0,1\}$ to minimize the volume of ASE. Suppose that the data matrices are collected as
    \begin{equation*}
    	H=
    	\begin{bmatrix}
    		1 & x(1) \\
    		u(0) & u(1)
    	\end{bmatrix}, \
    	\dot{X}=[x(1) \ x(2)], \ \Lambda=\text{diag}(\lambda_1,\lambda_2),
    \end{equation*}
    where $x(i+1)= \mathcal{A}x(i)+\mathcal{B}u(i), i\in\{0,1\}$. With Lemma \ref{lem:vol_squareC}, we have $\mu(\mathcal{C}(H,\dot{X},\Lambda)) = \beta(\lambda_1\!+\!\lambda_2)(\lambda_1\lambda_2)^{-\frac{1}{2}}\left| \det(H) \right|^{-1}.$
    By replacing the term $\det(H)$ with
    $$\det(H) = u(1) - u(0)x(1) = u(1)-\mathcal{A}u(0)-\mathcal{B}u(0)^2,$$
    it can be concluded that the volume is a function of $u(0)$, $u(1)$, $\lambda_1$, $\lambda_2$ and the unknown system matrices $(\mathcal{A},\mathcal{B})$. Note that the volume is minimized when $\left| \det(H) \right|$ is maximized. If {$\mathcal{A}=1$ and $\mathcal{B}=5$}, then $\left| \det(H) \right| = |u(1) - u(0) -5u(0)^2|$ is maximized when $\{u(0),u(1)\}=\{1,-1\}$. However, if {$\mathcal{A}=-1$ and $\mathcal{B}=-5$}, we have $\left| \det(H) \right|= |u(1) + u(0) + 5u(0)^2|$, which is maximized when $\{u(0),u(1)\}=\{1,1\}$.
\end{example}}}

Therefore, we will employ {open-loop active learning} based on a suboptimal greedy strategy to optimize the control input at each instant in the {open-loop stage}.

\subsection{Learning for An ASE with Small Volume} \label{subsec:formation}
In this subsection, we focus on the active input design problem to generate an ASE with {finite} and relatively small volume within $n_h$ data samples. {Note that the volume of the ASE can be computed by $\hat{\mu}(H,\Lambda)$ according to Lemma \ref{lem:vol_squareC}. We first discuss the design of $\lambda_i\geq0$, $i\in\mathbb{Z}_{[1,n_h]}$ to minimize $\hat{\mu}(H,\Lambda)$ in the following lemma.
\begin{lemma} \label{lem:lambda_square}
	Suppose that $n \!=\! n_h$, $\Lambda \!\in\! \mathcal{D}_{n_h}$, and $H \Lambda H^{\top} \!\succ\! 0$. Then, we have $\hat{\mu}(H,\Lambda) \geq \hat{\mu}(H,I_{n_h})$.
\end{lemma}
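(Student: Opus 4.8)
The plan is to reduce the claim to the classical arithmetic-mean--geometric-mean (AM--GM) inequality applied to the diagonal entries of $\Lambda$. First I would observe that, since $n=n_h$, the matrix $H$ is square, and the hypothesis $H\Lambda H^{\top}\succ 0$ forces both $H$ to be nonsingular and $\Lambda\succ 0$: if some $\lambda_i=0$, then $\mathrm{rank}(H\Lambda H^{\top})\le\mathrm{rank}(\Lambda)<n_h$, contradicting positive definiteness, and likewise a singular $H$ would make $H\Lambda H^{\top}$ rank-deficient. Hence every $\lambda_i>0$ and $|\det(H)|>0$, which legitimizes the determinant manipulations and divisions below.

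Next I would rewrite $\hat\mu(H,\Lambda)$ using multiplicativity of the determinant. Because $H$ is square, $\det(H\Lambda H^{\top})=\det(H)^{2}\det(\Lambda)=\det(H)^{2}\prod_{i=1}^{n_h}\lambda_i$, so that
\begin{equation*}
	\hat\mu(H,\Lambda)=\beta\,\delta^{n_xn_h/2}\,|\det(H)|^{-n_x}\,\bigl(\mathrm{tr}(\Lambda)\bigr)^{n_xn_h/2}\Bigl(\prod_{i=1}^{n_h}\lambda_i\Bigr)^{-n_x/2}.
\end{equation*}
Specializing to $\Lambda=I_{n_h}$ gives $\hat\mu(H,I_{n_h})=\beta\,\delta^{n_xn_h/2}\,|\det(H)|^{-n_x}\,n_h^{\,n_xn_h/2}$. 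Dividing the two expressions by their common positive factor $\beta\,\delta^{n_xn_h/2}|\det(H)|^{-n_x}$ and then raising to the power $2/n_x$, the desired inequality $\hat\mu(H,\Lambda)\ge\hat\mu(H,I_{n_h})$ becomes equivalent to $\bigl(\mathrm{tr}(\Lambda)/n_h\bigr)^{n_h}\ge\prod_{i=1}^{n_h}\lambda_i$.

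Finally, this last inequality is exactly AM--GM for the nonnegative numbers $\lambda_1,\dots,\lambda_{n_h}$: the left-hand base $\mathrm{tr}(\Lambda)/n_h=\frac{1}{n_h}\sum_{i=1}^{n_h}\lambda_i$ is their arithmetic mean and $\bigl(\prod_{i=1}^{n_h}\lambda_i\bigr)^{1/n_h}$ their geometric mean, so raising both sides to the $n_h$-th power yields the claim, with equality if and only if all $\lambda_i$ coincide (in particular at $\Lambda=I_{n_h}$, up to scaling). I do not expect a genuine obstacle here; the only points needing a little care are establishing that $\Lambda$ must be strictly positive definite under the stated hypothesis and bookkeeping the exponents when translating between $\hat\mu$ and the scalar inequality, after which the result is immediate from AM--GM.
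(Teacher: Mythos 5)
Your proposal is correct and follows essentially the same route as the paper's proof: both establish that $\Lambda \succ 0$ and $H$ is nonsingular from the hypothesis, factor $\det(H\Lambda H^{\top})=\det(H)^{2}\prod_i\lambda_i$, and reduce the claim to the AM--GM inequality (the paper's ``mean inequality chain'') for the diagonal entries $\lambda_1,\dots,\lambda_{n_h}$. The exponent bookkeeping in your reduction to $(\mathrm{tr}(\Lambda)/n_h)^{n_h}\ge\prod_i\lambda_i$ checks out.
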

\begin{proof}
	Since $H \Lambda H^{\top} \succ 0$, we can obtain that $H$ has full row rank. Besides, $H$ is a square matrix when $n=n_h$. This implies that $H$ also has full column rank and $\Lambda \succ 0$. Note that $\Lambda=\text{diag}(\lambda_1,\lambda_2,\dots,\lambda_{n_h})$. We can conclude that $\lambda_i > 0, \forall i \in \mathbb{Z}_{[1,n_h]}$. Next, we have
	\begin{equation} \label{eq:lambda_square_p1}
		\hat{\mu}(H,\Lambda)=\tfrac{\beta(\delta\sum_{i=1}^{n_h}\lambda_i)^{n_x n_h / 2}}{(\prod_{i=1}^{n_h}\lambda_i)^{n_x / 2} \left| \det(H) \right|^{n_x}}.
	\end{equation}
	By invoking the mean inequality chain, we have
	$
		\textstyle\sum\nolimits_{i=1}^{n_h}\lambda_i \geq n_h (\textstyle\prod\nolimits_{i=1}^{n_h}\lambda_i)^{1 / n_h}.
	$
	Then, combined with \eqref{eq:lambda_square_p1}, it can be concluded that
	$
		\hat{\mu}(H,\Lambda) \geq \beta(n_h \delta )^{n_x n_h / 2}\left| \det(H) \right|^{-n_x},
	$
	which is equivalent to $\hat{\mu}(H,I_{n_h})$.
\end{proof}

Lemma \ref{lem:lambda_square} implies that the optimal choice of $\Lambda$ that enables an ASE when $n=n_h$ is $I_{n_h}$. We are in a position to investigate the {active learning strategy which designs} input signals. Recall that $\hat{\mu}(H,\Lambda)$ depends on $\left| \det(H) \right|$. A useful lemma that can be used to compute the absolute value of the matrix determinant is introduced as follows.
\begin{lemma}\label{lem:det_cal}
    For a full rank matrix ${D} \in \mathbb{R}^{n \times n}$, the absolute value of the determinant of ${D}$ can be calculated by
    \begin{equation} \label{eq:det_cal}
            \!\Vert \hspace{-1pt}{D}_{[1]}\hspace{-1pt} \Vert_{\hspace{-1pt}2}\hspace{-3.6pt}\prod\limits_{i=1}^{n-1} \hspace{-3.6pt}\Vert \hspace{-1pt}{D}_{[i+\hspace{-.6pt}1]} \!-\! {D}_{[1:i]}({D}_{[1:i]}^{\top}{D}_{[1:i]})^{\!-1}\!{D}_{[1:i]}^{\!\top}{D}_{[i+\hspace{-.6pt}1]} \hspace{-1pt}\Vert_{\hspace{-1pt}2}.
    \end{equation}
\end{lemma}
\vspace{1ex}
\begin{proof}
	{Suppose that the QR factorization of $D$ is given by $D=QR$, where $Q$ is an orthogonal matrix and $R$ is an upper triangle matrix. Then, it follows that $\left| \det(D) \right| = \left| \det(Q) \right|\left| \det(R) \right|$. Given that $Q$ is an orthogonal matrix we have $\left| \det(Q) \right| = 1$. Besides, considering that $R$ is an upper triangle matrix, we also have $\left| \det(R) \right|=\prod_{i=1}^{n}|R_{[i,i]}|$.
	
	Note that the first term at the right-hand side is $\Vert \hspace{-1pt}{D}_{[1]}\hspace{-1pt} \Vert_{\hspace{-1pt}2}$, which is equal to $|R_{[1,1]}|$ as computed in the Gram-Schmidt factorization algorithm presented in \cite[Page 254]{golub2013matrix}. Each subsequent term in (21) can be presented as $$\Vert \hspace{-1pt}{D}_{[i+\hspace{-.6pt}1]} \!-\! {D}_{[1:i]}({D}_{[1:i]}^{\top}{D}_{[1:i]})^{\!-1}\!{D}_{[1:i]}^{\!\top}{D}_{[i+\hspace{-.6pt}1]} \hspace{-1pt}\Vert_{\hspace{-1pt}2},$$ which is equivalent to $$\Vert \hspace{-1pt}(I- {D}_{[1:i]}({D}_{[1:i]}^{\top}{D}_{[1:i]})^{\!-1}\!{D}_{[1:i]}^{\!\top}){D}_{[i+\hspace{-.6pt}1]} \hspace{-1pt}\Vert_{\hspace{-1pt}2}, i \in \mathbb{Z}_{[1,n-1]}.$$
	
	For simplicity, we define $E_i:=[I_i^{\top} \ 0_{n \times i}^{\top}]^{\top}$. Then, by replacing $D_{[1:i]}$ with $QRE_i$, we have
	\begin{equation*}
		{D}_{[1:i]}({D}_{[1:i]}^{\top}{D}_{[1:i]})^{\!-1}\!{D}_{[1:i]}^{\!\top} = QRE_i (E_i^{\top}R^{\top}RE_i)^{-1}E_i^{\top}R^{\top}Q^{\top}.
	\end{equation*}
	Since $R$ is an upper triangle matrix, $RE_i$ can be partitioned as $[R_1^{\top} \ 0^{\top}]^{\top}$ where $R_1$ is also upper triangular. Thus, by replacing $RE_i$ with $[R_1^{\top} \ 0^{\top}]^{\top}$, we have
	\begin{align*}
		{D}_{[1:i]}({D}_{[1:i]}^{\top}{D}_{[1:i]})^{\!-1}\!{D}_{[1:i]}^{\!\top}
		&= Q \begin{bmatrix}
			R_1 \\ 0
		\end{bmatrix}
		R_1^{-1}R_1^{-T}
		\begin{bmatrix}
			R_1 \\ 0
		\end{bmatrix}^{\top}\!\!Q^{\top} \\
		&= QE_iE_i^{\top}Q^{\top} \\
		&= Q_{[1:i]}Q_{[1:i]}^{\top}.
	\end{align*}
	According to the Gram-Schmidt factorization algorithm in \cite[Page 254]{golub2013matrix}, this implies
	\begin{equation*}
		|R_{[i+1,i+1]}| = \big\Vert(I - Q_{[1:i]}Q_{[1:i]}^{\top})D_{[i+1]}\big\Vert_2, i \in \mathbb{Z}_{[1,n-1]}.
	\end{equation*}
	Therefore, we obtain$$\Vert \hspace{-1pt}(I- {D}_{[1:i]}({D}_{[1:i]}^{\top}{D}_{[1:i]})^{\!-1}\!{D}_{[1:i]}^{\!\top}){D}_{[i+\hspace{-.6pt}1]} \hspace{-1pt}\Vert_{\hspace{-1pt}2} = |R_{[i+1,i+1]}|$$
	for all $i \in \mathbb{Z}_{[1,n-1]}$. Finally, since $|\det{D}| = |\det{Q}| |\det{R}| = \prod_{i=1}^{n}|R_{[i,i]}|$, we can conclude that (21) holds as claimed.}
\end{proof}
}

Note that $H$ is generated column by column, which coincides with the computation order in Lemma\hspace{2pt}\ref{lem:det_cal}. {Therefore, to obtain a square matrix $\hspace{-1pt}H\hspace{-1pt}$ with a large $\hspace{-1pt}|\hspace{-1pt}\det(H)\hspace{-.5pt}|$, we propose an active learning strategy by invoking Lemma \ref{lem:det_cal} and greedily maximizing the Euclidean norm of each entry of \eqref{eq:det_cal}. Specifically, $H$ corresponds to $D_{[1:i]}$ in \eqref{eq:det_cal}, which represents the data collected in the form of columns. Suppose that $x$ is the most recently collected state and $u$ is the input to be designed. Then, $[x^{\!\top} \ u^{\!\top}]^{\!\top}\!\!$ corresponds to $D_{[i+1]}$ in \eqref{eq:det_cal}, which represents the next column of data to be collected by $H$.} The input signal is designed as
\begin{equation}  \label{eq:formation_opt}
    u^{\star} \!\!=\! f_{\text{F}}(H,x)\!:=\!\mathop{\arg\max}\limits_{u \in \mathcal{U}_o}
    \underbrace{\Big{\Vert}\!\!\begin{bmatrix}
    		x \\[-.9ex] u
    	\end{bmatrix}
    	\!-\! H(H^{\!\top} \!H)^{-1}H^{\!\top}\!
    	\begin{bmatrix}
    		x \\[-.9ex] u
    	\end{bmatrix} \!\! \Big{\Vert}_2}_{{:=J_{\text{F}}(H,x,u)}},
\end{equation}
{where an input constraint $u\in \mathcal{U}_o$ is considered to reflect the restrictions on input signals in practical applications. Note that $\mathcal{U}_o$ is assumed to be a compact and convex set, and that $J_{\text{F}}(H,x,u)$ is equivalent to $\Vert
	(I-H(H^{\!\top} \!H)^{-1}H^{\!\top})\left[x^{\!\top} \  u^{\!\top}\right]^{\!\top}\!\!
	\Vert_2$
where $I\!-\!H(H^{\!\top} \!H)^{-1}H^{\!\top} \!\succeq\!0$ since $H(H^{\!\top} \!H)^{-1}H^{\!\top}$ is a projection matrix. The problem in \eqref{eq:formation_opt} is a concave quadratic programming problem, which can be solved using the algorithm in \cite[Section 3]{yeAffineScalingAlgorithms1992} when $\mathcal{U}_o$ is an ellipsoid, or the algorithm in \cite[Section IV]{zwartGlobalMaximizationConvex1974} when $\mathcal{U}_o$ is a polytope. For general cases, $\mathcal{U}_o$ can be approximated by a polytope, and the problem can be solved accordingly \cite{bronsteinApproximationConvexSets2008}.} If $J_{\text{F}}(H,x,f_{\text{F}}(H,x))\!=\!0$, {a random input from $\mathcal{U}_o$} will be applied without being collected, thus preventing data matrix $H$ from absorbing non-innovative data. {Since we aim to minimize $\hat{\mu}(H,\Lambda)$, we define non-innovative data as data samples that, when collected, would undesirably increase the value of $\hat{\mu}(H,\Lambda)$.} {For clarity, the pseudo-code of the phase of formation is summarized in Algorithm \ref{alg:formation}.}
\begin{algorithm}
	\caption{Phase of Formation}
	\label{alg:formation}
	\begin{algorithmic}[1]
		\STATE initialize $x, i\leftarrow0$
            \STATE $u \leftarrow \mathop{\arg\max}\limits_{u \in \mathcal{U}_o} \Vert u \Vert_2$ and update $h$
            \STATE apply $u$ to \eqref{eq:sys_real} and measure the response $x$
            \STATE collect data sample $H \leftarrow h$, $\dot{X} \leftarrow x$
		\STATE \textbf{while }$i \leq n_h -1$
		\STATE \hspace{1em} $u \leftarrow f_{\text{F}}(H,x)$ and update $h$
            \STATE \hspace{1em} \textbf{if} $J_{\text{F}}(H,x,u) > 0$
            \STATE \hspace{2em} apply $u$ to \eqref{eq:sys_real} and measure the response $x$
            \STATE \hspace{2em} collect data sample $H \leftarrow [H \ h]$, $ \dot{X} \leftarrow [\dot{X} \ x]$
            \STATE \hspace{2em} set $i \leftarrow i + 1$
            \STATE \hspace{1em} \textbf{else}
            \STATE \hspace{2em} $u \leftarrow \text{a random element in } \mathcal{U}_o$
            \STATE \hspace{2em} apply $u$ to \eqref{eq:sys_real} and measure the response $x$
            \STATE \hspace{1em} \textbf{end if}
		\STATE \textbf{end while}
		\STATE \textbf{return} $H, \dot{X}$
	\end{algorithmic}
\end{algorithm}

\begin{remark} \label{rem:algo_full_rank}
    Note that the additional learning criterion ensures that each entry in the right hand of \eqref{eq:det_cal} is none-zero. As such, Lemma \ref{lem:det_cal} asserts that $\det(H) \neq 0$. Then the proposed active learning strategy guarantees that a full rank data matrix $H$ will be obtained after the phase of formation. {In particular, when we consider a noise-free controllable system and replace \eqref{eq:formation_opt} with an optimization problem searching for a positive solution, the proposed strategy degenerates to the online experiment design method introduced in \cite{2022_Beyond}.} Namely, any $u$ that makes $J_{\text{F}}(H,x,u) \neq 0$ is a feasible solution for the experiment design problem in \cite{2022_Beyond} and \emph{vice versa}.
\end{remark}

\subsection{Learning for ASEs with Decreasing Bound of Volume}
Following the framework of the Algorithm \ref{alg:formation}, we will obtain a matrix ellipsoid which contains all admissible systems. However, the volume of the matrix ellipsoid can still be too large to obtain a stable controller. {In what follows, an active learning strategy is proposed to decrease the upper bound of the volume of ASEs with the help of Lemma \ref{lem:vol_generalC}.}
\begin{lemma}[\text{\hspace{-.01em}\cite[Lemma 1.1]{2007_MDL}}] \label{lem:MDL}
    Suppose that ${D}$ is an invertible matrix and $u,v$ are column vectors with suitable dimension. Then, we have $\det({D}+uv^{\!\top}\!) \!=\! ( 1+v^{\!\top}\!{D}^{-1}u) \det({D})$.
\end{lemma}

In the phase of contraction, the purpose is to design an active learning algorithm that maximizes $\det(HH^{\top})$ at each step. Similar to the phase of formation, an input $u$ needs to be designed given the data matrix $H$ formed by historical data and the current state $x$. {Note that after the phase of formation, $\Lambda$ is designed as $I_{n_h}$ according to Lemma \ref{lem:lambda_square}, and $H$ is a full rank square matrix as stated in Remark \ref{rem:algo_full_rank}. Therefore, we have $H \Lambda H^{\top} \succ 0$ after the phase of formation.} Then, a necessary and sufficient condition for the contraction of the upper bound of the volume of ASEs is presented in the next theorem.
{\begin{theorem} \label{thm:ugen_contraction}
    Suppose that $\mathcal{C}(H,\dot{X},\Lambda)$ is an ASE where $H \in \mathbb{R}^{n_h \times n}, \dot{X} \in \mathbb{R}^{n_x \times n}$. Denote the current state by $x\in \mathbb{R}^{n_x}$ and the designed input by $u \in \mathbb{R}^{n_u}$. Then there exist $\lambda>0$ and $u\in \mathcal{U}_o$ such that $\hat{\mu}(H,\Lambda)>\hat{\mu}(H_+,\Lambda_+)$ if and only if there exists $u \in \mathcal{U}_o$ such that
    \begin{equation} \label{eq:contract_criteria}
    h^{\top}\!( H \Lambda H^{\top})^{-1}	h > {n_h} / {\mathrm{tr}(\Lambda)},
    \end{equation}
    where $h = [x^{\top} \ u^{\top}]^{\top}$, $H_{+} := [H \ h]$, and $\Lambda_{+} := \text{diag}(\Lambda,\lambda)$.
\end{theorem}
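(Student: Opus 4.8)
The plan is to turn the ellipsoid-volume comparison into an elementary scalar inequality in $\lambda$. First I would substitute the closed form \eqref{eq:ASE_vol_hat} into both sides of $\hat\mu(H,\Lambda) > \hat\mu(H_{+},\Lambda_{+})$. Using $\mathrm{tr}(\Lambda_{+}) = \mathrm{tr}(\Lambda) + \lambda$ and $H_{+}\Lambda_{+}H_{+}^{\top} = H\Lambda H^{\top} + \lambda h h^{\top}$, and then performing only order-preserving/reversing rearrangements (cancelling the common positive factor $\beta\,\delta^{n_x n_h/2}$, applying the increasing map $t\mapsto t^{2/n_x}$, and taking reciprocals of positive quantities), the inequality $\hat\mu(H,\Lambda) > \hat\mu(H_{+},\Lambda_{+})$ is seen to be equivalent to $\big(1 + \lambda/\mathrm{tr}(\Lambda)\big)^{n_h} < \det(H\Lambda H^{\top} + \lambda h h^{\top})/\det(H\Lambda H^{\top})$. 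Along the way I would record that $\hat\mu(H_{+},\Lambda_{+})$ is well defined: since $\mathcal{C}(H,\dot X,\Lambda)$ is an ASE we have $H\Lambda H^{\top} \succ 0$ (Lemma \ref{eq:lemma_ellipsoid}), and adding $\lambda h h^{\top} \succeq 0$ keeps it positive definite, so $H_{+}\Lambda_{+}H_{+}^{\top} \succ 0$.

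Next I would apply the matrix determinant lemma (Lemma \ref{lem:MDL}) to the rank-one update $H\Lambda H^{\top} + \lambda h h^{\top}$, which gives $\det(H\Lambda H^{\top} + \lambda h h^{\top}) = \big(1 + \lambda\, h^{\top}(H\Lambda H^{\top})^{-1} h\big)\det(H\Lambda H^{\top})$. Writing $a := h^{\top}(H\Lambda H^{\top})^{-1}h \ge 0$ and $b := 1/\mathrm{tr}(\Lambda) > 0$, the claim reduces to showing: \emph{there exists $\lambda > 0$ with $1 + \lambda a > (1 + \lambda b)^{n_h}$ if and only if $a > n_h b$} --- and the right-hand condition is exactly \eqref{eq:contract_criteria}. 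For necessity I would invoke Bernoulli's inequality, $(1 + \lambda b)^{n_h} \ge 1 + n_h \lambda b$ (valid since $n_h = n_x + n_u \ge 1$ and $\lambda b > 0$): if $1 + \lambda a > (1 + \lambda b)^{n_h}$ for some $\lambda > 0$, then $1 + \lambda a > 1 + n_h \lambda b$, hence $a > n_h b$. For sufficiency, set $g(\lambda) := 1 + \lambda a - (1 + \lambda b)^{n_h}$; then $g(0) = 0$ and $g'(0) = a - n_h b > 0$, so $g(\lambda) > 0$ for all sufficiently small $\lambda > 0$, which provides the desired $\lambda$. Since the input $u$ is never altered in this chain, the quantifier ``$\exists\, u \in \mathcal{U}_o$'' transfers verbatim between the two statements.

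The argument is essentially mechanical once the reduction is set up; the only places needing attention are keeping track of the inequality direction through the reciprocals and $n_x/2$-th roots in the first step, and selecting the right elementary estimate for each implication (Bernoulli for necessity, a first-order expansion at $\lambda = 0$ for sufficiency). I do not foresee a genuine obstacle; if one prefers to avoid derivatives in the sufficiency part, an explicit threshold $\bar\lambda > 0$ below which $1 + \lambda a > (1 + \lambda b)^{n_h}$ holds can be produced directly from the strict inequality $a > n_h b$ together with the continuity of $\lambda \mapsto (1+\lambda b)^{n_h}$ at the origin.
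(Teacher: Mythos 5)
Your proposal is correct. The reduction is the same as the paper's: both arguments note $H_{+}\Lambda_{+}H_{+}^{\top}\succ 0$, apply the matrix determinant lemma (Lemma \ref{lem:MDL}) to the rank-one update $H\Lambda H^{\top}+\lambda hh^{\top}$, and boil the volume comparison down to the scalar inequality $1+\lambda a>(1+\lambda b)^{n_h}$ with $a=h^{\top}(H\Lambda H^{\top})^{-1}h$ and $b=1/\mathrm{tr}(\Lambda)$. Where you diverge is in how that scalar inequality is dispatched: the paper differentiates $f_q=f_p^{2/n_x}$ with respect to $\lambda$, shows the derivative is negative precisely on an interval $(0,\lambda^{\star}]$ with $\lambda^{\star}=(\mathrm{tr}(\Lambda)m-n_h)/((n_h-1)m)$, and runs a contradiction argument for the converse; you instead use Bernoulli's inequality for necessity and a first-order expansion at $\lambda=0$ for sufficiency. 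Your route is more elementary (and incidentally avoids the degenerate denominator $(n_h-1)m$ when $n_h=1$, a case the paper's formula for $\lambda^{\star}$ cannot handle, though $n_h=n_x+n_u\geq 2$ in practice). What the paper's heavier derivative computation buys is the explicit threshold $\lambda^{\star}$, which is not merely a proof device: it is reused in \eqref{eq:opt_lambda} and Algorithm \ref{alg:contraction} as the weight assigned to the newly collected sample, and the surrounding discussion claims $f_p$ is minimized at $\lambda=\lambda^{\star}$. Your argument establishes the theorem as stated but would need to be supplemented by the derivative analysis if one also wants that constructive choice of $\lambda$.
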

\begin{proof}
    {Since $\mathcal{C}(H,\dot{X},\Lambda)$ is an ASE, we have $H\Lambda H^{\top} \succ 0$. Also note that $H_+\Lambda_+ H^{\top}_+ = H\Lambda H^{\top} + \lambda h h^{\top}$ and $\lambda>0$. We can obtain $H_+\Lambda_+ H^{\top}_+ \succ 0$.} {We can also conclude that $\beta$ in $\hat{\mu}(H,\Lambda)$ and $\hat{\mu}(H_+,\Lambda_+)$ are the same, since $\beta$ depends only on the dimensions of the elements in the ASE according to Lemma \ref{Lem:MoME}, and the dimensions of the elements in the ASEs corresponding to $\hat{\mu}(H,\Lambda)$ and $\hat{\mu}(H_+,\Lambda_+)$ are identical.} Then, given that $\mathcal{C}(H,\dot{X},\Lambda)$ is an ASE, $\hat{\mu}(H,\Lambda)>\hat{\mu}(H_+,\Lambda_+)$ is equivalent to
    \begin{equation} \label{eq:contract_compare1}
    	f_p:=\frac{\mathrm{tr}(\Lambda_+)^{n_x n_h / 2}\det(H\Lambda H^{\top})^{n_x /2}}{\mathrm{tr}(\Lambda)^{n_x n_h / 2}\det(H_+\Lambda_+ H_+^{\top})^{n_x /2}}<1
    \end{equation}
    by expanding the terms $\hat{\mu}(H,\Lambda)$ and $\hat{\mu}(H_+,\Lambda_+)$ according to the definition \eqref{eq:ASE_vol_hat}. Recall that $H\Lambda H^{\top} \succ 0$. We have
     \begin{equation} \label{contract_det_ite}
    	\begin{aligned}
    		\det(H_{+}\Lambda_+ H_{+}^{\top}) = ( 1 + \lambda m) \det( H\Lambda H^{\top})
    	\end{aligned}
    \end{equation}
    by invoking Lemma \ref{lem:MDL} and denoting $m:=h^{\top}( H \Lambda H^{\top})^{-1} h$. In addition, we also have $\mathrm{tr}(\Lambda_+)=\mathrm{tr}(\Lambda) + \lambda$. Combined with \eqref{contract_det_ite}, $f_p$ can be reformulated as
    \begin{equation*}
    	f_p=\frac{(\mathrm{tr}(\Lambda)+\lambda)^{n_x n_h / 2}}{\mathrm{tr}(\Lambda)^{n_x n_h / 2}( 1 + \lambda m)^{n_x / 2}}.
    \end{equation*}
    Then, we consider $f_q=f_p^{2/n_x}$. By differentiating $f_q$ with respect to $\lambda$, we obtain
    \begin{equation} \label{eq:dev_lambda}
    	\frac{\partial f_q}{\partial \lambda}=\frac{(\mathrm{tr}(\Lambda)\!+\!\lambda)^{n_h-1}(n_h -\mathrm{tr}(\Lambda)m+(n_h\!-\!1)m\lambda)}{\mathrm{tr}(\Lambda)^{n_h}(1+m\lambda)^2}.
    \end{equation}

    For the ``if'' part of the statement, there exists $u \in \mathcal{U}_o$ such that $n_h -\mathrm{tr}(\Lambda)m < 0$. Also note that $\lambda > 0$ and $n_h-1 > 0$. We can obtain $\frac{\partial f_q}{\partial \lambda} < 0$ when {$\lambda \in (0,\lambda^{\star})$}, where
    \begin{equation} \label{eq:opt_lambda}
    	\lambda^{\star}:=(\mathrm{tr}(\Lambda)m - n_h)/((n_h-1)m),
    \end{equation}
    based on \eqref{eq:dev_lambda}. Moreover, it can be verified that $f_q = 1$ when $\lambda = 0$. Note that $f_q < 1$ is equivalent to $f_p < 1$. Therefore, we can conclude that $f_p < 1$ for any $\lambda \in (0,\lambda^{\star}]$.

    The ``only if'' part of the statement can be demonstrated by contradiction. Suppose that there exist $\lambda\!>\!0$ and $u\!\in\!\mathcal{U}_o$ such that $f_p \!<\!1$ when $m \!\leq\! {n_h}/{\mathrm{tr}(\Lambda)},\forall u\in\mathcal{U}_o$. Then, we have $\frac{\partial f_q}{\partial \lambda} > 0,\forall \lambda>0, u\in\mathcal{U}_o$, which means $f_q$ is monotonically increasing for any $u\in\mathcal{U}_o$ with respect to $\lambda$ when $\lambda>0$. Note that $f_q=1$ when $\lambda = 0$. This result contradicts to the hypothesis that there exist $\lambda>0$ and $u\in\mathcal{U}_o$ such that $f_p <1$, which means there exists $u\in\mathcal{U}_o$ such that $m > {n_h}/{\mathrm{tr}(\Lambda)}$.
\end{proof}}

Recall that after the phase of formation, we have already obtained an ASE. Then, the idea of minimizing $\hat{\mu}(H_+,\Lambda_+)$ is equivalent to designing $u$ and $\lambda$ such that $f_p$ is minimized. It can be observed that $f_p$ is monotonically decreasing with respect to $m$. Besides, the proof of Theorem \ref{thm:ugen_contraction} asserts that $f_p$ is minimized by letting $\lambda=\lambda^{\star}$ when \eqref{eq:contract_criteria} is satisfied. Therefore, given current data matrices $H$, $\Lambda$, and state $x$, the active input design strategy is formulated as
\begin{equation}  \label{eq:contract_opt}
    u^{\star} = f_{\text{C}}(H,\Lambda,x):=\mathop{\arg\max}\limits_{u \in \mathcal{U}_o}
    \underbrace{\begin{bmatrix}
    		x \\[-.9ex] u
    	\end{bmatrix}^{\!\raisebox{-0.3ex}{$\scriptstyle\top$}}\!
    	(H \Lambda H^{\top})^{-1}
    	\begin{bmatrix}
    		x \\[-.9ex] u
    \end{bmatrix}}_{:=J_{\text{C}}(H,\Lambda,x,u)}
\end{equation}
to satisfy criterion \eqref{eq:contract_criteria} and minimize $f_p$. {Note that $H \Lambda H^{\!\top} \!\succ\! 0$. The problem in \eqref{eq:contract_opt} is a concave quadratic programming problem, which can be solved by referring to \cite{bronsteinApproximationConvexSets2008,yeAffineScalingAlgorithms1992,zwartGlobalMaximizationConvex1974}.} If \eqref{eq:contract_criteria} is not satisfied, a random input {$u \in \mathcal{U}_o$ will be applied, and the corresponding data sample will not be collected.} For clarity, the pseudo-code of the phase of formation is summarized in Algorithm \ref{alg:contraction}. Note that $T_L$ represents the expected learning time in the phase of contraction, and $\mathcal{C}(H^{F}, \dot{X}^{F},\Lambda^F)$ is the ASE obtained in the phase of formation.

{\begin{remark} \label{rem:HLambdaH}
	Note that the inequality \eqref{eq:contract_criteria} serves as a necessary and sufficient condition for ensuring the decrease of the upper bound. The inequality can be slightly modified to meet specific requirements on the rate of convergence. In particular, suppose that $\varepsilon\in(0,1)$, and $a^{\star}$ is the positive solution for the equation
	\begin{equation*}
		a^{1/n_h} + a^{1/n_h-1} = n_h(n_h-1)^{1-1/n_h}\varepsilon^{-2/(n_xn_h)}.
	\end{equation*}
	By replacing \eqref{eq:contract_criteria} with $m>(a^{\star}+1)/\mathrm{tr}(\Lambda)$, we have
	\begin{align*}
		&\mathrm{tr}(\Lambda)m(\mathrm{tr}(\Lambda)m-1)^{1/n_h-1} \\
		& = (\mathrm{tr}(\Lambda)m-1)^{1/n_h} + (\mathrm{tr}(\Lambda)m-1)^{1/n_h-1} \\
		& > n_h(n_h-1)^{1/n_h-1}\varepsilon^{-2/(n_xn_h)}.
	\end{align*}
	Then, we can obtain
	\begin{align*}
		\varepsilon^{2/(n_xn_h)} &> \frac{n_h(n_h-1)^{1/n_h-1}}{\mathrm{tr}(\Lambda)m(\mathrm{tr}(\Lambda)m-1)^{1/n_h-1}} \\
		& =\frac{\dfrac{n_h(\mathrm{tr}(\Lambda)m-1)}{m(n_h-1)}}{\mathrm{tr}(\Lambda)\Big(\dfrac{\mathrm{tr}(\Lambda)m-1}{n_h-1}\Big)^{1/n_h}} \\
		&  = \dfrac{\mathrm{tr}(\Lambda)+\lambda^{\star}}{\mathrm{tr}(\Lambda)( 1 + \lambda^{\star} m)^{1 / n_h}},
	\end{align*}
	which implies $f_p < \varepsilon$ according to Algorithm 2. Recall that $n_h$ data samples are collected in the phase of formation. We can conclude that $\hat{\mu}(H,\Lambda) < \varepsilon^{n-n_h} \hat{\mu}(H^F,\Lambda^F)$ in the phase of formation, where $n$ denotes the total number of collected data samples.
\end{remark}}

\color{black}\begin{algorithm}
	\caption{Phase of Contraction}
	\label{alg:contraction}
	\begin{algorithmic}[1]
		\STATE initialize $T_L, x, H\!\leftarrow\!H^{F}, \dot{X} \!\leftarrow \!\dot{X}^{F},\Lambda \!\leftarrow\! \Lambda^{F}, i \!\leftarrow\!0$
        \STATE \textbf{while} $i<T_L$
            \STATE \hspace{1em} $u \leftarrow f_{\text{C}}(H,\Lambda,x)$ and update $h$
            \STATE \hspace{1em} $m \leftarrow J_{\text{C}}(H,\Lambda,x,u)$
            \STATE \hspace{1em} \textbf{if} \eqref{eq:contract_criteria} holds
            \STATE \hspace{2em} compute $\lambda^{\star}$ by \eqref{eq:opt_lambda} and set $\Lambda \leftarrow \text{diag}(\Lambda,\lambda^{\star})$
            \STATE \hspace{2em} apply $u$ to \eqref{eq:sys_real} and measure the response $x$
            \STATE \hspace{2em} collect data sample $H \leftarrow [ H \ h]$, $ \dot{X} \leftarrow [\dot{X} \ x ]$
            \STATE \hspace{1em} \textbf{else}
            \STATE \hspace{2em} $u \leftarrow \text{a random element in } \mathcal{U}_o$
            \STATE \hspace{2em} apply $u$ to \eqref{eq:sys_real} and measure the response $x$
            \STATE \hspace{1em} \textbf{end if}
            \STATE \hspace{1em} set $i \leftarrow i +1$
            \STATE \textbf{end while}
            \STATE \textbf{return} $H, \dot{X}, \Lambda$
	\end{algorithmic}
\end{algorithm}

\section{Adaptive Data-driven Predictive Control}
\label{sec:Tube-based Data-driven Predictive Control}
{This section addresses Problem \ref{prob:adaptive_controller}, which is the design of the ATDPC. While many robust control frameworks, such as min-max predictive control, can handle uncertainties, we employ the adaptive tube-based predictive controller, which characterizes the uncertainty as a tube and updates it online and provides a natural solution for utilizing the learned ASE.

The schematic diagram of the ATDPC is presented in Fig. \ref{fig:schematic_ATDPC}. As illustrated, although the tube-based optimization problem is solved at each instant, the update of the dataset and the learning algorithm are only performed in an event-triggered fashion when an informative data sample is identified by the learning criterion.
}
\begin{figure}[tb]
	\setlength\belowcaptionskip{-3ex}
	\centerline{\includegraphics[width=\columnwidth]{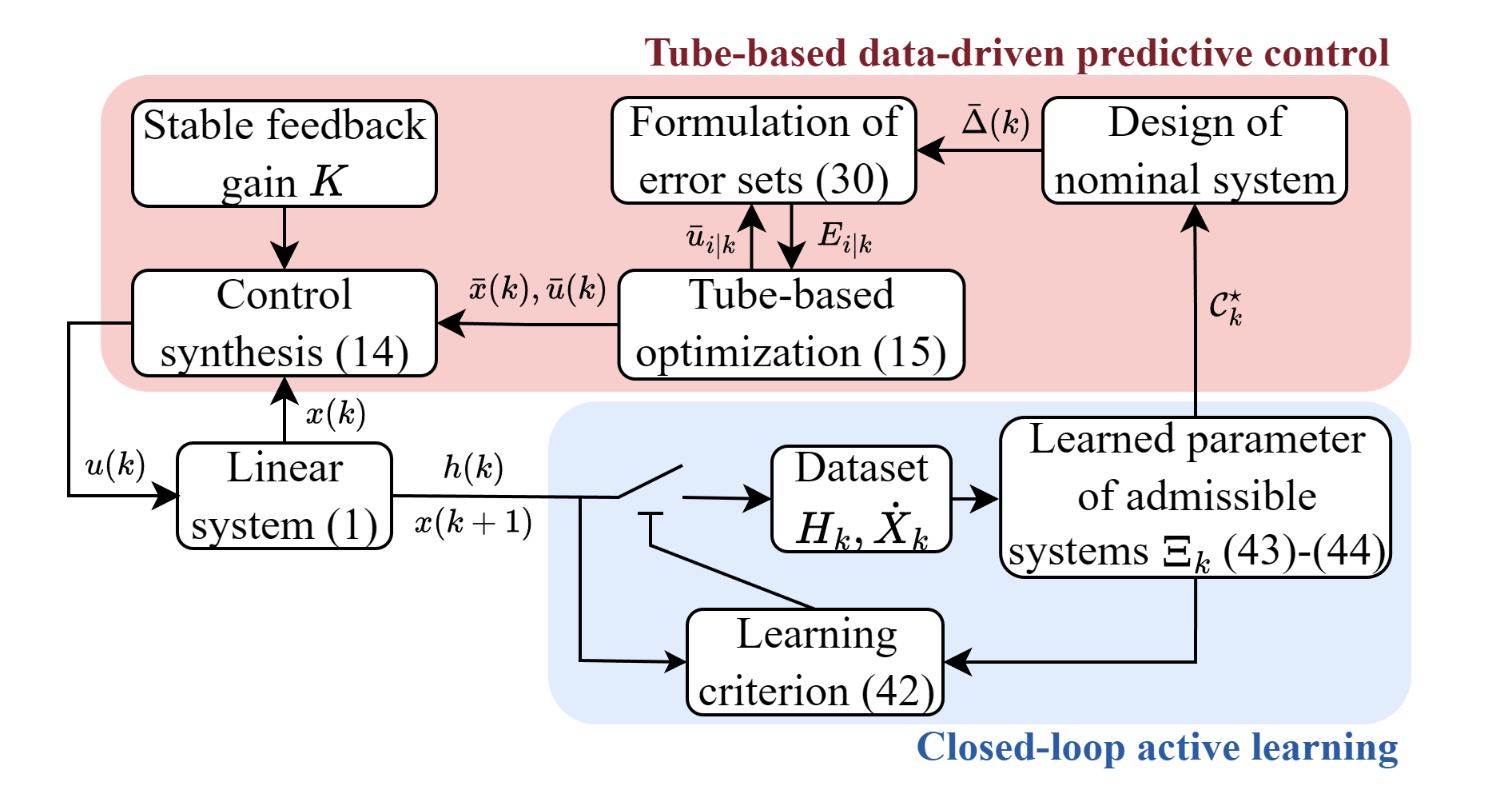}}
	\caption{Schematic diagram of the proposed ATDPC schemes.}
	\label{fig:schematic_ATDPC}
\end{figure}
\subsection{Design of Error Set and Terminal Constraint Set}
In this subsection, we introduce the design of the error set $E_{i|k}$ and analyze its boundedness and contraction properties. Then, we present the design of the terminal constraint set $\mathcal{X}_T$.

Based on \eqref{eq:sys_real}, \eqref{nominal_sys}, and \eqref{input}, we can derive
\begin{equation} \label{e_update}
	e(k+1) = \Delta_r
	\begin{bmatrix}
		I \\ K
	\end{bmatrix}
	e(k) + (\Delta_r - \bar{\Delta}(k))
	\begin{bmatrix}
		\bar{x}(k) \\ \bar{u}(k)
	\end{bmatrix} + w(k).
\end{equation}
The set $E_{i|k}$ that characterize all reachable errors can be analyzed given that $\Delta_r \in \mathcal{C}^{\star}_{k}$ and $w(t)\! \in \!\mathcal{W}$. Specifically, given $E_{0|k} = \{e(k)\}$, the set $E_{i|k}, i\geq 1$ can be computed by
{\begin{equation}
	\label{eq:e_outside_TCS}
	E_{i|k} :=  \left\{\Delta \begin{bmatrix}
		I \\ K
	\end{bmatrix} e + v \ \!\Big|\! \ \Delta \! \in \! \mathcal{C}_k^{\star}, e \! \in \! E_{i-1|k}, v \! \in \! V_k\right\}
\end{equation}
where
\begin{equation*}
	V_k\!:=\!\left\{(\Delta \!-\!\bar{\Delta}(k))\!\begin{bmatrix}
		\bar{x} \\ \bar{u}
	\end{bmatrix} \!+ \!w \ \!\Big|\! \ \Delta \! \in \! \mathcal{C}_k^{\star}, \bar{x} \! \in \! \mathcal{X}, \bar{u} \! \in \! \mathcal{U}, w \! \in \! \mathcal{W} \right\}.
\end{equation*}
}

Note that the solution of \eqref{eq:TDPC_OPT} relates closely to the error set and the performance of the ATDPC can only be guaranteed if the error is bounded. Before we further investigate the error set, a general property relating to the inclusion relationship between two matrix ellipsoids is first introduced.
\begin{lemma}
	\label{lem:E_inclusion}
	Let $\mathcal{E}$ and $\mathcal{E}'$ be matrix ellipsoids. Then the relationship $\left\{ Z-Z'_c \ | \ Z \in \mathcal{E}'\right\} \subseteq \left\{ Z-Z_c \ | \ Z \in \mathcal{E}\right\}$ holds if $\mathcal{E}' \subseteq \mathcal{E}$, where $Z_c$ and $Z'_c$ are the centers of $\mathcal{E}$ and $\mathcal{E}'$, respectively.
\end{lemma}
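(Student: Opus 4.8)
The plan is to work directly with the centered form \eqref{eq:matrix_ellipsoid_2} of matrix ellipsoids and translate the inclusion $\mathcal{E}' \subseteq \mathcal{E}$ into the claimed inclusion of the translated sets. First I would write $\mathcal{E} = \{Z \mid (Z-Z_c)^{\top} E (Z-Z_c) \preceq G_c\}$ and $\mathcal{E}' = \{Z \mid (Z-Z'_c)^{\top} E' (Z-Z'_c) \preceq G'_c\}$ using Definition \ref{def:matrix_ellipsoid}, so that the translated sets are $\{Z - Z_c \mid Z \in \mathcal{E}\} = \{Y \mid Y^{\top} E Y \preceq G_c\}$ and $\{Z - Z'_c \mid Z \in \mathcal{E}'\} = \{Y \mid Y^{\top} E' Y \preceq G'_c\}$; both are simply the ellipsoids re-centered at the origin. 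The goal thus reduces to showing that the origin-centered version of $\mathcal{E}'$ is contained in the origin-centered version of $\mathcal{E}$.

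The key observation is that each of these translated sets can be recovered from the original ellipsoid purely in terms of the ellipsoid itself, without reference to the center, via the identity $\{Y \mid Y^{\top} E Y \preceq G_c\} = \tfrac{1}{2}(\mathcal{E} \ominus \mathcal{E})$, or more elementarily, $Y$ lies in the origin-centered $\mathcal{E}$ if and only if $Z_c + Y \in \mathcal{E}$ and $Z_c - Y \in \mathcal{E}$ (by symmetry of the ellipsoid about its center). Concretely, I would prove: $Y$ belongs to the translated $\mathcal{E}'$ iff both $Z'_c + Y$ and $Z'_c - Y$ belong to $\mathcal{E}'$. The forward direction is immediate from $(\pm Y)^{\top} E' (\pm Y) = Y^{\top} E' Y \preceq G'_c$; the reverse direction follows by adding the two membership inequalities for $Z'_c \pm Y$ and using that $Z'_c$ is the center. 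Then, given $Y$ in the translated $\mathcal{E}'$, both $Z'_c \pm Y \in \mathcal{E}' \subseteq \mathcal{E}$, and applying the same characterization to $\mathcal{E}$ yields $Y$ in the translated $\mathcal{E}$. This completes the argument.

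The main obstacle is the reverse implication in the center-symmetry characterization — i.e., showing that if $Z_c + Y$ and $Z_c - Y$ are both in the ellipsoid then $Y$ lies in the origin-centered ellipsoid. For a scalar ellipsoid this is convexity plus symmetry; for matrix ellipsoids one must verify it at the level of the defining matrix inequality $(Z - Z_c)^{\top} E (Z-Z_c) \preceq G_c$. Writing $P_\pm := (\pm Y)^{\top} E (\pm Y) = Y^{\top} E Y$, membership of $Z_c \pm Y$ gives $Y^{\top} E Y \preceq G_c$ directly, so in fact the matrix case is no harder than taking one of the two inequalities — the characterization holds because $(Z_c + Y - Z_c)^{\top} E (Z_c + Y - Z_c) = Y^{\top} E Y$ exactly. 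Thus the only genuinely load-bearing facts are (i) the centered representation \eqref{eq:matrix_ellipsoid_2} and (ii) that $Z_c$ is indeed the center of $\mathcal{E}$ (and $Z'_c$ of $\mathcal{E}'$), both of which are supplied by Definition \ref{def:matrix_ellipsoid}. I would keep the write-up short: set up the centered forms, state the "$Y$ in translate iff $Z_c \pm Y \in \mathcal{E}$" equivalence with its one-line proof, and chain the inclusions.
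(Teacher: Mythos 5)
Your strategy --- reduce both translated sets to origin-centered ellipsoids and pass through a center-free characterization --- is sound and considerably more elementary than the paper's proof, which invokes the matrix S-lemma twice together with a chain of Schur-complement and generalized-inverse manipulations. But as written there is a gap at the final step. Your characterization reads ``$Y$ lies in the origin-centered $\mathcal{E}$ iff $Z_c+Y\in\mathcal{E}$ and $Z_c-Y\in\mathcal{E}$,'' with the midpoint equal to the ellipsoid's \emph{own} center; you then conclude from $Z'_c\pm Y\in\mathcal{E}'\subseteq\mathcal{E}$ by ``applying the same characterization to $\mathcal{E}$.'' That application is not licensed: what you actually have are two points of $\mathcal{E}$ symmetric about $Z'_c$, not about $Z_c$, and in general $Z'_c\neq Z_c$. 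The version you need is the one with an \emph{arbitrary} midpoint: if $W+Y\in\mathcal{E}$ and $W-Y\in\mathcal{E}$ for some $W$, then $Y^{\top}EY\preceq G_c$ (equivalently, your identity $\{Y\mid Y^{\top}EY\preceq G_c\}=\tfrac{1}{2}(\mathcal{E}\ominus\mathcal{E})$, which you state but never prove). This is precisely the step you dismiss in your last paragraph as ``no harder than taking one of the two inequalities'' --- that dismissal is valid only when the midpoint is $Z_c$, which is the one case that never arises in your chain.

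The fix is short, but it is where the content of the lemma lives. Write $W=Z_c+d$. Membership of $W\pm Y$ in $\mathcal{E}$ gives $(d\pm Y)^{\top}E(d\pm Y)\preceq G_c$; adding the two inequalities, the cross terms cancel and you obtain $2d^{\top}Ed+2Y^{\top}EY\preceq 2G_c$, hence $Y^{\top}EY\preceq G_c-d^{\top}Ed\preceq G_c$ because $E\succ 0$ forces $d^{\top}Ed\succeq 0$. Both membership inequalities and the positive definiteness of $E$ are genuinely used here --- this is the matrix-ellipsoid analogue of ``an off-center chord is no longer than the diameter,'' and it is the only place the hypothesis $\mathcal{E}'\subseteq\mathcal{E}$ does real work. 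With this in hand your chain closes: $Y$ in the translate of $\mathcal{E}'$ gives $Z'_c\pm Y\in\mathcal{E}'\subseteq\mathcal{E}$, and the arbitrary-midpoint characterization applied with $W=Z'_c$ gives $Y^{\top}EY\preceq G_c$, i.e.\ $Y$ in the translate of $\mathcal{E}$. So the proposal is repairable in two lines and, once repaired, yields a cleaner argument than the paper's; but the step you flagged as trivial is exactly the one that is not.
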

\begin{proof}
	{For simplicity, we denote $\left\{ Z-Z'_c \ | \ Z \in \mathcal{E}'\right\}$ by $\mathcal{S}'$ and $\left\{ Z-Z_c \ | \ Z \in \mathcal{E}\right\}$ by $\mathcal{S}$. Suppose that the conformable matrices ${M,N,M',N'} \succ 0$.} Then, without loss of generality, we assume
	\begin{equation*}
		\begin{split}
			\mathcal{E} &:= \left\{ {Z \ | \ (Z-Z_c)^{\top}M(Z-Z_c) \preceq N }\right\}, \\[-.3ex]
			\mathcal{E}' &:= \left\{ {Z \ | \ (Z-Z_c')^{\top}M'(Z-Z_c') \preceq N' }\right\}.
		\end{split}
	\end{equation*}
	Let $z := {Z-Z_c}$, $d := {Z_c'-Z_c}$, then standard computations reformulate $\mathcal{E},\mathcal{E}',\mathcal{S}$, and $\mathcal{S}'$ as
	\begin{equation*}
		\begin{split}
			\mathcal{E} &= \left\{ {z}+Z_c \ | \ z^{\top}{M}z \preceq {N}\right\}, \\[-.3ex]
			\mathcal{E}' &= \left\{ {z}+Z_c \ | \ (z-d)^{\top}{M'}(z-d) \preceq {N'}\right\}, \\[-.3ex]
			\mathcal{S} &= \left\{ z \ | \ z^{\top}{M}z \preceq {N}\right\}, \\[-.3ex]
			\mathcal{S}' &= \left\{ z \ | \ z^{\top}{M'}z \preceq {N'}\right\}.
		\end{split}
	\end{equation*}
	{Recall that $\mathcal{E}' \subseteq \mathcal{E}$. Then, we have
	\begin{equation*}
		\begin{bmatrix}
			I \\ z
		\end{bmatrix}^{\top}
		\begin{bmatrix}
			N & 0 \\ 0 & -M
		\end{bmatrix}
		\begin{bmatrix}
			I \\ z
		\end{bmatrix}
		\succeq 0
	\end{equation*}
	holds for all $z$ with
	\begin{equation}
		\label{eq:lem_Einclusion_proof_0.2}
		\begin{bmatrix}
			I \\ z
		\end{bmatrix}^{\top}
		\begin{bmatrix}
			N'- d^{\top}{M'}d &  d^{\top}{M'} \\ {M'}d & -M'
		\end{bmatrix}
		\begin{bmatrix}
			I \\ z
		\end{bmatrix}
		\succeq 0.
	\end{equation}
	Note that if we let $z=d$, the left hand of \eqref{eq:lem_Einclusion_proof_0.2} is equivalent to $N' \succ 0$, which verifies the Slater condition.} Then, Lemma \ref{lem:S-lemma} asserts that there exists a scalar $\alpha \geq 0$ such that
	\begin{equation}
		\label{eq:lem_Einclusion_proof_1}
		\begin{bmatrix}
			{N-\alpha N'}+ \alpha d^{\top}{M'}d & -\alpha d^{\top}{M'} \\
			- \alpha {M'}d & \alpha {M'} - {M}
		\end{bmatrix}
		\succeq 0.
	\end{equation}
	For compactness, we denote $M_\alpha:=\alpha {M'} - {M}$. By invoking \cite[Theorem 1.19]{zhang2006schur}, there exists a matrix ${L}$ such that
	\begin{equation}
		\label{eq:lem_Einclusion_proof_2}
		-\alpha d^{\top}{M'} = {L} M_\alpha.
	\end{equation}
	Furthermore, \cite[Theorem 1.20]{zhang2006schur} affirms that
	\begin{equation}
		\label{eq:lem_Einclusion_proof_3}
		{N-\alpha N'}+ \alpha d^{\top}{M'}d - \alpha d^{\top}{M'} M_\alpha^{\dagger} \alpha {M'}d
		\succeq 0
	\end{equation}
	where $\dagger$ denotes the generalized inverse of the matrix. Note that $\alpha \! \neq0$; otherwise the second diagonal block of \eqref{eq:lem_Einclusion_proof_1} equals to $- {M}$, which is a negative definite matrix and contradicts \eqref{eq:lem_Einclusion_proof_1}. {Note that \eqref{eq:lem_Einclusion_proof_3} is equivalent to
	\begin{equation}
		\label{eq:lem_Einclusion_proof_3.1}
		\begin{split}
			&{N-\alpha N'}+ \alpha d^{\top}{M'}(\alpha M')^{-1}(\alpha d^{\top}{M'})^{\top} \\
			&- \alpha d^{\top}{M'} M_\alpha^{\dagger} \alpha {M'}d \succeq 0.
		\end{split}
	\end{equation}
	After eliminating $\alpha d^{\top}{M'}$ via \eqref{eq:lem_Einclusion_proof_2}, \eqref{eq:lem_Einclusion_proof_3.1} becomes
	\begin{equation}
		\label{eq:lem_Einclusion_proof_3.2}
		\begin{split}
			&{N\!-\!\alpha N'}\!+\! {L} M_\alpha(\alpha M')^{-1}M_\alpha  {L}^{\!\!\top} \!\!-\! {L} M_\alpha M_\alpha^{\dagger}M_\alpha  {L}^{\!\!\top}\!\succeq\! 0.
		\end{split}
	\end{equation}
	Also note that $M_\alpha M_\alpha^{\dagger} M_\alpha = M_\alpha$ and $M_\alpha(\alpha M')^{-1}M_\alpha \!=M_\alpha - M + M(\alpha \!M')^{-1}M$. Then, \eqref{eq:lem_Einclusion_proof_3.2} is equivalent to
	\begin{equation}
		\label{eq:lem_Einclusion_proof_4}
		{N-\alpha N'} - {L(M-M(\alpha M')^{-1}M)L^{\top}}
		\succeq 0.
	\end{equation}}Note that we have $\alpha {M'} - {M} \succeq 0$ from \eqref{eq:lem_Einclusion_proof_1}. By standard Schur complement argument, ${M-M(\alpha M')^{-1}M}\!\succeq\! 0$ can be obtained, which implies ${N-\alpha N'} \!\succeq\! 0$. Now, we conclude that
	\begin{equation}
		\label{eq:lem_Einclusion_proof_5}
		\begin{bmatrix}
			{N-\alpha N'} & 0 \\
			0 & \alpha {M'} - {M}
		\end{bmatrix}
		\succeq 0,
	\end{equation}
	which proves the lemma by invoking Lemma \ref{lem:S-lemma}.
\end{proof}

{Now, we are in a position to demonstrate the boundedness and the shrinking properties of the reachable error sets given a feedback gain $K$ that stabilizes all systems in $\mathcal{C}^{\star}_{0}$ and a series of contracting ASEs. These properties are summarized in the following lemma.}

{\begin{lemma}
	\label{lem:error_set_inclusion}
	For the nominal system \eqref{nominal_sys}, if there exist a feedback gain $K$ and a matrix $P\succ 0$ such that
	\begin{equation} \label{eq:Einclusion_ass1}
		(\Delta [ I \ K^{\top}]^{\top})^{\top} P (\Delta [ I \ K^{\top}]^{\top}) - P \prec 0, \ \forall \Delta \in \mathcal{C}^{\star}_{0},
	\end{equation}
	and $e(0)$ is bounded, then the following statements hold.
	\begin{enumerate}
		\item there exists a bounded set $E_0$ such that $E_{i|0} \subseteq E_0$, $\forall i \in \mathbb{Z}_{\geq 0}$.
		\item it holds that $E_{i|k+1} \subseteq E_{i+1|k}$ if $\mathcal{C}_{k+1}^{\star} \subseteq \mathcal{C}_{k}^{\star}$, $\forall k \in \mathbb{Z}_{\geq 0}$, $i \in \mathbb{Z}_{\geq 0}$.
	\end{enumerate}
\end{lemma}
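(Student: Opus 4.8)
The plan is to prove the two statements separately: statement~1) by exhibiting a robustly positively invariant outer bound for the error recursion \eqref{eq:e_outside_TCS}, and statement~2) by induction on the horizon index $i$, with Lemma~\ref{lem:E_inclusion} supplying the key ingredient.

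For statement~1), I would work with the $P$-weighted norm $\|e\|_P := (e^{\top}Pe)^{1/2}$, where $P\succ 0$ is the matrix from \eqref{eq:Einclusion_ass1}. Since an ASE is a matrix ellipsoid, $\mathcal{C}_0^{\star}$ is compact, and the map $\Delta\mapsto (\Delta[I\ K^{\top}]^{\top})^{\top}P(\Delta[I\ K^{\top}]^{\top})-P$ is continuous and, by \eqref{eq:Einclusion_ass1}, negative definite on $\mathcal{C}_0^{\star}$; hence there is a constant $\rho\in[0,1)$ with $(\Delta[I\ K^{\top}]^{\top})^{\top}P(\Delta[I\ K^{\top}]^{\top})\preceq\rho P$ for all $\Delta\in\mathcal{C}_0^{\star}$. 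The set $V_0$ is bounded, being the image of the bounded sets $\mathcal{C}_0^{\star}$, $\mathcal{X}$, $\mathcal{U}$, $\mathcal{W}$ under continuous maps, so $c:=\sup_{v\in V_0}\|v\|_P<\infty$. Taking $\gamma:=\max\{\|e(0)\|_P,\ c/(1-\sqrt{\rho})\}$ and $E_0:=\{e\ |\ \|e\|_P\le\gamma\}$, the triangle estimate $\|\Delta[I\ K^{\top}]^{\top}e+v\|_P\le\sqrt{\rho}\,\|e\|_P+\|v\|_P\le\sqrt{\rho}\,\gamma+c\le\gamma$ shows $E_0$ is invariant under \eqref{eq:e_outside_TCS}; since $E_{0|0}=\{e(0)\}\subseteq E_0$, an induction on $i$ yields $E_{i|0}\subseteq E_0$ for all $i\in\mathbb{Z}_{\geq 0}$.

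For statement~2), the crucial preliminary step---and the one I expect to be the main obstacle---is to show $V_{k+1}\subseteq V_k$ whenever $\mathcal{C}_{k+1}^{\star}\subseteq\mathcal{C}_{k}^{\star}$. This is not immediate, because the nominal systems $\bar{\Delta}(k)$ and $\bar{\Delta}(k+1)$ appearing in the definitions of $V_k$ and $V_{k+1}$ are the \emph{distinct} centers of the two ASEs. This is exactly the situation handled by Lemma~\ref{lem:E_inclusion}: applied with $\mathcal{E}'=\mathcal{C}_{k+1}^{\star}$ and $\mathcal{E}=\mathcal{C}_{k}^{\star}$, which are matrix ellipsoids as they are ASEs, it gives $\{\Delta-\bar{\Delta}(k+1)\ |\ \Delta\in\mathcal{C}_{k+1}^{\star}\}\subseteq\{\Delta-\bar{\Delta}(k)\ |\ \Delta\in\mathcal{C}_{k}^{\star}\}$, and substituting this inclusion into the definition of $V_{k+1}$ yields $V_{k+1}\subseteq V_k$.

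With $V_{k+1}\subseteq V_k$ in hand, I would prove $E_{i|k+1}\subseteq E_{i+1|k}$ by induction on $i$. For the base case, $E_{0|k+1}=\{e(k+1)\}$, and reading \eqref{e_update} at time $k$ gives $e(k+1)=\Delta_r[I\ K^{\top}]^{\top}e(k)+v(k)$ with $\Delta_r\in\mathcal{C}_k^{\star}$ (the true system is always an admissible system for the current data, so $(\mathcal{A},\mathcal{B})\in\mathcal{C}_k^{\star}$), $e(k)\in E_{0|k}$, and $v(k):=(\Delta_r-\bar{\Delta}(k))[\bar{x}(k)^{\top}\ \bar{u}(k)^{\top}]^{\top}+w(k)\in V_k$ (using $\bar{x}(k)\in\mathcal{X}$, $\bar{u}(k)\in\mathcal{U}$ from the tightened constraints in \eqref{eq:TDPC_OPT} and $w(k)\in\mathcal{W}$); hence $e(k+1)\in E_{1|k}$. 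For the inductive step, assuming $E_{i|k+1}\subseteq E_{i+1|k}$, the three inclusions $\mathcal{C}_{k+1}^{\star}\subseteq\mathcal{C}_{k}^{\star}$, $E_{i|k+1}\subseteq E_{i+1|k}$ and $V_{k+1}\subseteq V_k$ substituted termwise into \eqref{eq:e_outside_TCS} give $E_{i+1|k+1}\subseteq E_{i+2|k}$, closing the induction. Apart from the center-shift argument via Lemma~\ref{lem:E_inclusion}, the remainder is routine monotonicity bookkeeping.
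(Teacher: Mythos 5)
Your proof is correct, and statement~2) is handled exactly as in the paper: induction on $i$, with the base case read off from \eqref{e_update} and the inductive step resting on $V_{k+1}\subseteq V_k$, which you (like the paper) obtain from Lemma~\ref{lem:E_inclusion} applied to the two ASE centers $\bar{\Delta}(k)$ and $\bar{\Delta}(k+1)$ --- you correctly identified that the center shift is the only non-routine point. Where you genuinely diverge is statement~1). The paper splits $E_{i|0}=E_{i|0}^0\oplus E_{i|0}^1$ into the forced response from zero initial error and the free propagation of $e(0)$, and then delegates the boundedness of each piece to external results (a lemma from the tube-based zonotopic MPC literature for the forced part, and a quadratic-stability theorem giving exponential decay $\Vert e_i\Vert_2<\eta_1 e^{-\eta_2 i}\Vert e(0)\Vert_2$ for the free part). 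You instead give a self-contained Lyapunov argument: compactness of the matrix ellipsoid $\mathcal{C}_0^{\star}$ upgrades the strict inequality \eqref{eq:Einclusion_ass1} to a uniform contraction factor $\rho<1$ in the $P$-weighted norm, $V_0$ is bounded, and the ball $\{e\ |\ \Vert e\Vert_P\leq\gamma\}$ with $\gamma=\max\{\Vert e(0)\Vert_P,\,c/(1-\sqrt{\rho})\}$ is an explicit robustly positively invariant set containing $E_{0|0}$. Your route buys an explicit, citation-free construction of $E_0$ (at the mild implicit cost of assuming $\mathcal{X}$ and $\mathcal{U}$ bounded so that $c<\infty$, an assumption the paper's cited results also need); the paper's route buys brevity and a quantitative decay rate for the initial-error component that it does not otherwise use here. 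Both are valid.
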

\begin{proof}
	For the first statement, we consider $\!E_{i|0}^0\!$ and $\!E_{i|0}^1$, where $E_{i|0}^0$ equals $E_{i|0}$ when $e(0)=0$ and $E_{i|0}^1$ characterizes the propagation of $e(0)$. {Let $E_{0|0}^0=\{0\}$ and $E_{0|0}^1=\{e(0)\}$.} For $i\in\mathbb{Z}_{\geq 1}$, $E_{i|0}^0$ and $E_{i|0}^1$ can be computed by
	\begin{equation}
		\begin{split}
			E_{i|0}^0 &=  \left\{\Delta \!
			\begin{bmatrix}
				I \\ K
			\end{bmatrix}e+v \ \!\big|\! \ \Delta  \in\mathcal{C}_0^{\star}, e  \in  E_{i-1|0}^0, v  \in  V_0\right\}, \\
			E_{i|0}^1 &=  \left\{\Delta \!
			\begin{bmatrix}
				I \\ K
			\end{bmatrix}e \ \!\big|\! \ \Delta  \in  \mathcal{C}_0^{\star}, e  \in  E_{i-1|0}^1\right\}.
		\end{split}
	\end{equation}
	It can be easily verified that {$E_{i|0} \subseteq E_{i|0}^0 \oplus E_{i|0}^1$ for all $i \in \mathbb{Z}_{\geq0}$}. Thus, the existence of $E_0$ can be guaranteed if there exist bounded sets $E_0^0$ and $E_0^1$ such that $E_{i|0}^0 \subseteq E_0^0,E_{i|0}^1 \subseteq E_0^1,\forall i \in \mathbb{Z}_{\geq 0}$. Given the assumption \eqref{eq:Einclusion_ass1} and the fact that $E_{0|0}^0 = \{0\}$, the existence of such $E_0^0$ follows directly from \cite[Lemma 5]{russoTubeBasedZonotopicDataDriven2023}. Additionally, by using \eqref{eq:Einclusion_ass1} and invoking \cite[Theorem 8]{huNonconservativeMatrixInequality2010}, there exist $\eta_1,\eta_2 > 0$ such that $\Vert e_i \Vert_2 \!<\! \eta_1e^{-\eta_2i}\Vert e(0) \Vert_2, \forall e_i\!\in\! E_{i|0}^0, i\in\mathbb{Z}_{\geq1}$. Note that since $e(0)$ is bounded, we can also conclude that there exists an $E_0^1$ such that $E_{i|0}^1 \!\subseteq\! E_0^1,\forall i \!\in\! \mathbb{Z}_{\geq 0}$.

	For the second statement, we will prove it by mathematical induction (MI). When $i = 0$, we have $E_{0|k+1} = \{e(k+1)\} = \{\Delta_r [ I \ K^{\top}]^{\top}\!e(k)+(\Delta_r - \bar{\Delta}(k))[\bar{x}(k)^{\top} \ \bar{u}(k)^{\top} ]^{\top} \!+ w(k)\},\forall k \in \mathbb{Z}_{\geq 0}$, where $\Delta_r \! \in \! \mathcal{C}_{k}^{\star},\bar{x}(k) \! \in \! \mathcal{X}, \bar{u}(k) \! \in \! \mathcal{U}, w(k) \! \in \! \mathcal{W}$. By the definition of $E_{i|k}$, we can easily conclude that $E_{0|k+1} \subseteq E_{1|k}$. Furthermore, assume that the statement holds for all $k \! \in \! \mathbb{Z}_{\geq 0}, i \! \in \! \mathbb{Z}_{[0,j-1]}$. Then, for $i=j$, we have
	\begin{equation*}
		\begin{split}
			E_{j|k+1} \! &= \! \left\{\Delta \!
			\begin{bmatrix}
				I \\ K
			\end{bmatrix}\!e\!+\!v \! \ \big| \! \ \Delta \! \in \! \mathcal{C}_{k+1}^{\star},e \! \in \! E_{j-1|k+1}, v \! \in \! V_{k+1}\right\},\! \\
			E_{j+1|k} \! &= \! \left\{\Delta \!
			\begin{bmatrix}
				I \\ K
			\end{bmatrix}\!e\!+\!v \! \ \big| \! \ \Delta \! \in \! \mathcal{C}_{k}^{\star}, e \! \in \! E_{j|k}, v \! \in \! V_k\right\}.
		\end{split}
	\end{equation*}
	Note that $\mathcal{C}_{k+1}^{\star} \subseteq \mathcal{C}_{k}^{\star}$ is guaranteed by \eqref{eq:ETL_strategy}. By letting $\mathcal{E}=\mathcal{C}_{k}^{\star}$, $\mathcal{E}'\!=\mathcal{C}_{k+1}^{\star}$, $Z_c=\bar{\Delta}(k)$, and $Z_c'\!=\bar{\Delta}(k+1)$, we have $V_{k+1} \subseteq V_k$ by invoking Lemma \ref{lem:E_inclusion}. Besides, $E_{j-1|k+1} \subseteq E_{j|k}$ can be obtained from the MI assumption. We can now conclude that the second statement holds for all $k \in \mathbb{Z}_{\geq 0}, i \in \mathbb{Z}_{\geq 0}$.
\end{proof}}

With the bounded reachable error set $E_0$ and the definition of the terminal constraint set $\mathcal{X}_T$, the parameter $L_T$ of the terminal constraint for any given matrices $P_T$ and $K$ can be obtained by solving an optimization problem:
{\begin{equation} \label{eq:Terminal_Set}
	\begin{split}
		&\max_{L_T} \ L_T\\[-1ex]
		& \ \text{s.t.} \ Kx \in \mathcal{U} \ominus KE_0, \ \forall x \in \{x \ | \ x^{\top}P_{T}x \leq L_T\}.
	\end{split}
\end{equation}}

By revisiting the formulation of $E_{i|k}$ in \eqref{eq:e_outside_TCS} and invoking Lemma \ref{lem:E_inclusion}, we can conclude that the size of $E_{i|k}$ correlates with $\mathcal{C}_{k}^{\star}$. Recall that a smaller $E_{i|k}$ will enable a less conservative solution for \eqref{eq:TDPC_OPT} and result in a more desirable control performance. Besides, the second statement of Lemma \ref{lem:error_set_inclusion} asserts that the contraction of the error set can be guaranteed if the ASE is contracting. This motivates the following subsection, in which we design a closed-loop active learning strategy to learn a contracting ASE by further exploring closed-loop data.

\subsection{{Closed-Loop Active Learning} for Contracting ASEs} \label{subsec:learn_Cs}
{When the controller is employed, the inputs are determined by \eqref{input} and cannot be designed freely. Therefore, an active learning strategy is proposed in this section to learn the contracting ASE by actively selecting closed-loop data samples.

Suppose that the dataset we obtained in the open-loop stage is $(H_{0},\dot{X}_{0},\Lambda_{0})$. We denote $\Xi(H_{0},\dot{X}_{0},\Lambda_{0})$ by $\Xi_0$ for brevity. Then, for $k \geq 1$, the active learning criterion is designed as}
\begin{equation}
	\label{eq:ETL_scheduler}
	\xi(h(k-1),x(k)) - \alpha \Xi_{k-1} \succeq 0.
\end{equation}
The corresponding learning strategy is presented as follows:
\begin{equation} \label{eq:ETL_update}
	\begin{split}
		H_k&=
		\left\{
		\begin{array}{lc}
			H_{k-1}, \ \text{if $\exists \alpha \geq 0$ such that \eqref{eq:ETL_scheduler} holds}\\[0pt]
			[H_{k-1}\ | \ h(k-1)],  \ \text{otherwise}\\
		\end{array}
		\right.
		\\
		\dot{X}_k&=
		\left\{
		\begin{array}{lc}
			\dot{X}_{k-1}, \ \text{if $\exists \alpha \geq 0$ such that \eqref{eq:ETL_scheduler} holds} \\[0pt]
			[\dot{X}_{k-1} \ | \ x(k)], \ \text{otherwise}\\
		\end{array}
		\right.
		\\
		\Xi_k&=
		\left\{
		\begin{array}{lc}
			\Xi_{k-1}, \ \text{if $\exists \alpha \geq 0$ such that \eqref{eq:ETL_scheduler} holds} \\[0pt]
			\Xi(H_k,\dot{X}_k,\Lambda_s), \ \text{otherwise}\\
		\end{array}
		\right.
	\end{split}
\end{equation}
where $\Xi_k$ determines the ASE learned by the closed-loop active learning strategy at instant $k$, namely,
	\begin{equation*}
		\mathcal{C}^{\star}_{k}:= \left\{ \Delta \ \Big| \ \begin{bmatrix}
			I \\ \Delta^{\top}
		\end{bmatrix}^{\top}\!
		\Xi_k
		\begin{bmatrix}
			I \\ \Delta^{\top}
		\end{bmatrix} \succeq 0 \right\}, k \in \mathbb{Z}_{\geq 0}.
\end{equation*}
{We then consider $\Xi_{k}^{\sigma}\!:=\!\Xi_{k}\!-\!\sigma\text{diag}(I_{n_x},0)$ with $\sigma\geq0$ and the corresponding ASE $\mathcal{C}_{k}^{\sigma}$, determined by $\Xi_{k}^{\sigma}$ in the same way.}

Based on the definition of matrix ellipsoids and Lemmas \ref{Lem:MoME} and \ref{lem:S-lemma}, $\mu(\mathcal{C}_{k-1}^{\sigma})$ decreases as $\sigma$ increases and $\mathcal{C}_{k-1}^{\sigma} \!\subseteq \mathcal{C}_{k-1}^{\star}$. Recall that we aim to learn a small ASE such that $\mathcal{C}^{\star}_{k} \!\subseteq\! \mathcal{C}^{\star}_{k-1}$. This objective can be achieved by learning for a $\mathcal{C}^{\star}_{k}$ that enables $\mathcal{C}^{\star}_{k} \subseteq \mathcal{C}_{k-1}^{\sigma}$ with a large $\sigma$. By invoking Lemma \ref{lem:S-lemma}, the closed-loop learning strategy is presented as follows:
\begin{equation}
	\label{eq:ETL_strategy}
	\begin{split}
		&\max_{\Lambda_s \in \mathcal{D}_{\text{col}(H_k)},\sigma\geq0} \sigma\\[-1ex]
		& \ \text{s.t. } \Xi_{k-1} - \Xi(H_k,\dot{X}_k,\Lambda_s) \succeq \sigma\text{diag}(I_{n_x}, 0).
	\end{split}
\end{equation}
The constraint $\sigma\! \geq\! 0$ ensures $\Xi_{k-1} \!- \Xi_{k} \!\succeq\! 0$, which implies $\mathcal{C}^{\star}_{k} \!\subseteq\! \mathcal{C}^{\star}_{k-1}$ by recalling the definition of $\mathcal{C}^{\star}_{k}$ and Remark \ref{rem:onedirect_slemma}.\footnote{CVX toolbox \cite{CVX} can be used to solve the semidefinite program \eqref{eq:ETL_strategy}.} Note that the execution of \eqref{eq:ETL_strategy} could be resource-consuming. The learning criterion \eqref{eq:ETL_scheduler} mitigates this issue by only triggering \eqref{eq:ETL_strategy} when an informative data sample is detected. {In fact, while problems \eqref{eq:ETL_scheduler} and \eqref{eq:ETL_strategy} have the same problem dimension, \eqref{eq:ETL_scheduler} involves only one scalar decision variable, which is $\text{col}(H_k)$ fewer than that of \eqref{eq:ETL_strategy}, where $\text{col}(H_k)$ denotes the size of the dataset.} An intuitive interpretation of \eqref{eq:ETL_scheduler} is that if all admissible systems in the current ASE are compatible with the latest data sample, the learning criterion will not be satisfied according to Lemma \ref{lem:S-lemma}. Therefore, it is unnecessary to trigger the learning strategy in \eqref{eq:ETL_strategy} because the latest data sample cannot exclude any admissible system in the ASE. {Additionally, to ensure a controllable computational load, we can set a limit on the number of collected data samples, and stop learning once the limit is reached.}

To further illustrate the rationale of the proposed active learning strategy, we investigate the influence of the uncollected data sample to the optimization problem \eqref{eq:ETL_strategy}. {Suppose that $\sigma^\star(k)$ is the optimal value of the objective function of  \eqref{eq:ETL_strategy}.} For convenience, we define $H_{k}^+ := [H_{k-1}\ h(k-1)]$ and $\dot{X}_{k}^+ := [\dot{X}_{k-1} \ x(k)]$, which represent the collected dataset without enforcing the learning criterion \eqref{eq:ETL_scheduler}. Suppose that when $H_k=H_{k}^+$ and $\dot{X}_k=\dot{X}_{k}^+$, the optimal value of the objective function of \eqref{eq:ETL_strategy} is $\sigma^+(k)$. {Recall that our goal is to demonstrate that the proposed criterion does not degrade overall learning performance by showing that the unselected data sample does not contribute to learning a contracting $\mathcal{C}_k^\star$. According to the definition of $\Xi_{k}^{\sigma}$ and \eqref{eq:opt_lambda}, if the unselected data sample benefits the learning process, we would have $\sigma^+(k) > \sigma^\star(k)$. Thus, the objective can be achieved by showing that $\sigma^\star(k) \geq \sigma^+(k)$. In the following theorem, we analyze the recursive feasibility of the learning strategy \eqref{eq:ETL_strategy} and prove that unselected samples are not beneficial for learning a contracting ASE.}
\begin{theorem} \label{thm:ETL}
	For ATDPC with learning strategy \eqref{eq:ETL_strategy} and the learning criterion \eqref{eq:ETL_scheduler}, the following statements hold:
	\begin{enumerate}
		\item the optimization problem \eqref{eq:ETL_strategy} is feasible for all $k\!\in\! \mathbb{Z}_{\geq 1}$.
		\item for $k \geq 1$, if there exist $\alpha \geq 0$ such that the learning criterion \eqref{eq:ETL_scheduler} holds, we have $\sigma^\star(k) \geq \sigma^+(k)$.
	\end{enumerate}
\end{theorem}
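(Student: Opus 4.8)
The plan is to prove the two statements in turn, relying throughout on the additive identity $\Xi(H,\dot{X},\Lambda)=\sum_{i=1}^{n}\lambda_i\,\xi(H_{[i]},\dot{X}_{[i]})$, which in particular makes $\Xi$ linear in the diagonal of $\Lambda$. I would first record a structural invariant, proved by induction on $k\ge t_0$: at every step $\Xi_k$ can be written as $\Xi(H_k,\dot{X}_k,\Lambda^{(k)})$ for some $\Lambda^{(k)}\in\mathcal{D}_{\text{col}(H_k)}$. The base case $\Xi_{t_0}=\Xi_0=\Xi(H_{t_0},\dot{X}_{t_0},\Lambda_{t_0})$ is inherited from the open-loop stage, where $\Lambda_{t_0}$ is nonnegative diagonal by construction; the inductive step follows from \eqref{eq:ETL_update}, since a non-update leaves $(H_k,\dot{X}_k,\Xi_k)$ unchanged, while an update sets $\Xi_k=\Xi(H_k,\dot{X}_k,\Lambda_s)$ with $\Lambda_s\in\mathcal{D}_{\text{col}(H_k)}$ returned by \eqref{eq:ETL_strategy}.

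For statement 1, feasibility of \eqref{eq:ETL_strategy} at step $k$ follows by exhibiting the point $\sigma=0$ together with $\Lambda_s=\Lambda^{(k-1)}$ when $H_k=H_{k-1}$, or $\Lambda_s=\text{diag}(\Lambda^{(k-1)},0)$ when $H_k=[H_{k-1}\ h(k-1)]$. By the additive identity the zero weight on the appended column annihilates its term, so $\Xi(H_k,\dot{X}_k,\Lambda_s)=\Xi_{k-1}$ in both cases and the matrix inequality reduces to the trivially true $0\succeq 0$. This proves statement 1 and simultaneously closes the induction establishing the invariant.

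For statement 2, I would assume the learning criterion \eqref{eq:ETL_scheduler} holds at step $k$, so that \eqref{eq:ETL_strategy} is posed with $H_k=H_{k-1}$ and $\dot{X}_k=\dot{X}_{k-1}$. Let $(\Lambda_s^+,\sigma^+(k))$ solve \eqref{eq:thm_ETL_1} and split $\Lambda_s^+=\text{diag}(\bar{\Lambda},\mu)$ with $\bar{\Lambda}\in\mathcal{D}_{\text{col}(H_{k-1})}$ and $\mu\ge 0$. By the additive identity, $\Xi(H_k^+,\dot{X}_k^+,\Lambda_s^+)=\Xi(H_{k-1},\dot{X}_{k-1},\bar{\Lambda})+\mu\,\xi(h(k-1),x(k))$, so the constraint of \eqref{eq:thm_ETL_1} reads $\Xi_{k-1}-\Xi(H_{k-1},\dot{X}_{k-1},\bar{\Lambda})-\mu\,\xi(h(k-1),x(k))\succeq\sigma^+(k)\,\text{diag}(I_{n_x},0)$. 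Invoking \eqref{eq:ETL_scheduler}, there is $\alpha\ge 0$ with $\xi(h(k-1),x(k))\succeq\alpha\,\Xi_{k-1}$, and multiplying by $\mu\ge 0$ gives $\mu\,\xi(h(k-1),x(k))\succeq\mu\alpha\,\Xi_{k-1}$; substituting yields $\Xi_{k-1}-\Xi(H_{k-1},\dot{X}_{k-1},\bar{\Lambda})-\mu\alpha\,\Xi_{k-1}\succeq\sigma^+(k)\,\text{diag}(I_{n_x},0)$. Finally, writing $\Xi_{k-1}=\Xi(H_{k-1},\dot{X}_{k-1},\Lambda^{(k-1)})$ from the invariant and using linearity of $\Xi$ in $\Lambda$, the left-hand side equals $\Xi_{k-1}-\Xi(H_{k-1},\dot{X}_{k-1},\Lambda_s)$ with $\Lambda_s:=\bar{\Lambda}+\mu\alpha\,\Lambda^{(k-1)}\in\mathcal{D}_{\text{col}(H_{k-1})}$. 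Hence $(\Lambda_s,\sigma^+(k))$ is feasible for \eqref{eq:ETL_strategy}, so $\sigma^\star(k)\ge\sigma^+(k)$.

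The block-matrix bookkeeping is routine; the one genuinely load-bearing step is the folding in the last paragraph — observing that the criterion $\xi(h(k-1),x(k))\succeq\alpha\,\Xi_{k-1}$ lets the weight $\mu$ carried by the redundant new column be reabsorbed into a uniform rescaling $\bar{\Lambda}\mapsto\bar{\Lambda}+\mu\alpha\,\Lambda^{(k-1)}$ of the previous weights without disturbing the semidefinite ordering. This works only because $\Xi$ is linear in the diagonal weights, so that $\mu\alpha\,\Xi_{k-1}=\Xi(H_{k-1},\dot{X}_{k-1},\mu\alpha\,\Lambda^{(k-1)})$, and because multiplying the matrix inequality by the nonnegative scalar $\mu$ preserves it while keeping $\Lambda_s$ nonnegative and diagonal. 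A secondary point requiring care is keeping the invariant $\Xi_k=\Xi(H_k,\dot{X}_k,\Lambda^{(k)})$ with nonnegative diagonal $\Lambda^{(k)}$ honest through the induction, in particular inheriting it cleanly from the open-loop stage.
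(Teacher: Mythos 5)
Your proposal is correct. For statement 1 your argument is essentially the paper's: both exhibit the trivial feasible point $\sigma=0$ with $\Lambda_s$ equal to the previous weights (padded with a zero when a column is appended), so that $\Xi(H_k,\dot{X}_k,\Lambda_s)=\Xi_{k-1}$ and the constraint collapses to $0\succeq 0$; your explicit invariant $\Xi_k=\Xi(H_k,\dot{X}_k,\Lambda^{(k)})$ with $\Lambda^{(k)}\in\mathcal{D}_{\text{col}(H_k)}$ is exactly what the paper's induction carries implicitly. For statement 2 the two proofs share the same ingredients --- the constraint of the augmented problem \eqref{eq:thm_ETL_1} and the triggering inequality $\xi(h(k-1),x(k))\succeq\alpha\Xi_{k-1}$ --- and both arrive at the intermediate relation $(1-\alpha\lambda_n)\Xi_{k-1}-\sum_{i=1}^{n-1}\lambda_i\xi_{[i]}\succeq\sigma^+(k)\,\text{diag}(I_{n_x},0)$ (your $\mu$ is the paper's $\lambda_n$), but they finish differently. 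The paper splits into the cases $\alpha=0$ and $\alpha>0$, proves $\tau:=1-\alpha\lambda_n\in(0,1]$ by sandwiching with the true system $\Delta_r$ and invoking the strict QMI \eqref{eq:ase_single}, and then divides through by $\tau$ to obtain the feasible value $\sigma^+(k)/\tau\geq\sigma^+(k)$. You instead use linearity of $\Xi$ in the weights together with the invariant to rewrite $\mu\alpha\,\Xi_{k-1}$ as $\Xi(H_{k-1},\dot{X}_{k-1},\mu\alpha\Lambda^{(k-1)})$ and fold it into $\Lambda_s=\bar{\Lambda}+\mu\alpha\Lambda^{(k-1)}$, which stays nonnegative diagonal regardless of the sign of $1-\mu\alpha$. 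This buys you a uniform argument with no case split and no appeal to the true system's strict feasibility; what it gives up is the paper's slightly sharper conclusion $\sigma^\star(k)\geq\sigma^+(k)/\tau$, which is immaterial for the theorem as stated.
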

{
	\begin{proof}
		We prove the first statement by MI. Recall that $\Xi_0 = \Xi(H_0,\dot{X}_0,\Lambda_0)$. For $k=1$, if there exists $\alpha \geq 0$ such that \eqref{eq:ETL_scheduler} is satisfied, we have $H_1 = H_0$ and $\dot{X}_1 = \dot{X}_0$. It can be verified that $\Lambda_s = \Lambda_0$ and $\sigma = 0$ is a feasible solution to the optimization problem \eqref{eq:ETL_strategy}. {On the other hand}, if \eqref{eq:ETL_scheduler} is not satisfied, the constraint can still be satisfied by letting $\Lambda_s = \text{diag}(\Lambda_0,0)$ and $\sigma = 0$. For $k\geq 2$, we assume that there exists $\Lambda_{k-1}$ such that $\Xi_{k-2} \!-\! \Xi(H_{k-1},\dot{X}_{k-1},\Lambda_{k-1}) \!\succeq\! 0$. Then, the feasibility of \eqref{eq:ETL_strategy} can be analyzed following a similar line of argument to the case of $k=1$.

		For the second statement, we denote  $\text{col}(H_{k}^{+})$ by $n$ for compactness. Suppose that $\Lambda^+=\text{diag}(\lambda_1,\lambda_2,\dots,\lambda_n)$ is a solution corresponding to the optimal value $\sigma^+(k)$. We have
		\begin{equation} \label{eq:pf_ETL_1}
			\Xi_{k-1} \!-\! \sum\nolimits_{i=1}^{n}\!\!\lambda_i \xi((H_{k}^{+})_{[i]},({\dot{X}}_{k}^{+})_{[i]}) \!\succeq\! \sigma^+\!(k)\text{diag}(I_{n_x},0)
		\end{equation}
		according to \eqref{eq:ETL_strategy}. Besides, we also have
		\begin{equation} \label{eq:pf_ETL_2}
			\xi((H_{k}^{+})_{[n]},({\dot{X}}_{k}^{+})_{[n]}) - \alpha \Xi_{k-1} \succeq 0
		\end{equation}
		according to \eqref{eq:ETL_scheduler}. {Based on the assumption and the constraint in \eqref{eq:ETL_strategy}, we have $\alpha \geq 0$ and $\lambda_n \geq 0$.}

		{To prove the statement, we separately consider the case that $\alpha=0$ and $\alpha>0$. Specifically, the first case we consider is $\alpha = 0$ and the second case is $\alpha > 0$. We start by considering the case where $\alpha = 0$. By invoking \eqref{eq:pf_ETL_2}, we have $\xi((H_{k}^{+})_{[n]},({\dot{X}}_{k}^{+})_{[n]}) \succeq 0$, which implies
		\begin{equation}
			\Xi_{k-1} \!-\! \sum\nolimits_{i=1}^{n\!-\!1}\!\!\lambda_i \xi((H_{k}^{+})_{[i]},({\dot{X}}_{k}^{+})_{[i]}) \!\succeq\! \sigma^+\!(k)\text{diag}(I_{n_x}, 0)
		\end{equation}
	according to \eqref{eq:pf_ETL_1}. Noting that $(H_{k}^{+})_{[i]}\!=\!(H_{k})_{[i]}$ and $(\dot{X}_{k}^{+})_{[i]}\!=\!(\dot{X}_{k})_{[i]}$ when $i \in \mathbb{Z}_{[1,n-1]}$}, the constraint in \eqref{eq:ETL_strategy} is satisfied by letting $\Lambda_s=\text{diag}(\lambda_1,\lambda_2,\dots,\lambda_{n-1})$ and $\sigma=\sigma^+(k)$. Therefore, we have $\sigma^{\star}(k) \geq \sigma^{+}(k)$ when $\alpha = 0$.

		Next, for the case of $\alpha > 0$, by eliminating the term $\Xi_{k-1}$ using \eqref{eq:pf_ETL_1} and \eqref{eq:pf_ETL_2}, we can obtain
		\begin{equation} \label{eq:pf_ETL_3}
			\begin{split}
				&(1-\alpha\lambda_n)\xi((H_{k}^{+})_{[n]},({\dot{X}}_{k}^{+})_{[n]})\\
				&\succeq \!\alpha \!\sum\nolimits_{i=1}^{n\!-\!1}\!\!\lambda_i \xi((H_{k}^{+})_{[i]},({\dot{X}}_{k}^{+})_{[i]}) \!+\! \alpha\sigma^{+}\!(k)\text{diag}(I_{n_x}, 0).
			\end{split}
		\end{equation}
		{Note that the foremost right entry of \eqref{eq:pf_ETL_3} is positive semi-definite since $\alpha \!> \!0$ and $\sigma^{+}(k) \!\geq\! 0$.} Then, by pre- and postmultiplication of the right hand of \eqref{eq:pf_ETL_3} with $[I \ \Delta_r]$ and $[I \ \Delta_r]^{\top}$, it can be concluded that
		\begin{equation*}
			(1-\alpha \lambda_n )\begin{bmatrix}
				I \\ \Delta_{r}^{\top}
			\end{bmatrix}^{\top} \!
			\xi((H_{k}^{+})_{[n]},({\dot{X}}_{k}^{+})_{[n]})
			\begin{bmatrix}
				I \\ \Delta_{r}^{\top}
			\end{bmatrix} \succ 0.
		\end{equation*}
		according to \eqref{eq:ase_single}, which implies $1-\alpha \lambda_n > 0$. Since $\alpha > 0$ and $\lambda_n \geq 0$, we have $0 < 1-\alpha \lambda_n \leq 1$. Let $\tau:=1-\alpha \lambda_n$. Then, by eliminating the term $\xi((H_{k}^{+})_{[n]},({\dot{X}}_{k}^{+})_{[n]})$ using \eqref{eq:pf_ETL_1} and \eqref{eq:pf_ETL_2}, we obtain
		\begin{equation*}
				\tau\Xi_{k-1} - \sum\nolimits_{i=1}^{n-1}\lambda_i \xi((H_{k}^{+})_{[i]},({\dot{X}}_{k}^{+})_{[i]})\succeq\sigma^{+}(k)\text{diag}(I_{n_x}, 0),
		\end{equation*}
		which is equivalent to
		\begin{equation*}
			\Xi_{k-1} - \sum\nolimits_{i=1}^{n-1}\frac{\lambda_i}{\tau} \xi((H_{k}^{+})_{[i]},({\dot{X}}_{k}^{+})_{[i]})\succeq
			\frac{\sigma^{+}(k)}{\tau}\text{diag}(I_{n_x}, 0).
		\end{equation*}
		Note that $\text{diag}(\lambda_1,\lambda_2,\dots,\lambda_{n}) \in \mathcal{D}_n$, $\tau \in (0,1]$, and $\sigma^+ \geq 0$. It follows directly that the constraint in \eqref{eq:ETL_strategy} is satisfied when $\Lambda_s=\text{diag}(\lambda_1/\tau,\lambda_2/\tau,\dots,\lambda_{n-1}/\tau)$ and $\sigma=\sigma^{+}(k) / \tau$. Thus, we have $\sigma^{\star}(k) \geq \sigma^{+}(k) / \tau \geq \sigma^{+}(k)$.
\end{proof}

In conclusion, the learning criterion \eqref{eq:ETL_scheduler} provides a computationally efficient way to reduce the learning frequency while maintaining the learning performance.
}

\subsection{Recursive Feasibility and Stability of ATDPC}
\label{Recursive Feasibility and Stability}
In this section, we aim to show that the ATDPC is recursively feasible and could stabilize the true system with the proposed {closed-loop active learning} strategy. To this end, a lemma is first introduced to {discuss the selection of parameters $K$ and $P_T$ in the ATDPC using the learned ASE}.
\begin{lemma}
    \label{lem:recursive}
   For system \eqref{eq:sys_real} and its ASE $\mathcal{C}(H,\dot{X},\Lambda)$, assume that the generalized Slater condition \eqref{eq:SLemma_Slater} holds for $\Xi(H,\dot{X},\Lambda)$ and some $\bar{Z} \!\in\! \mathbb{R}^{n_h \times n_x}$. {Recall that $l(x,u)$ with $Q,R\succ 0$ is defined in \eqref{eq:stage_cost}}. Then, there exist matrices $K$ and $P_T$ such that
    \begin{equation} \label{eq:lem_recursive_1}
        V_f(Ax + Bu) - V_f(x) \leq -l(x,u)
    \end{equation}
    for all $x \in \mathcal{X}_T$ and ${\Delta} \in \mathcal{C}(H,\dot{X},\Lambda)$, where $u = K x$, if and only if there exist matrices $Y \in\mathbb{R}^{n_x\times n_x}$, $Y=Y^{\top} \succ 0$, $L \in\mathbb{R}^{n_u\times n_x}$, and scalar $\alpha \geq 0$ satisfying
    \begin{align}
        &\label{eq:csf_criteria_1}\begin{bmatrix}
            R^{-1} & L & 0\\
            L^{\top} & Y & Y \\
            0 & Y & Q^{-1}
        \end{bmatrix}
        \succ 0, \\
        &\label{eq:csf_criteria_2}\setlength{\arraycolsep}{2.5pt}
        \begin{bmatrix}
        	\begin{array}{ccc:ccc}
        		Y & 0 & 0 & 0 & 0 & 0 \\
        		0 & 0 & 0 & 0 & Y & 0 \\
        		0 & 0 & 0 & 0 & L & 0 \\ \hdashline
        		0 & 0 & 0 & \raisebox{0pt}[10pt]{$R^{-1}$}\! & L & 0 \\
        		0 & Y^{\!\top}\!\! & L^{\!\top}\! & L^{\!\top}\! & Y & Y \\
        		0 & 0 & 0 & 0 & Y & Q^{-1}\!
        	\end{array}
        \end{bmatrix}
        \!-\!\alpha
        \begin{bmatrix}
        	\Xi(H,\dot{X},\Lambda) & 0 \\
        	0 & 0
        \end{bmatrix}
        \!\succeq \!0.
    \end{align}
\end{lemma}
\vspace{1ex}
\begin{remark}
    \label{rem:recursive_feasibility_inXT}
    The linear matrix inequalities (LMIs) \eqref{eq:csf_criteria_1} and \eqref{eq:csf_criteria_2} in Lemma \ref{lem:recursive} provides the sufficient condition that there exists a controller $u=Kx$ such that \eqref{eq:lem_recursive_1} holds for $\forall x \in \mathcal{X}_T, \Delta \in \mathcal{C}$. Moreover, \eqref{eq:lem_recursive_1} also implies
    \begin{equation}
        \label{rem:RFinXT_proof_1}
        V_f(Ax + Bu) - V_f(x) \!\leq\! 0, \forall x \!\in \mathcal{X}_T, \Delta \!\in \mathcal{C}(H,\dot{X},\Lambda),
    \end{equation}
    from which we can conclude that there exists a controller $u=Kx$ such that $V_f(Ax + Bu) \leq L_T, \ \forall x \in \mathcal{X}_T, \Delta \in \mathcal{C}(H,\dot{X},\Lambda)$. As such, Lemma \ref{lem:recursive} provides a sufficient condition for the existence of the controller $u=Kx$ such that $Ax +Bu \in \mathcal{X}_T,\forall x \in \mathcal{X}_T, {\Delta} \in \mathcal{C}(H,\dot{X},\Lambda)$. {Moreover, a feasible solution exists if $\mathcal{C}(H,\dot{X},\Lambda)$ is sufficiently small and \eqref{eq:lem_recursive_1} is feasible for some matrices $K$, $P_T$, and the true systems parameters $(A,B)=(\mathcal{A},\mathcal{B})$, since a shrinking $\mathcal{C}(H,\dot{X},\Lambda)$ converges to the singleton that only includes the true system parameters.}
\end{remark}

Recall that in the {open-loop stage}, we obtained an ASE denoted by $\mathcal{C}^{\star}_{0}$. Then by considering the $\mathcal{C}^{\star}_{0}$ in the context of Lemma \ref{lem:recursive}, we can set matrices $P_T$ and $K$ in ATDPC as $P_T=Y^{-1}$ and $K=LP_T$ respectively, where $Y$ and $L$ are matrices such that \eqref{eq:csf_criteria_1} and \eqref{eq:csf_criteria_2} hold. Additionally, the conclusions drawn in Lemma \ref{lem:recursive} and Remark \ref{rem:recursive_feasibility_inXT} both hold for all systems in $\mathcal{C}^{\star}_{0}$ with $P_T$ and $K$, and the assumption \eqref{eq:Einclusion_ass1} in Lemma \ref{lem:E_inclusion} holds. Now, we are in a position to discuss the recursive feasibility and stazbility of ATDPC. By using Lemmas \ref{lem:error_set_inclusion}, \ref{lem:recursive} and Remark \ref{rem:recursive_feasibility_inXT}, the recursive feasibility and stability are analyzed in the following theorem.

\begin{theorem}
    \label{thm:ATDPC_stable}
    For system \eqref{eq:sys_real} and the ASE $\mathcal{C}^{\star}_0$, if there exist matrices $Y \in\mathbb{R}^{n_x\times n_x}$, $Y=Y^{\top} \succ 0$, $L \in\mathbb{R}^{n_u\times n_x}$, and a scalar $\alpha \geq 0$ that satisfy \eqref{eq:csf_criteria_1} and \eqref{eq:csf_criteria_2}, then the following statements hold:
    \begin{enumerate}
        \item  the optimization problem \eqref{eq:TDPC_OPT} is recursively feasible.
        \item the control sequence determined by \eqref{input} and \eqref{eq:TDPC_OPT} manipulates the state of the true system to a bounded set.
    \end{enumerate}
\end{theorem}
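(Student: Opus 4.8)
\emph{Proof proposal.} The plan is to prove the two claims in order using the standard tube‑MPC machinery, but adapted to the fact that both the nominal model $\bar\Delta(k)$ and the reachable‑error sets $E_{i|k}$ are time‑varying. As common preparation I would first use the hypothesis that \eqref{eq:csf_criteria_1}--\eqref{eq:csf_criteria_2} hold for $\Xi_0$: Lemma~\ref{lem:recursive} applied with $\mathcal{C}=\mathcal{C}^\star_0$ lets me set $P_T=Y^{-1}$, $K=LP_T$, so that $V_f((A+BK)x)-V_f(x)\le -l(x,Kx)$ for all $x$ and all $\Delta\in\mathcal{C}^\star_0$, and Remark~\ref{rem:recursive_feasibility_inXT} gives $(A+BK)x\in\mathcal{X}_T$ for every $x\in\mathcal{X}_T$ and every $\Delta\in\mathcal{C}^\star_0$, i.e.\ robust invariance of $\mathcal{X}_T$ under the terminal law. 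The active‑learning update \eqref{eq:ETL_update} (the constraint $\sigma\ge 0$ together with Lemma~\ref{lem:S-lemma}, cf.\ Theorem~\ref{thm:ETL}) guarantees $\mathcal{C}^\star_{k+1}\subseteq\mathcal{C}^\star_{k}\subseteq\mathcal{C}^\star_0$ for all $k\ge t_0$, so \eqref{eq:Einclusion_ass1} holds with this $K$ and $P=P_T$, and Lemma~\ref{lem:error_set_inclusion} then supplies (i) a bounded set $E_0$ with $E_{i|k}\subseteq E_0$ for all $i,k$, and (ii) the prediction‑time monotonicity $E_{i|k+1}\subseteq E_{i+1|k}$; the choice of $L_T$ in \eqref{eq:Terminal_Set} handles the input constraint on $\mathcal{X}_T$ uniformly through $E_0$.

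\textbf{Claim 1 (recursive feasibility).} I would argue by induction on $k\ge t_0$, the base case being feasibility of \eqref{eq:TDPC_OPT} at $k=t_0$ for an admissible initial condition. Given an optimal solution at time $k$, construct the candidate at $k+1$ by shifting the nominal input sequence and appending the terminal law, $\bar u_{i|k+1}=\bar u^\star_{i+1|k}$ for $i\le N-2$ and $\bar u_{N-1|k+1}=K\bar x_{N-1|k+1}$, with the candidate nominal trajectory propagated under $\bar\Delta(k+1)$. Feasibility of this candidate then reduces to checking the tightened constraints of \eqref{eq:TDPC_OPT}. The enabling observation is that $V_k$ in \eqref{eq:e_outside_TCS} is a \emph{robust} over‑approximation, taken over all $\bar x\in\mathcal{X}$, $\bar u\in\mathcal{U}$ and all $\Delta\in\mathcal{C}^\star_k$; hence $E_{i|k}$ bounds the true error $e(k+i)$ for \emph{any} admissible nominal trajectory that stays in $\mathcal{X}\times\mathcal{U}$, and in particular $\bar\Delta(k+1)\in\mathcal{C}^\star_k$ is already accounted for. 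Combining this with $E_{i|k+1}\subseteq E_{i+1|k}$ (and $E_{i|k+1}\subseteq E_0$ for the terminal block), the inclusions $\bar x_{i|k+1}\in\mathcal{X}\ominus E_{i|k+1}$ and $\bar u_{i|k+1}\in\mathcal{U}\ominus KE_{i|k+1}$ follow from the corresponding $k$‑step inclusions, while $\bar x_{N|k+1}\in\mathcal{X}_T\ominus E_{N|k+1}$ follows from robust invariance of $\mathcal{X}_T$ (Remark~\ref{rem:recursive_feasibility_inXT}) together with the choice of $L_T$. This closes the induction.

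\textbf{Claim 2 (convergence to a bounded set).} I would use the optimal value $V^\star(k):=V_s(\bar x_{0|k},\bar u^\star_{\text{seq},k})$ as a Lyapunov‑like function. Evaluating $V_s$ at the candidate of Claim 1 gives $V^\star(k+1)\le V_s(\bar x_{0|k+1},\bar u_{\text{seq},k+1})$; telescoping the stage costs over the shift, using the terminal decrease $V_f((A+BK)x)-V_f(x)\le -l(x,Kx)$ from Lemma~\ref{lem:recursive} applied at the center $\bar\Delta(k+1)\in\mathcal{C}^\star_0$, and bounding the two perturbation sources -- the re‑initialisation gap and the model change $\bar\Delta(k)\to\bar\Delta(k+1)$ (both centers in the bounded set $\mathcal{C}^\star_0$, acting on the compact sets $\mathcal{X},\mathcal{U}$ by the closed‑loop constraint satisfaction from Claim 1), plus the disturbance bound $\delta$ -- yields a descent estimate $V^\star(k+1)-V^\star(k)\le -\lambda_{\min}(Q)\Vert x(k)\Vert_2^2+c$ with a constant $c$ depending only on $\delta$, $E_0$, $\mathcal{C}^\star_0$, $\mathcal{X}$, $\mathcal{U}$, $Q$, $R$, $P_T$. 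Since $V^\star\ge 0$, this is an ISS‑type inequality: whenever $\Vert x(k)\Vert_2^2> c/\lambda_{\min}(Q)$ the value strictly decreases, so $x(k)$ is driven into, and thereafter confined to, a bounded sublevel set, which is the asserted ultimate boundedness.

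\textbf{Main obstacle.} The delicate point is the inductive step of Claim 1: verifying all tightened constraints for the shifted candidate when \emph{both} $\bar\Delta(k)$ and the error tube $E_{i|k}$ change from one step to the next. This is precisely what the robust definition of $V_k$ in \eqref{eq:e_outside_TCS} (a worst case over $\bar x\in\mathcal{X}$, $\bar u\in\mathcal{U}$, $\Delta\in\mathcal{C}^\star_k$) and the monotonicity $E_{i|k+1}\subseteq E_{i+1|k}$ are designed to absorb, and the latter rests entirely on the active‑learning guarantee $\mathcal{C}^\star_{k+1}\subseteq\mathcal{C}^\star_k$ from Theorem~\ref{thm:ETL} (without which the tube could grow and feasibility break). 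The second most technical ingredient is assembling the perturbation constant $c$ in the descent inequality so that it is uniform in $k$, which again uses $\mathcal{C}^\star_k\subseteq\mathcal{C}^\star_0$ and the boundedness of $E_0$, $\mathcal{X}$, $\mathcal{U}$.
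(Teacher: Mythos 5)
Your proposal is correct and, for the recursive-feasibility claim, follows the paper's proof essentially verbatim: the shifted candidate $\bar u_{i|k+1}=\bar u^\star_{i+1|k}$ appended with the terminal law $K\bar x^\star_{N|k}$, constraint satisfaction via the tube monotonicity $E_{i|k+1}\subseteq E_{i+1|k}$ from Lemma~\ref{lem:error_set_inclusion}, and terminal admissibility via \eqref{eq:Terminal_Set} and Remark~\ref{rem:recursive_feasibility_inXT}. You correctly identify the learning guarantee $\mathcal{C}^\star_{k+1}\subseteq\mathcal{C}^\star_k$ as the load-bearing ingredient.

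The stability claim is where your route diverges mildly from the paper's. The paper keeps the nominal system \eqref{nominal_sys} as an unperturbed autonomous trajectory: it telescopes the value function along the shifted candidate, uses $\mathcal{F}\le 0$ from Lemma~\ref{lem:recursive} to get the \emph{exact} decrease $V^\star(k+1)-V^\star(k)\le -l(\bar x_{0|k},\bar u^\star_{0|k})$ with no additive constant, concludes $\bar x(k)\to 0$, and then obtains boundedness of $x(k)=\bar x(k)+e(k)$ simply because $e(k)\in E_0$ is bounded (Lemma~\ref{lem:error_set_inclusion}, first statement). You instead fold the re-initialisation gap, the center drift $\bar\Delta(k)\to\bar\Delta(k+1)$, and the disturbance into a constant $c$ and derive an ISS-type inequality on the value function evaluated at the true state, concluding ultimate boundedness directly. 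Both arguments are sound and reach the same conclusion; the paper's is shorter because it never perturbs the nominal dynamics, whereas yours is the more defensive reading of the constraint $\bar x_{0|k}=x(k)$ in \eqref{eq:TDPC_OPT} and requires the extra (routine but nontrivial) work of showing $c$ is uniform in $k$, which you correctly flag as resting on $\mathcal{C}^\star_k\subseteq\mathcal{C}^\star_0$ and the compactness of $E_0$, $\mathcal{X}$, $\mathcal{U}$.
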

\begin{proof}
    Given the fact that there exist matrices $Y$ and $L$ such that \eqref{eq:csf_criteria_1} and \eqref{eq:csf_criteria_2} hold, we let $P=Y^{-1}$, $K=LP$. For recursive feasibility, we first assume that it is feasible at instant $k \in \mathbb{Z}_{\geq 0}$. Then there exists an optimal control sequence $\bar{u}_{\text{seq},k}^{\star}=(\bar{u}^{\star}_{0|k},\bar{u}^{\star}_{1|k},\dots,\bar{u}^{\star}_{N-1|k})$ such that $\bar{x}_{N|k}^{\star} \in \mathcal{X}_T$, $\bar{u}^{\star}_{i|k} \in \mathcal{U} \ominus K E_{i|k}$, $\forall i\in\mathbb{Z}_{\geq 0}$. At instant $k+1$, we consider a control sequence $\bar{u}_{\text{seq},k+1}$ where
    \begin{equation}
        \label{eq:feasible_solution}
        \bar{u}_{i|k+1}=
            \left\{
            \begin{array}{lc}
                \bar{u}^{\star}_{i+1|k}, \ i \in \mathbb{Z}_{[0,N-2]},\\[4pt]
                K \bar{x}_{N|k}^{\star}, \ i=N-1.
            \end{array}
            \right.
    \end{equation}
    {Recall that $\mathcal{C}_{k+1}^{\star} \subseteq \mathcal{C}_{k}^{\star}$ holds according to the learning algorithm \eqref{eq:ETL_strategy}, which implies $E_{i|k+1}\subseteq E_{i+1|k}$ by invoking the second statement of Lemma \ref{lem:error_set_inclusion}.} Then, note that $\bar{u}_{i|k+1} = \bar{u}^{\star}_{i+1|k} \in \mathcal{U} \ominus K E_{i+1|k},\forall i \in \mathbb{Z}_{[0,N-2]}$. {We can conclude that $\bar{u}_{i|k+1} \in \mathcal{U} \ominus K E_{i|k+1}, \forall i \in \mathbb{Z}_{[0,N-2]}$ since $E_{i|k+1}\subseteq E_{i+1|k}$.} Moreover, for $i=N-1$, since $\bar{x}_{N|k}^{\star}\in \mathcal{X}_T$, $\bar{u}_{N-1|k+1}$ is a feasible control input on the basis of \eqref{eq:Terminal_Set}. Besides, $\bar{x}_{N|k+1} \in \mathcal{X}_T$ can be derived by Remark \ref{rem:recursive_feasibility_inXT}. The recursive feasibility of \eqref{eq:TDPC_OPT} is thus proved.

    For the stability of the ATDPC, we reconsider the input sequence $\bar{u}_{\text{seq},k+1}$ which is defined in \eqref{eq:feasible_solution}. Suppose that the optimal solution is $\bar{u}_{\text{seq},k+1}^{\star}$. Then, we have
    {\begin{equation}
        \label{eq:thm_stab_proof_1}
        V_s(\bar{x}_{0|k+1},\bar{u}_{\text{seq},k+1}^{\star}) \leq V_s(\bar{x}_{0|k+1},\bar{u}_{\text{seq},k+1}).
    \end{equation}}Furthermore, we have
    {\begin{equation}
        \label{eq:thm_stab_proof_2}
        \begin{split}
            V_{\!s}(\bar{x}_{0|k+1},\!\bar{u}_{\text{seq},k+1}) \!=\! V_{\!s}(\bar{x}_{0|k},\!\bar{u}_{\text{seq},k}^{\star}) \!-\!l(\bar{x}_{0|k},\bar{u}_{0|k}^{\star})\!+\!\mathcal{F}
        \end{split}
    \end{equation}}where
    \begin{equation*}
        \mathcal{F} := V_f(\bar{\Delta}(k) \begin{bmatrix}
            I \\ K
        \end{bmatrix}) - V_f(\bar{x}_{N|k}^{\star}) + l(\bar{x}_{N|k}^{\star},K \bar{x}_{N|k}^{\star})
    \end{equation*}
    {By invoking Lemma \ref{lem:recursive}, \eqref{eq:lem_recursive_1} holds for all $x \in \mathcal{X}_T$ and ${\Delta} \in \mathcal{C}_0^\star$ by setting $u = K x$. Note that $\bar{\Delta}(k) \in \mathcal{C}_k^\star \subseteq \mathcal{C}_0^\star$ as $\mathcal{C}_{k+1}^\star \subseteq \mathcal{C}_k^\star$, and that $\bar{x}_{N|k}^\star \in \mathcal{X}_T$. Then, $\mathcal{F}<0$ follows from Lemma \ref{lem:recursive}.} Subsequently, according to \eqref{eq:thm_stab_proof_1} and \eqref{eq:thm_stab_proof_2}, we can obtain
    {\begin{equation*}
        V_s(\bar{x}_{0|k+1},\bar{u}_{\text{seq},k+1}^{\star}) - {V_s(\bar{x}_{0|k},\bar{u}_{\text{seq},k}^{\star})} \leq -l(\bar{x}_{0|k},\bar{u}_{0|k}^{\star}).
    \end{equation*}}With the definition of the stage cost function, we also have
    {\begin{equation*}
        V_s(\bar{x}_{0|k+1},\bar{u}_{\text{seq},k+1}^{\star}) - V_s(\bar{x}_{0|k},\bar{u}_{\text{seq},k}^{\star}) < 0, \ \forall \bar{x}_{0|k} \neq 0.
    \end{equation*}}Therefore, the nominal system is asymptotically stable for the origin. {Then, based on (1), (13), and (14), $e(k)$ propagates as
\begin{equation*}
e(k+1) = A_K
e(k) + (\Delta_r - \bar{\Delta}(k))
\begin{bmatrix}
\bar{x}(k) \\ \bar{u}(k)
\end{bmatrix} + w(k),
\end{equation*}
where $A_K := \Delta_r [I \ K^{\top}]^{\top}$ and $K$ is the feedback gain designed to stabilize the true system. Recall that $w(k) \in \mathcal{W}$, $\bar{x}(k) \in \mathcal{X}$, $\bar{u}(k) \in \mathcal{U}$, and $\bar{\Delta}(k) \in \mathcal{C}_k^{\star}\subseteq\mathcal{C}_0^{\star}$, where $\mathcal{W},\mathcal{X},\mathcal{U},\mathcal{C}_0^\star$ are bounded sets. Therefore, there exist $\delta_x,\delta_u,\delta_\Delta>0$ such that $\Vert \bar{x}(k) \Vert_2^2 \leq \delta_x$, $\Vert \bar{u}(k) \Vert_2^2 \leq \delta_u$, and $\Vert \Delta_r - \bar{\Delta}(k) \Vert_2^2 \leq \delta_\Delta$. Additionally, according to the definition of $\mathcal{W}$, we have $\Vert w(k) \Vert_2^2 \leq \delta$. Then, we can obtain that $$\Big\Vert \underbrace{(\Delta_r - \bar{\Delta}(k))
\begin{bmatrix}
\bar{x}(k) \\ \bar{u}(k)
\end{bmatrix} + w(k)}_{:=\rho(k)} \Big\Vert_2^2 \leq \delta_\Delta(\delta_x+\delta_u) + \delta:=\delta_s,$$
which implies
\begin{align*}
	\Vert e(N) \Vert_2^2 &= \Vert A_K^N e(0) + \sum\nolimits_{i=0}^{N-1} A_K^i \rho(N-1-i)\Vert_2^2 \\
	&\leq \Vert A_K^N e(0) \Vert_2^2 + \sum\nolimits_{i=0}^{N-1}\Vert A_K^i \rho(N-1-i) \Vert_2^2.
\end{align*}
Subsequently, recall that $A_K$ is Hurwitz. Suppose that an eigendecomposition of $A_K$ is $QEQ^{-1}$, and $\lambda_{\max} < 1$ is the spectral radius of $A_K$. It can be obtained that $\lim\limits_{N \rightarrow +\infty}\Vert A_K^N e(0) \Vert_2^2 = 0.$ Moreover, we have
\begin{align*}
&\lim\limits_{N \rightarrow +\infty}\sum\nolimits_{i=0}^{N-1}\Vert A_K^i \rho(N-1-i) \Vert_2^2 \\ &\leq \lim\limits_{N \rightarrow +\infty}\sum\nolimits_{i=0}^{N-1} \max_{\rho \in \{z | \Vert z \Vert_2^2 \leq \delta_s\}}\Vert A_K^i \rho \Vert_2^2
\end{align*}
as $\Vert \rho(k) \Vert_2^2 \leq \delta_s$. Since $A_K^i = QE^iQ^{-1}$, we have $$\max_{\rho \in \{z | \Vert z \Vert_2^2 \leq \delta_s\}}\Vert A_K^i \rho \Vert_2^2 \leq \delta_Q \delta_s \lambda_{\max}^i$$ where $\delta_Q := \Vert Q \Vert_2^2\Vert Q^{-1} \Vert_2^2$. Thus, we can obtain
\begin{align*}
&\lim\limits_{N \rightarrow +\infty}\sum\nolimits_{i=0}^{N-1}\Vert A_K^i \rho(N-1-i) \Vert_2^2 \\ &\leq \delta_Q \delta_s\lim\limits_{N \rightarrow +\infty}\sum\nolimits_{i=0}^{N-1} \lambda_{\max}^i \\ &= \delta_Q \delta_s(1-\lambda_{\max})^{-1}.
\end{align*}
Now, we can conclude that
\begin{align*}
&\lim\limits_{N \rightarrow +\infty} \Vert e(N) \Vert_2^2 \\ &\leq \lim\limits_{N \rightarrow +\infty}\Vert A_K^N e(0) \Vert_2^2 + \!\!\lim\limits_{N \rightarrow +\infty}\sum\nolimits_{i=0}^{N-1}\Vert A_K^i \rho(N\!-\!1\!-\!i) \Vert_2^2\\ & = \delta_Q \delta_s(1-\lambda_{\max})^{-1}.
\end{align*}
Recall that $x(k)=\bar{x}(k)+e(k)$. Therefore, the true system state will converges to a bounded set, which is bounded by $\{z | \Vert z \Vert_2^2 \leq \delta_Q \delta_s(1-\lambda_{\max})^{-1}\}$.}
\end{proof}

\begin{remark}
	By considering the set-valued ASE, we ensure that the ATDPC stabilizes the true system. {Note that the size of $E_{i|k}$ directly affects the solution space of the control optimization problem \eqref{eq:TDPC_OPT}, and that $E_{i|k}$ is positively correlated with the size of $\mathcal{C}_k^{\star}$. By incorporating the closed-loop active learning strategy, a series of contracting $\mathcal{C}_k^{\star}$ is obtained, which enlarges the solution space of \eqref{eq:TDPC_OPT}, thereby enabling less conservative control inputs and reducing control cost $V_s$.}
\end{remark}

\section{Numerical Examples}
\label{sec:Numerical Examples}
In this section, we illustrate the advantage and effectiveness of our results by numerical simulations.

\subsection{Visualization of {Open-Loop Active Learning}}
To visualize the efficiency of the proposed active learning strategy, we consider a scalar system \eqref{eq:sys_real} with $\mathcal{A}=1$, $\mathcal{B}=1$, and $\delta=1$. Assume that the input constraint for {open-loop active learning} is $\mathcal{U}_{o}=\{u \mid \Vert u \Vert_2 \leq 5\}$. As a comparison, we apply the input design strategies in \cite{2022_Beyond} and \cite{muller2023inputdesign}, which are denoted by ID-PE and ID-$\alpha$PE, respectively. The ID-PE selects inputs online to ensure the data matrix achieves full rank properties, while the ID-$\alpha$PE designs an offline input sequence that provides a lower bound on the minimum singular value of the data matrix.

\begin{figure}[t!]
	\centerline{\includegraphics[width=.9\columnwidth]{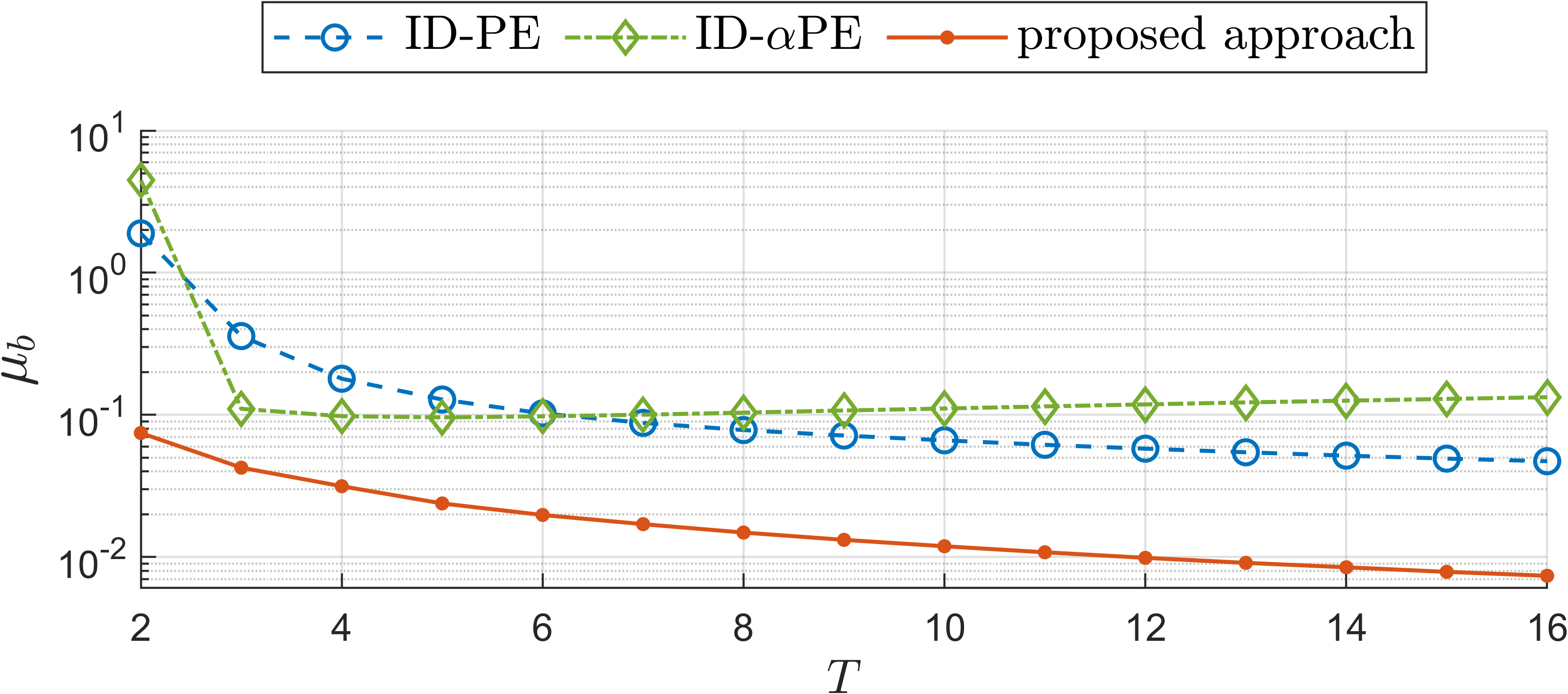}}
	\caption{{Volume of the ASE obtained by different approaches.}}
	\label{fig_modeling_vol}
	\vspace{4ex}
	\centerline{\includegraphics[width=.9\columnwidth]{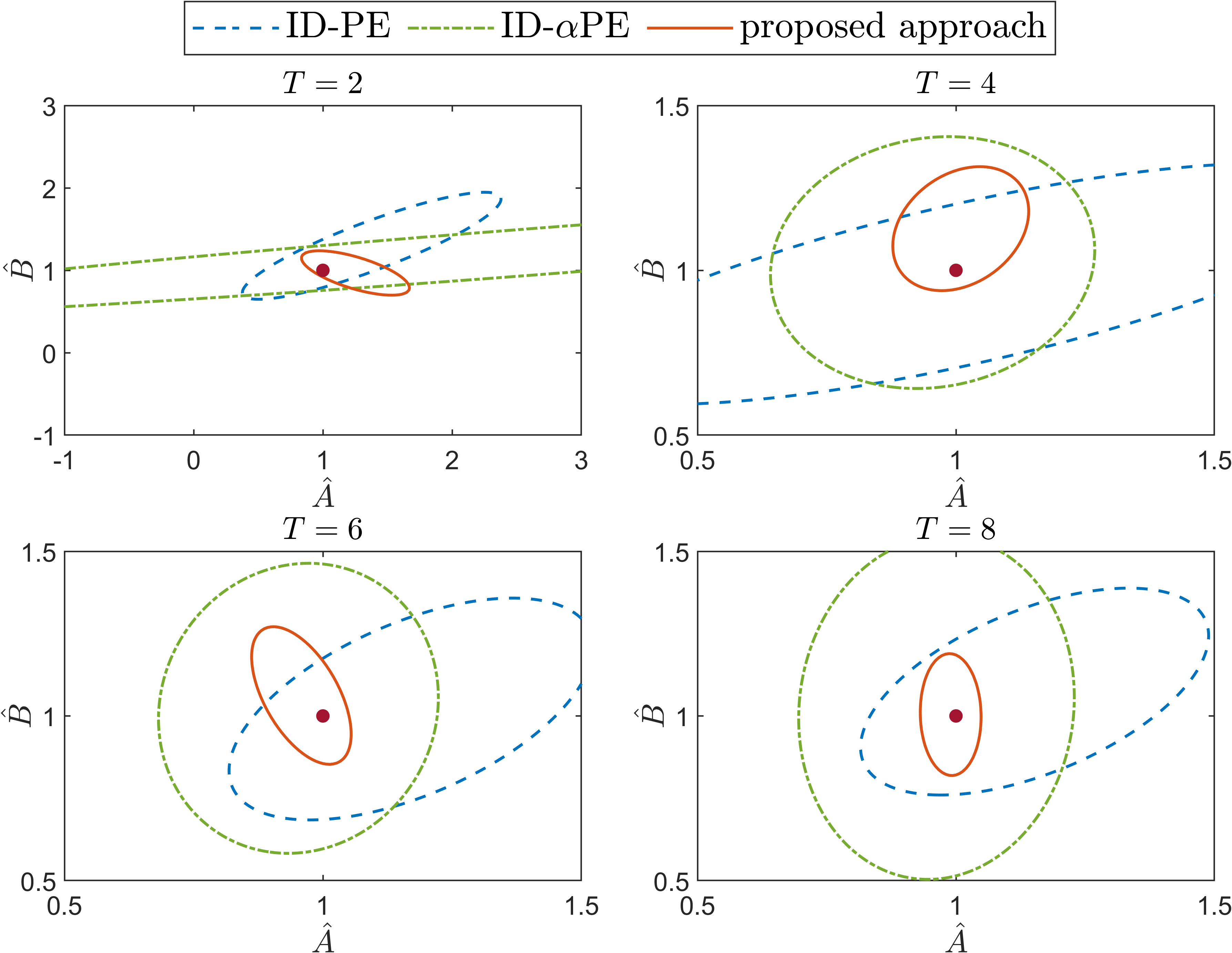}}
	\caption{Space of the ASE obtained by different approaches. The dark brown dot represents the true system.}
	\label{fig_modeling_space}
\end{figure}

To better represent the volume of the learned ASE, we define $\mu_b:=\det(F_e^{\top}E_e^{-1}F_e - G_e)^{{n_h}/{2}} \det(E_e^{-1})^{n_x/2}$. According to Lemma \ref{Lem:MoME}, $\mu_b = \mu(\mathcal{C}) / \beta$, where $\mu(\mathcal{C})$ is the exact volume of the ASE, and $\beta$ is a constant multiplier related only to $n_x$ and $n_u$. Fig. \ref{fig_modeling_vol} illustrates the average value of $\mu_b$ of the ASE obtained by ID-PE and ID-$\alpha$PE compared to those obtained by the proposed approach in a batch of 100 simulations, as the number of data samples increases from $T\!=\!2$ to $T\!=\!16$. Furthermore, we set $Q=1$ and $R=0.1$. In an instance of the simulation, the matrices obtained according to Lemma \ref{lem:recursive} are $Y = 0.2669$ and $L = -0.2676$. Additionally, we visualize the space of the learned ASEs in Fig. \ref{fig_modeling_space}, where the performance of the proposed approach can be observed more intuitively. {The proposed approach achieves a smaller average volume and is capable of learning a smaller ASE with fewer data samples, thereby verifying the effectiveness of the theoretical results in Lemma \ref{lem:lambda_square} and Theorem \ref{thm:ugen_contraction}, where the volume is considered as an informativity metric in the design of inputs.}

\subsection{Illustration of Active Learning and Control}
{In this section, we delve into the consideration of a more complex example to represent the learning and control performance of ATDPC. We consider system \eqref{eq:sys_real} with
\begin{equation*}
    \mathcal{A} \!=\!
    \begin{bmatrix}
        0.850 & -0.038 & -0.038 \\ 0.735 & 0.815 & 1.594 \\ -0.664 & 0.697 & -0.064
    \end{bmatrix}\!,\
     \mathcal{B} \!=\!
    \begin{bmatrix}
       1.431 & 0.705 \\ 1.62 & -1.129 \\ 0.913 & 0.369
    \end{bmatrix}\!,
\end{equation*}
and we let $Q = I_3$, $R = I_2 / 10$, and $\delta = 0.16$. The constraint for {open-loop active learning} is $\mathcal{U}_{o}\!=\!\{u \in \mathbb{R}^2 \mid \Vert u \Vert_2 \leq 2\}$. We first analyze the influence of learning strategies and number of data samples to the feasibility of ATDPC. For each number of collected data samples $T\in[5,24]$, we solve a batch of $100$ feasibility problem of ATDPC, count the number of feasible instances by $n_f$, and display ${n_f}/{100}$ for each approach in Fig. \ref{fig_feas_random}. The proposed approach allows the design of ATDPC {in more than $80$ instances out of all $100$ simulations} ($n_f \geq 0.8$) when $T \geq 5$ and ensures the feasibility ($n_f = 1$) when $T \geq 7$. In contrast, ID-PE enables $n_f \geq 0.8$ when $T \geq 19$, and ID-$\alpha$PE enables $n_f \geq 0.5$ when $T\!=\!10$. This observation is due to the fact that the proposed approach designs inputs online to minimize the volume of ASE. This implies less uncertainty about the system and thereby increases the possibility of enabling ATDPC. The active learning strategy shows its appealing feature of being able to effectively design controllers within limited data samples.}
\begin{figure}[t!]
	\centerline{\includegraphics[width=.9\columnwidth]{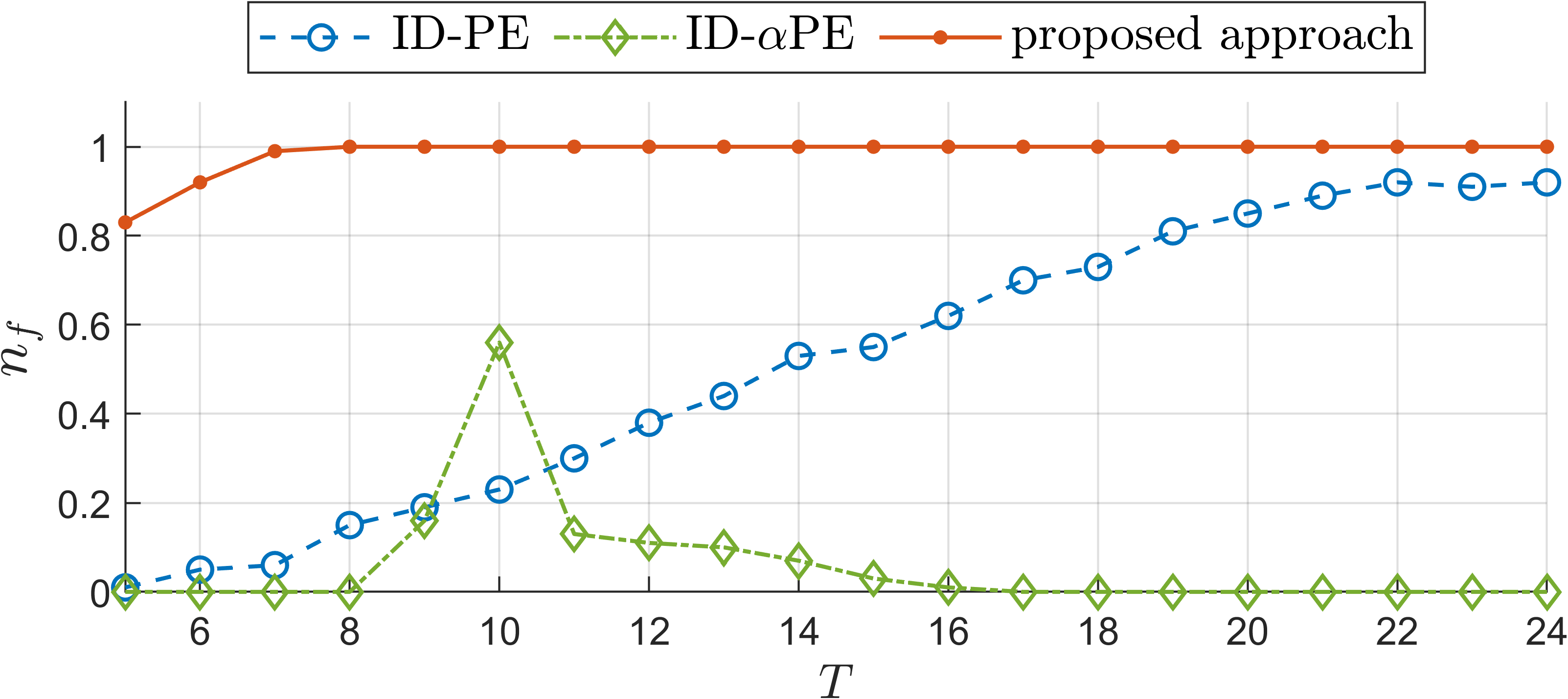}}
	\caption{Ratio of feasible problems using different approaches.}
	\label{fig_feas_random}
\end{figure}

{Furthermore, to illustrate the effectiveness of the proposed active learning strategy in {closed-loop} stage, we compared the data-driven predictive control designed utilizing ID-PE and ID-$\alpha$PE with the proposed ATDPC. We consider a generic reference tracking problem to show the advantages of the proposed active learning approach. The tracking duration is 100, and the reference $x_f(k)$ equals to $[-1.2742 , -5.1937 , -2.7653]^{\!\top}\!$ when $k\in[13,24] \cup [61,72]$, $[5.9144 , 5.1550 , 0.1472]^{\!\top}\!$ when $k\in[37,48]$, and $[0,0,0]^{\!\top}\!$ for other unspecified times instants. Besides, the initial point is set as $x_0 \!=\! [2,1,-1]^{\top}$, the {closed-loop} state and input constraint sets are $\mathcal{X}\!=\!\{x \!\in\! \mathbb{R}^3 \mid \Vert x \Vert_2 \!\leq\! 8\}$ and $\mathcal{U}\!=\!\{u \!\in\! \mathbb{R}^2 \mid \Vert u \Vert_2 \!\leq\! 3\}$, respectively. According to Lemma \ref{lem:recursive}, the matrices $Y$ and $L$ corresponding to the designed controller are
\begin{align*}
	Y\!\!=\!\!\begin{bmatrix}
		0.228   & 0.199  & -0.017\\
		0.199 &   0.454&   -0.12\\
		-0.017  & -0.12  &  0.316
	\end{bmatrix}\!,
	L \!=\! \!\begin{bmatrix}
		   -0.117 & -0.242 \\ -0.383 & 0.484 \\ -0.354 & 0.795
	\end{bmatrix}\!.
\end{align*}
\begin{figure}[t!]
	\centerline{\includegraphics[width=.9\columnwidth]{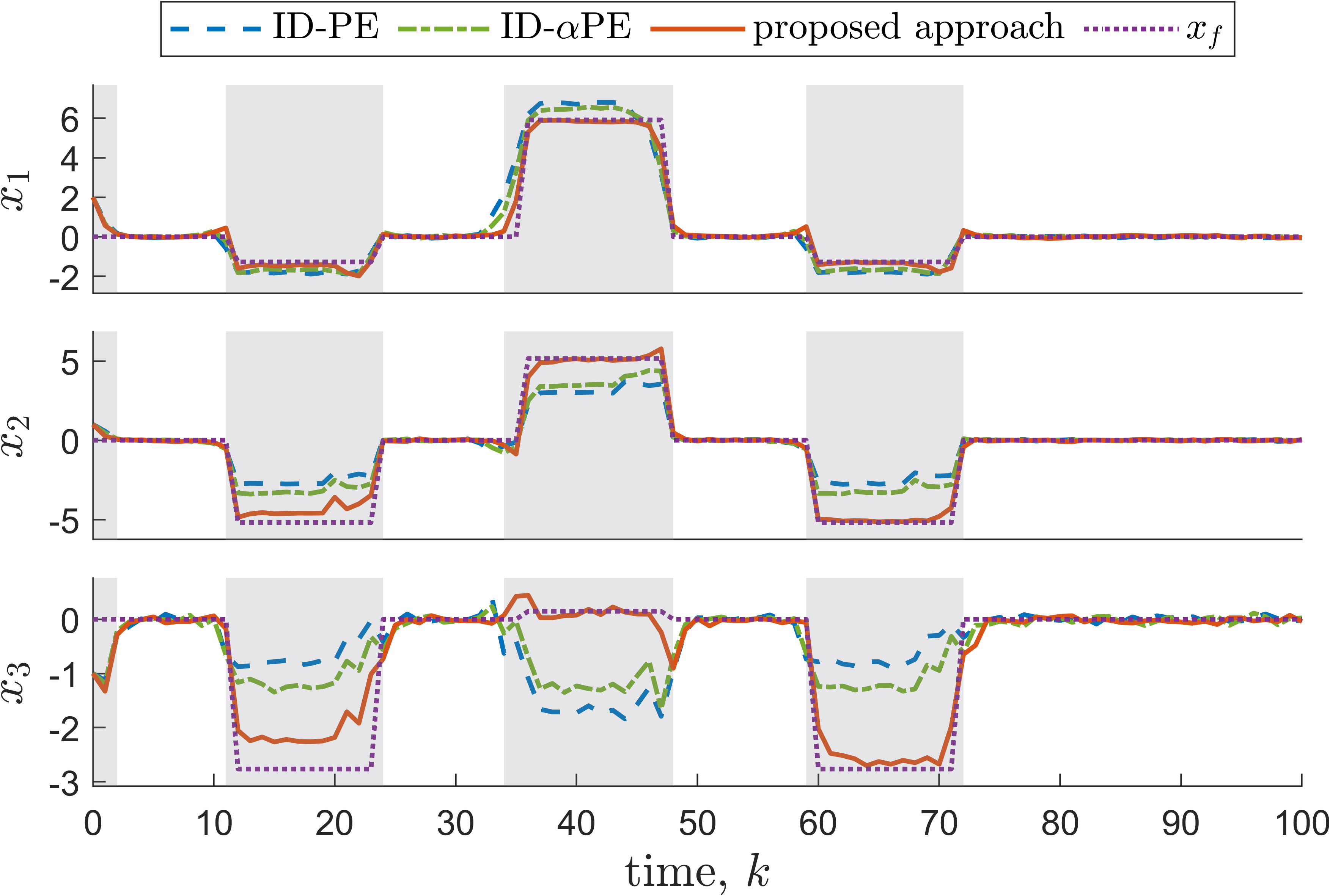}}
	\caption{{Closed-loop state trajectories of different controllers and the tracking reference.}}
	\label{AvR_TDPC}
	\vspace{4ex}
	\centerline{\includegraphics[width=.9\columnwidth]{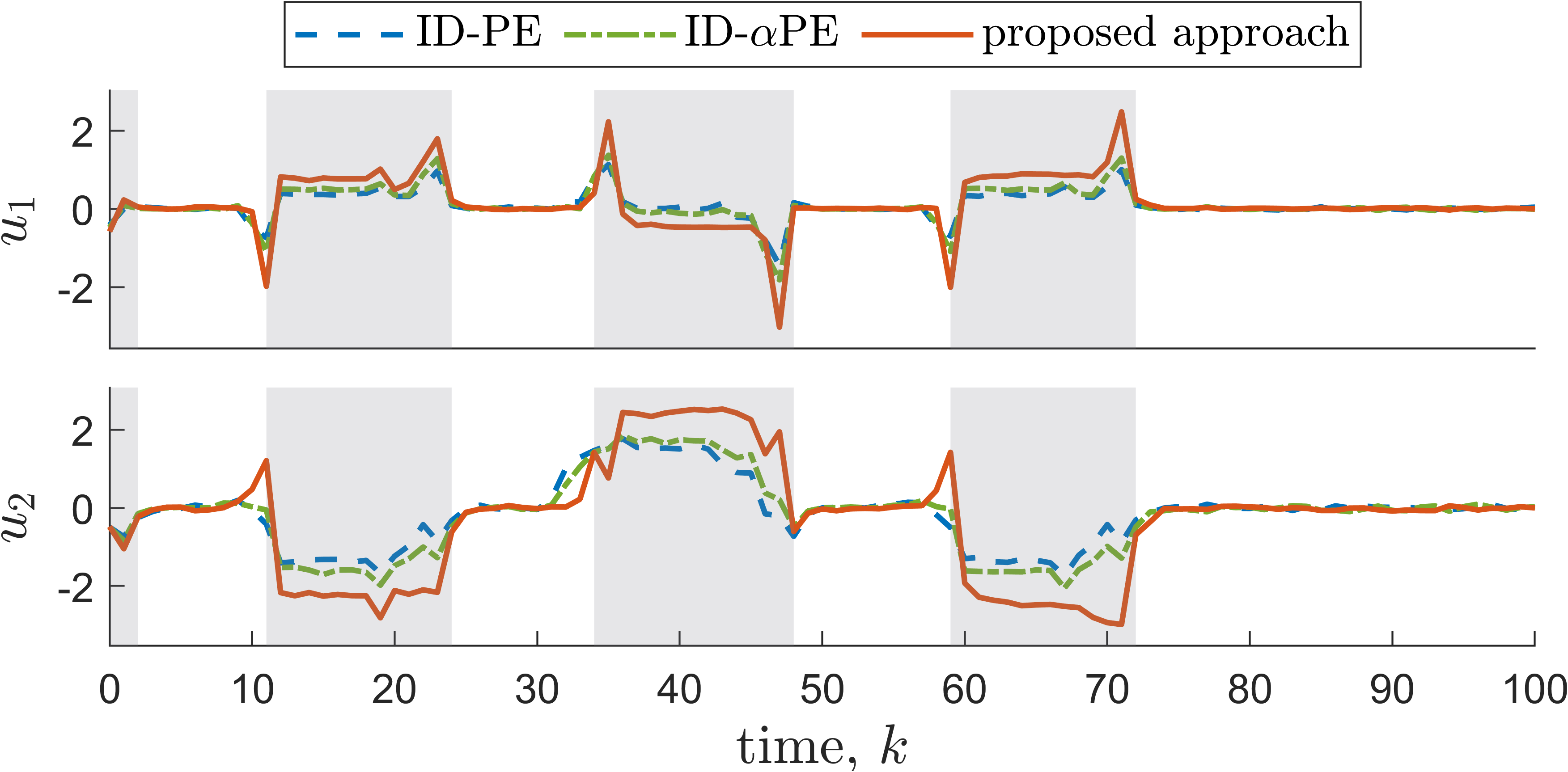}}
	\caption{{Closed-loop input sequence of different controllers.}}
	\label{AvR_TDPC_input}
\end{figure}
When $10$ data samples are collected using each approach, Fig. \ref{AvR_TDPC} shows the trajectories of system state and Fig. \ref{AvR_TDPC_input} shows the control input of different approaches. The gray parts of Fig. \ref{AvR_TDPC} and Fig. \ref{AvR_TDPC_input} represent the time instances when the learning criterion \eqref{eq:ETL_scheduler} is triggered. Note that when the system is stabilized near the origin ($k \geq 73$), the learning algorithm will not be triggered, and no further data samples will be collected. A limit on the number of data samples can be set in practice to prevent the collection of excessive samples.}}

It can be observed that ATDPC provides a less conservative input sequence which enables a better tracking performance. Moreover, although ATDPC cannot track $x_f$ when $k\in[13,24]$, it successfully tracks the same reference $x_f$ when $k\in[61,72]$. To further analyze the control performance of both methods, we introduce the performance index as follows $J_t \!=\! \sum_{k=0}^{100}(x(k)\!-\!x_f(k))^{\top}Q(x(k)\!-\!x_f(k))+u(k)^{\top}Ru(k)$. The performance index for each approach is represented in Table \ref{tab:control_cost}, which implies the improved control performance of ATDPC. {The improved performance is attributed to the ability of the proposed active learning approach to selectively learn from the closed-loop data, thereby verifying the effectiveness of the theoretical results in Theorems \ref{thm:ETL} and \ref{thm:ATDPC_stable}.}

\begin{table}[htbp]
	\centering
	\caption{{Comparison of the tracking costs.}}
	\resizebox{0.4\textwidth}{!}{ 
		\begin{tabular}{*{4}{c}} 
			\toprule
			  & ID-PE & ID-$\alpha$PE & proposed approach \\
			\midrule
			$J_t$ & $145.0326$ & $116.6529$ & $51.3070$ \\
			\bottomrule
		\end{tabular}
		\label{tab:control_cost}
}
\end{table}

\subsection{Robustness and Efficiency of Active Learning Strategies}
\begin{figure}[t!]
	\centerline{\includegraphics[width=.9\columnwidth]{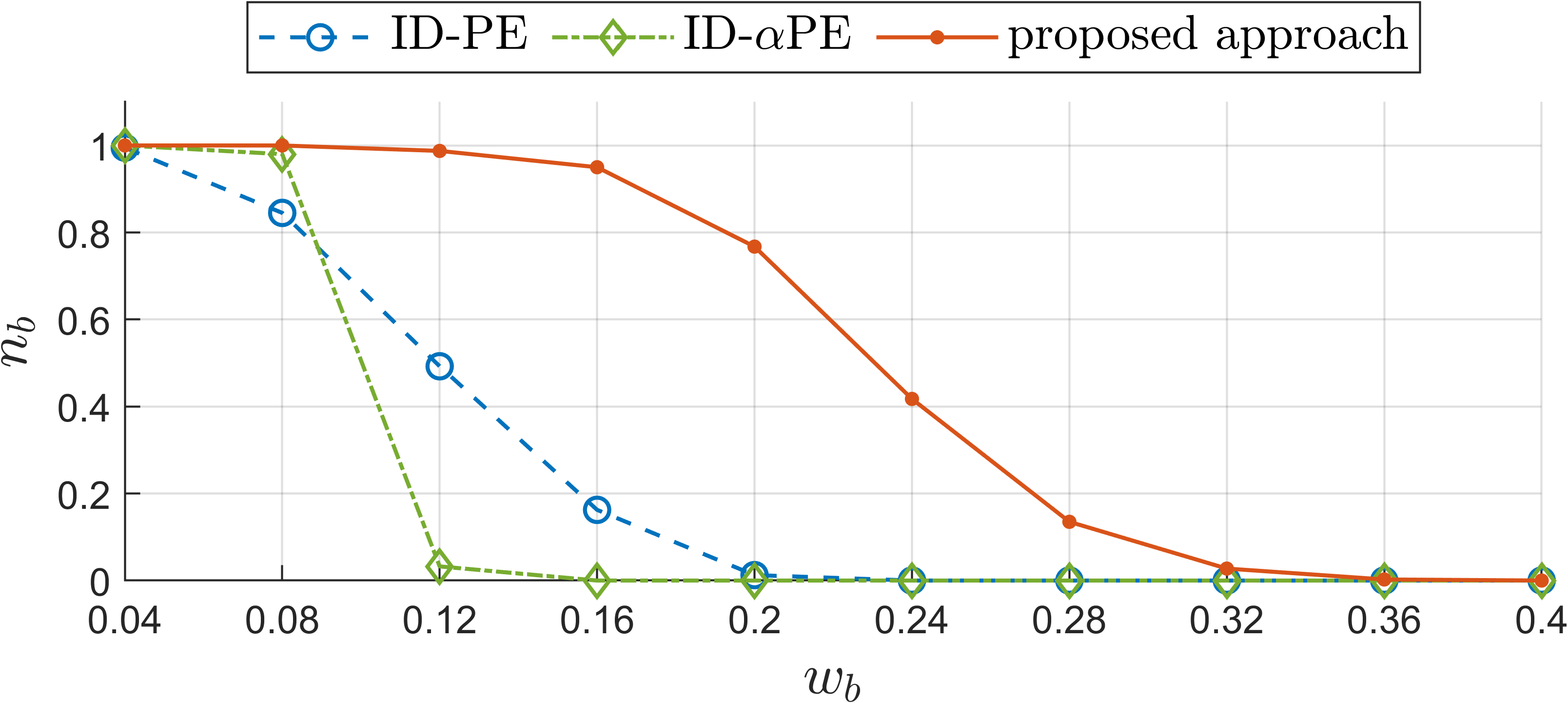}}
	\caption{Ratio of feasible problems under different noise levels.}
	\label{fig:feas_noise_level}
\end{figure}
{In this section, we further delve into the linearized and discretized longitudinal flight control system of Boeing 747 \cite{ishiharaDesignDiscretetimeIntegral1992a}, which is in the form of \eqref{eq:sys_real} with
\begin{equation*}
	\mathcal{A} \!=\!
	\begin{bmatrix}
		0.99 & 0.03 & -0.02 & -0.03 \\ 0.01 & 0.47 & 4.7 & 0 \\ 0.02 & -0.06 & 0.4 & 0 \\ 0.01 & -0.04 & 0.72 & 0.99
	\end{bmatrix}\!\!,
	\mathcal{B} \!=\!
	\begin{bmatrix}
		0.01 & 0.99 \\ -3.44 & 1.66 \\ -0.83 & 0.44 \\ -0.47 & 0.25
	\end{bmatrix}\!\!,
\end{equation*}
$Q \!=\! I_4$, $R \!=\! I_2$, $\mathcal{W}\!=\!\{w \in \mathbb{R}^4 \mid \Vert w \Vert_2 \leq w_b^2\}$, and $\mathcal{U}_{o}\!=\!\{u \in \mathbb{R}^2 \mid \Vert u \Vert_2 \leq 2\}$. To show the efficiency of the proposed approach, we first analyze its performance under different noise levels. For each $w_b$, we set and number of collected data samples $T=16$, solve a batch of 400 feasibility problems of ATDPC using \eqref{eq:csf_criteria_1} and \eqref{eq:csf_criteria_2} with the initial system state subjects to Gaussian distribution, count the number of feasible instances by $n_b$, and display the ratio of feasible instances $n_b / 400$ in Fig. \ref{fig:feas_noise_level}. It can be observed that the proposed active learning approach outperforms the comparative approaches in different noise levels and is able to handle a higher noise level.

Furthermore, to show the efficiency of the proposed approach, the computation times to solve \eqref{eq:ETL_scheduler} and \eqref{eq:ETL_strategy} with different number of collected data samples $T$ is presented in Fig. \ref{fig:computation_time}. The learning criterion \eqref{eq:ETL_scheduler} has the appealing feature that the associated computation time barely changes with the number of data samples whereas the time cost of solving \eqref{eq:ETL_strategy} grows linearly with the number of data samples.
\begin{figure}[t!]
	\centerline{\includegraphics[width=.9\columnwidth]{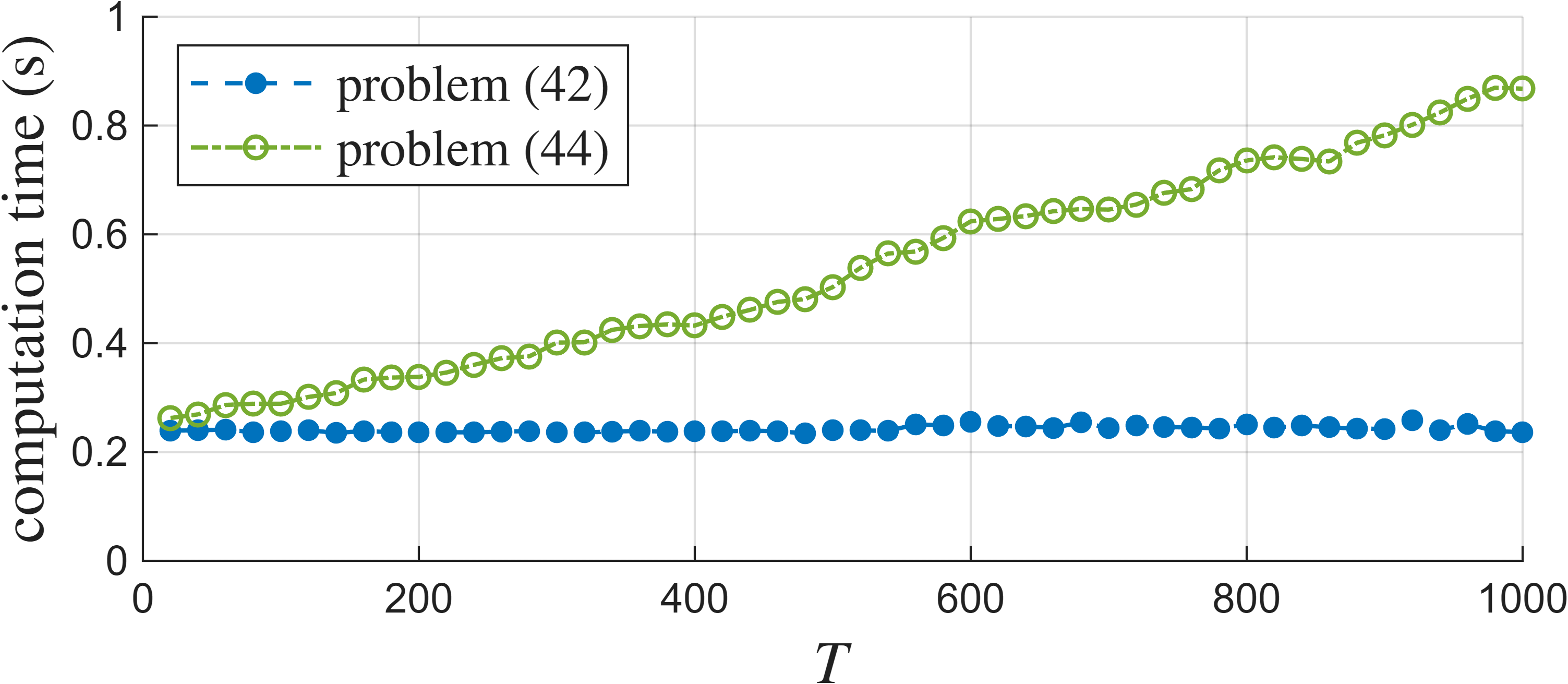}}
	\caption{Computation time comparison between solving \eqref{eq:ETL_scheduler} and \eqref{eq:ETL_strategy}.}
	\label{fig:computation_time}
\end{figure}

}

\section{Conclusion}
\label{sec:conclusion}
In this work, we investigated efficient data acquisition strategies for generating and selecting informative datasets for linear systems with bounded disturbances. {In the open-loop stage, we proposed an active input design strategy to obtain a small admissible system set. In the closed-loop stage, we introduced a learning criterion to selectively collect data samples that contribute to further reducing the volume of admissible system set. We have followed up by applying our active learning strategies to predictive control. Future research includes extending the proposed strategies to systems with stochastic disturbances and to distributed settings.}

\section*{Appendix}
\subsection{Proof of Lemma 3}
\begin{proof}
	Based on the definition of $\Xi(H,\dot{X},\Lambda)$, inequality \eqref{eq:admissible_system} is equivalent to
	\begin{equation} \label{eq:ase_ellip}
		\Delta E_e \Delta^{\top} + \Delta F_e + (\Delta F_e)^{\top} + {G_e} \preceq 0
	\end{equation}
	where
	\begin{equation} \label{eq:def_ellip_matrix}
		E_e \!:=\! H\Lambda H^{\top}\!, F_e \!:=\! -H\Lambda \dot{X}^{\top}\!, G_e \!:=\! \dot{X}\Lambda \dot{X}^{\top} \!- \mathrm{tr}(\Lambda) \delta I.
	\end{equation}
	The inequality in \eqref{eq:ase_ellip} is in the same form with \eqref{eq:matrix_ellipsoid_1}. Then $\mathcal{C}(H,\dot{X},\Lambda)$ is a matrix ellipsoid if $E_e \succ 0$ and $F_e^{\top}E_e^{-1}F_e - G_e \succ 0$. Note that $E_e \succ 0$ follows directly from the condition $H \Lambda H^{\top} \succ 0$. Besides, we can obtain
	\begin{equation} \label{eq:elli_cal}
		\begin{split}
			F_e^{\!\top}\!E_e^{-1}\!F_e \!-\! G_e\!&=\!\dot{X}\Lambda H^{\!\top}\!\!E_e^{-1}H \Lambda \dot{X}^{\!\top}\!\! +\! \mathrm{tr}(\Lambda) \delta I \!-\! \dot{X}\Lambda \dot{X}^{\!\top}\!\\
			&=\!\dot{X}(\Lambda H^{\!\top}\!E_e^{-1}\!H \Lambda\!-\!\Lambda) \dot{X}^{\!\top} \!\!+ \!\mathrm{tr}(\Lambda) \delta I.
		\end{split}
	\end{equation}
	By invoking \eqref{eq:matrix_sys}, we have
	\begin{equation} \label{eq:elli_cal_1}
		\begin{split}
			&\dot{X}(\Lambda H^{\top}E_e^{-1}H \Lambda-\Lambda) \dot{X}^{\top} \\
			&=(\Delta_r H+W)(\Lambda H^{\top}E_e^{-1}H \Lambda-\Lambda)(\Delta_r H+W)^{\top}.
		\end{split}
	\end{equation}
	Since $E_e \succ 0$, it holds that
	\begin{equation*}
		{H(\Lambda H^{\top}E_e^{-1}H \Lambda - \Lambda) = E_eE_e^{-1}H \Lambda -H \Lambda = 0.}
	\end{equation*}
	We can conclude that \eqref{eq:elli_cal_1} equals $W(\Lambda H^{\top}E_e^{-1}H \Lambda-\Lambda)W^{\top}\!$, and \eqref{eq:elli_cal} is equal to $W\Lambda H^{\top}E_e^{-1}H \Lambda W^{\top} \!+ \mathrm{tr}(\Lambda) \delta I -W\Lambda W^{\top}\!$. Since $E_e  \succ  0$, it follows that $W\Lambda H^{\top} E_e^{-1} H\Lambda W^{\top}  \succeq  0$. Recall that $H\Lambda H^{\top} \succ 0$ and $\Lambda \in \mathcal{D}_n$. There exists $j\in \mathbb{Z}_{[1,n]}$ such that $\lambda_j >0$. Moreover, it holds that
	\begin{equation*}
		\mathrm{tr}(\Lambda)\delta I - W\Lambda W^{\top} =
		\textstyle\sum\nolimits_{i=1}^{n}\lambda_i (\delta I - w_s(d_i)w_s(d_i)^{\!\top}).
	\end{equation*}
	{Since $w_s(d_i) \in \mathcal{W}$, we have $(\delta I - w_s(d_i)w_s(d_i)^{\!\top}) \succ 0$ for all $i \in \mathbb{Z}_{[1,n]}$. Thus, we can obtain $$\textstyle\sum\nolimits_{i=1}^{n}\lambda_i (\delta I - w_s(d_i)w_s(d_i)^{\!\top}) \!\succeq\! \lambda_j(\delta I - w_s(d_j)w_s(d_j)^{\!\top}) \!\succ\! 0,$$which implies $F_e^{\top}E_e^{-1}F_e - G_e \succ 0$.}
\end{proof}
\subsection{Proof of Lemma 11}
\begin{proof}
	For the ``if'' part, suppose that there exist matrices $Y=Y^{\top} \succ 0$, $L$ and scalar $\alpha \geq 0$ satisfying \eqref{eq:csf_criteria_1} and \eqref{eq:csf_criteria_2}. By making use of the fact that $R \succ 0$ and computing the Schur complement of \eqref{eq:csf_criteria_1} with respect to $R^{-1}$, we get $G_s \succ 0$ where
	$G_{s}:=Y-YQY-L^{\top}RL.$ Next, since $Q \succ 0$, the Schur complement of \eqref{eq:csf_criteria_2} with respect to $R^{-1}$ implies that \eqref{eq:csf_criteria_2} is equivalent to
	{\begin{equation} \label{eq:thm2_proof_1}
			\setlength{\arraycolsep}{2.3pt}
			\begin{bmatrix}
				\begin{array}{ccc:cc}
					Y & 0 & 0 & 0 & 0 \\
					0 & 0 & 0 & 0 & Y \\
					0 & 0 & 0 & 0 & L \\ \hdashline
					0 & 0 & 0 & \raisebox{0pt}[10pt]{$R^{-1}$}\! & L \\
					0 & Y^{\!\top}\! & L^{\!\top}\! & L^{\!\top}\! & Y\!-\!YQY \\
				\end{array}
			\end{bmatrix}
			\!-\alpha
			\begin{bmatrix}
				\Xi(H,\dot{X},\Lambda) & 0 \\
				0 & 0
			\end{bmatrix}
			\!\succeq \!0.
		\end{equation}
	}Then, note that the second diagonal block of \eqref{eq:thm2_proof_1} is positive definite. By computing the Schur complement of \eqref{eq:thm2_proof_1} with respect to the second diagonal block, we obtain that
	\begin{equation} \label{eq:csf_cse_criteria}
		\underbrace{
			\begin{bmatrix}
				\begin{array}{c:c}
					Y & 0 \\  \hdashline
					0  & \raisebox{0pt}[18pt]{$
						-\begin{bmatrix}
							Y \\ L
						\end{bmatrix}
						G_s^{-1}
						\begin{bmatrix}
							Y \\ L
						\end{bmatrix}^{\top}$}
				\end{array}
		\end{bmatrix}}_{:=M_s}  - \alpha \Xi(H,\dot{X},\Lambda) \succeq 0.
	\end{equation}
	By invoking Lemma \ref{lem:S-lemma}, \eqref{eq:csf_cse_criteria} implies
	\begin{equation} \label{eq:CSF_4}
		\begin{bmatrix}
			I \\
			\Delta^{\top}
		\end{bmatrix}^{\top}
		M_s
		\begin{bmatrix}
			I \\ \Delta^{\top}
		\end{bmatrix}
		\succeq 0, \ \forall \Delta \in \mathcal{C}(H,\dot{X},\Lambda),
	\end{equation}
	which is equivalent to
	\begin{equation} \label{eq:CSF_3.2}
		Y \!-\! (AY\!+\!BL)G_s^{-1}(AY\!+\!BL)^{\top} \succeq 0, \forall \Delta \!\in\! \mathcal{C}(H,\dot{X},\Lambda).
	\end{equation}
	Recall that $G_s \succ 0$. Then, by standard Schur complement argument, {\eqref{eq:CSF_3.2}} implies that
	\begin{equation} \label{eq:CSF_2}
		\begin{bmatrix}
			Y-YQY-L^{\top}RL & (AY+BL)^{\top} \\
			AY+BL & Y
		\end{bmatrix}
		\succeq 0
	\end{equation}
	holds for all $\Delta \in \mathcal{C}(H,\dot{X},\Lambda)$. By computing the Schur complement of \eqref{eq:CSF_2} with respect to the second diagonal block, we can obtain that
	\begin{equation} \label{eq:CSF_1}
		Y-(AY+BL)^{\top}Y^{-1}(AY+BL)-YQY-L^{\top}RL \succeq 0
	\end{equation}
	holds for $\forall \Delta \in \mathcal{C}$. By letting $P_T=Y^{-1}$ and $K=LP_T$,
	\begin{equation} \label{eq:convergence_stale}
		P_T - (A+BK)^{\top} P_T (A+BK) \succeq Q + K^{\top} R K
	\end{equation}
	holds for all $\Delta \in \mathcal{C}(H,\dot{X},\Lambda)$. Finally, we can conclude that \eqref{eq:lem_recursive_1} holds for all $x \in \mathbb{R}^{n_x}$ (which includes $\mathcal{X}_T$) and $ {\Delta} \in \mathcal{C}$.

	For ``only if'' part, since \eqref{eq:lem_recursive_1} holds for all $x \in \mathcal{X}_T$, we have
	\begin{equation}
		\label{eq:proof_recursive_2}
		\begin{bmatrix}
			1 \\ x
		\end{bmatrix}^{\top}
		\hspace{-2pt}
		\begin{bmatrix}
			0 & 0 \\
			0 & P_T- {A}_K^{\top}P_T  {A}_K-Q-K^{\top}RK
		\end{bmatrix}
		\begin{bmatrix}
			1 \\ x
		\end{bmatrix}
		\succeq 0
	\end{equation}
	for $\forall \Delta \in \mathcal{C}$ and $\forall x \in \mathbb{R}^{n_x}$ with
	\begin{equation}
		\label{eq:proof_recursive_3}
		\begin{bmatrix}
			1 \\ x
		\end{bmatrix}^{\top}
		\hspace{-2pt}
		\begin{bmatrix}
			L_T & 0 \\
			0 & -P_T
		\end{bmatrix}
		\begin{bmatrix}
			1 \\ x
		\end{bmatrix}
		\succeq 0
	\end{equation}
	where $A_K:=A+BK$. Note that when $x$ is a zero vector, \eqref{eq:proof_recursive_3} is positive since $L_T > 0$. Then Lemma \ref{lem:S-lemma} asserts that the claim we proposed holds if and only if there exists $\beta \geq 0$ such that
	\begin{equation}
		\label{eq:proof_recursive_4}
		\begin{bmatrix}
			0 &\hspace{-.5em} 0 \\
			0 &\hspace{-.5em} P_T\!-\! {A}_K^{\top}P_T  {A}_K\!-\!Q\!-\!K^{\!\top}\!RK
		\end{bmatrix}
		\!-\! \beta
		\begin{bmatrix}
			L_T &\hspace{-.5em} 0 \\
			0 &\hspace{-.5em} -P_T
		\end{bmatrix}
		\!\succeq\! 0.
	\end{equation}
	Recall that by $L_T > 0$, we have \eqref{eq:proof_recursive_4} holds if and only if $\beta = 0$ and $P_T- {A}_K^{\top}P_T {A}_K-Q-K^{\top}RK
	\succeq 0$ for all $\Delta \in \mathcal{C}$. This statement is equivalent to \eqref{eq:convergence_stale} and also proves \eqref{eq:csf_criteria_1} and \eqref{eq:csf_criteria_2} by letting $Y\!=\!P_{T}^{-1}$ and $L\!=\!KY$.
\end{proof}

\bibliographystyle{IEEEtran}
\bibliography{IEEEabrv,myrefs}

\begin{thebibliography}{10}
\providecommand{\url}[1]{#1}
\csname url@samestyle\endcsname
\providecommand{\newblock}{\relax}
\providecommand{\bibinfo}[2]{#2}
\providecommand{\BIBentrySTDinterwordspacing}{\spaceskip=0pt\relax}
\providecommand{\BIBentryALTinterwordstretchfactor}{4}
\providecommand{\BIBentryALTinterwordspacing}{\spaceskip=\fontdimen2\font plus
\BIBentryALTinterwordstretchfactor\fontdimen3\font minus
  \fontdimen4\font\relax}
\providecommand{\BIBforeignlanguage}[2]{{%
\expandafter\ifx\csname l@#1\endcsname\relax
\typeout{** WARNING: IEEEtran.bst: No hyphenation pattern has been}%
\typeout{** loaded for the language `#1'. Using the pattern for}%
\typeout{** the default language instead.}%
\else
\language=\csname l@#1\endcsname
\fi
#2}}
\providecommand{\BIBdecl}{\relax}
\BIBdecl

\bibitem{hou2013from_mbc2ddc}
Z.-S. Hou and Z.~Wang, ``From model-based control to data-driven control:
  Survey, classification and perspective,'' \emph{Inf. Sci.}, vol. 235, pp.
  3--35, Jan. 2013.

\bibitem{depersis2020formulas}
C.~De~Persis and P.~Tesi, ``Formulas for data-driven control: Stabilization,
  optimality, and robustness,'' \emph{IEEE Trans. Autom. Control}, vol.~65,
  no.~3, pp. 909--924, Mar. 2020.

\bibitem{baggio2021NC_DDC}
G.~Baggio, D.~S. Bassett, and F.~Pasqualetti, ``Data-driven control of complex
  networks,'' \emph{Nat. Commun.}, vol.~12, Mar. 2021, {A}rt. no. 1429.

\bibitem{johanssonQuadrupletankProcessMultivariable2000}
K.~Johansson, ``The quadruple-tank process: A multivariable laboratory process
  with an adjustable zero,'' \emph{IEEE Trans. Control Syst. Technol}, vol.~8,
  no.~3, pp. 456--465, May 2000.

\bibitem{spongNonlinearControlReaction2001a}
M.~W. Spong, P.~Corke, and R.~Lozano, ``Nonlinear control of the reaction wheel
  pendulum,'' \emph{Automatica}, vol.~37, no.~11, pp. 1845--1851, Nov. 2001.

\bibitem{LANG1995331}
K.~Lang, ``Newsweeder: Learning to filter netnews,'' in \emph{Proc. Int. Conf.
  Mach. Learn.}, 1995, pp. 331--339.

\bibitem{settles.tr09}
B.~Settles, ``Active learning literature survey,'' Dept. Comput. Sci., Univ.
  Wisconsin-Madison, Madison, WI, USA, Tech. Rep. 1648, Mar. 2009.

\bibitem{geversInputDesignOpenloop2006}
M.~Gevers and X.~Bombois, ``Input design: From open-loop to control-oriented
  design,'' \emph{IFAC Proceedings Volumes}, vol.~39, no.~1, pp. 1329--1334,
  2006.

\bibitem{hjalmarssonExperimentDesignClosedloop2005}
H.~Hjalmarsson, ``From experiment design to closed-loop control,''
  \emph{Automatica}, vol.~41, no.~3, pp. 393--438, Mar. 2005.

\bibitem{Lewis1994uncertaintysampling}
D.~D. Lewis and W.~A. Gale, ``A sequential algorithm for training text
  classifiers,'' in \emph{Proc. Int. ACM-SIGIR Conf. Res. Dev. Inf. Retr.},
  Dublin, Ireland, 1994, pp. 3--12.

\bibitem{pmlr-v162-raj22a}
A.~Raj and F.~Bach, ``Convergence of uncertainty sampling for active
  learning,'' in \emph{Proc. Int. Conf. Mach. Learn.}, vol. 162, 2022, pp.
  18\,310--18\,331.

\bibitem{10.1145/1015330.1015349}
H.~T. Nguyen and A.~Smeulders, ``Active learning using pre-clustering,'' in
  \emph{Proc. Int. Conf. Mach. Learn.}, 2004, p.~79.

\bibitem{Yu2006Active}
K.~Yu, J.~Bi, and V.~Tresp, ``Active learning via transductive experimental
  design,'' in \emph{Proc. Int. Conf. Mach. Learn.}, 2006, pp. 1081--1088.

\bibitem{PRONZATO2008303}
L.~Pronzato, ``Optimal experimental design and some related control problems,''
  \emph{Automatica}, vol.~44, no.~2, pp. 303--325, Feb. 2008.

\bibitem{Abraham2019activelearning}
I.~Abraham and T.~D. Murphey, ``Active learning of dynamics for data-driven
  control using {K}oopman operators,'' \emph{IEEE Trans. Robot.}, vol.~35,
  no.~5, pp. 1071--1083, Oct. 2019.

\bibitem{2022_Beyond}
H.~J. van Waarde, ``Beyond persistent excitation: Online experiment design for
  data-driven modeling and control,'' \emph{IEEE Control Syst. Lett.}, vol.~6,
  pp. 319--324, Apr. 2022.

\bibitem{Monica2022online}
M.~Rotulo, C.~{De Persis}, and P.~Tesi, ``Online learning of data-driven
  controllers for unknown switched linear systems,'' \emph{Automatica}, vol.
  145, Nov. 2022, {A}rt. no. 110519.

\bibitem{muller2023inputdesign}
M.~Alsalti, V.~G. Lopez, and M.~A. Müller, ``On the design of persistently
  exciting inputs for data-driven control of linear and nonlinear systems,''
  \emph{IEEE Control Syst. Lett.}, vol.~7, 2023.

\bibitem{pmlr-v125-wagenmaker20a}
A.~Wagenmaker and K.~Jamieson, ``Active learning for identification of linear
  dynamical systems,'' in \emph{Proc. Conf. Learn. Theory}, 2020, pp.
  3487--3582.

\bibitem{taylor2021ALinRobotics}
A.~T. Taylor, T.~A. Berrueta, and T.~D. Murphey, ``Active learning in robotics:
  A review of control principles,'' \emph{Mechatronics}, vol.~77, Aug. 2021,
  {A}rt. no. 102576.

\bibitem{vandenhofClosedloopIssuesSystem1998}
P.~{Van den Hof}, ``Closed-loop issues in system identification,'' \emph{Annu.
  Rev. Control}, vol.~22, pp. 173--186, Jan. 1998.

\bibitem{Heemels2012IntroETC}
W.~Heemels, K.~Johansson, and P.~Tabuada, ``An introduction to event-triggered
  and self-triggered control,'' in \emph{Proc. IEEE Conf. Decis. Control},
  2012, pp. 3270--3285.

\bibitem{peng2018survey}
C.~Peng and F.~Li, ``A survey on recent advances in event-triggered
  communication and control,'' \emph{Inf. Sci.}, vol. 457, pp. 113--125, Aug.
  2018.

\bibitem{Shi2014event}
D.~Shi, T.~Chen, and L.~Shi, ``Event-triggered maximum likelihood state
  estimation,'' \emph{Automatica}, vol.~50, no.~1, pp. 247--254, Jan. 2014.

\bibitem{Shi2015setvalue}
------, ``On set-valued {K}alman filtering and its application to event-based
  state estimation,'' \emph{IEEE Trans. Autom. Control}, vol.~60, no.~5, pp.
  1275--1290, May. 2015.

\bibitem{Umlauft2020ETL_Gaussian}
J.~Umlauft and S.~Hirche, ``Feedback linearization based on {G}aussian
  processes with event-triggered online learning,'' \emph{IEEE Trans. Autom.
  Control}, vol.~65, no.~10, pp. 4154--4169, Oct. 2020.

\bibitem{guo2020prediction}
J.~Guo and J.-D. Diao, ``Prediction-based event-triggered identification of
  quantized input {FIR} systems with quantized output observations,''
  \emph{Sci. China-Inf. Sci.}, vol.~63, pp. 1--12, Dec. 2020.

\bibitem{Solowjow2018ETL}
F.~Solowjow, D.~Baumann, J.~Garcke, and S.~Trimpe, ``Event-triggered learning
  for resource-efficient networked control,'' in \emph{Proc. Am. Control
  Conf.}, 2018, pp. 6506--6512.

\bibitem{Solowjow2020ETL}
F.~Solowjow and S.~Trimpe, ``Event-triggered learning,'' \emph{Automatica},
  vol. 117, Apr. 2020, {A}rt. no. 109009.

\bibitem{umlauft2020smart}
J.~Umlauft, T.~Beckers, A.~Capone, A.~Lederer, and S.~Hirche, ``Smart
  forgetting for safe online learning with {G}aussian processes,'' in
  \emph{Proc. Conf. Learn. Dyn. Control}, 2020, pp. 160--169.

\bibitem{lederer2021gaussian}
A.~Lederer, A.~J.~O. Conejo, K.~A. Maier, W.~Xiao, J.~Umlauft, and S.~Hirche,
  ``{G}aussian process-based real-time learning for safety critical
  applications,'' in \emph{Proc. Int. Conf. Mach. Learn.}, 2021, pp.
  6055--6064.

\bibitem{jiao2022backstepping}
J.~Jiao, A.~Capone, and S.~Hirche, ``Backstepping tracking control using
  {G}aussian processes with event-triggered online learning,'' \emph{IEEE
  Control Syst. Lett.}, vol.~6, pp. 3176--3181, Jun. 2022.

\bibitem{he2022learning}
K.~He, Y.~Deng, G.~Wang, X.~Sun, Y.~Sun, and Z.~Chen, ``Learning-based
  trajectory tracking and balance control for bicycle robots with a pendulum: A
  {G}aussian process approach,'' \emph{IEEE-ASME Trans. Mechatron.}, vol.~27,
  no.~2, pp. 634--644, Jan. 2022.

\bibitem{Beuchert2020ETL_Cyclically}
J.~Beuchert, F.~Solowjow, J.~Raisch, S.~Trimpe, and T.~Seel, ``Hierarchical
  event-triggered learning for cyclically excited systems with application to
  wireless sensor networks,'' \emph{IEEE Control Syst. Lett.}, vol.~4, no.~1,
  pp. 103--108, Jun. 2020.

\bibitem{zheng2022economic}
Y.~Zheng, S.~Li, R.~Wan, and Y.~Wang, ``Economic {L}yapunov-based model
  predictive control with event-triggered parametric identification,''
  \emph{Int. J. Robust Nonlinear Control}, vol.~32, no.~1, pp. 205--226, Oct.
  2022.

\bibitem{Zheng2023ETL_LACKI}
K.~Zheng, D.~Shi, Y.~Shi, and J.~Wang, ``Nonparameteric event-triggered
  learning with applications to adaptive model predictive control,'' \emph{IEEE
  Trans. Autom. Control}, vol.~68, no.~6, pp. 3469--3484, Jun. 2023.

\bibitem{DIAO2018ETL_FIR}
J.-D. Diao, J.~Guo, and C.-Y. Sun, ``Event-triggered identification of {FIR}
  systems with binary-valued output observations,'' \emph{Automatica}, vol.~98,
  pp. 95--102, Dec. 2018.

\bibitem{gatsis2021adaptive}
K.~Gatsis, ``Adaptive scheduling for machine learning tasks over networks,'' in
  \emph{Proc. Am. Control Conf.}, 2021, pp. 1224--1229.

\bibitem{gatsis2022federated}
------, ``Federated reinforcement learning at the edge: Exploring the
  learning-communication tradeoff,'' in \emph{Proc. Eur. Control Conf.}, 2022,
  pp. 1890--1895.

\bibitem{Qin2003Survey_MPC}
S.~J. Qin and T.~A. Badgwell, ``A survey of industrial model predictive control
  technology,'' \emph{Control Eng. Practice}, vol.~11, no.~7, pp. 733--764,
  Jul. 2003.

\bibitem{2022_SLemma}
H.~J. van Waarde, M.~K. Camlibel, and M.~Mesbahi, ``From noisy data to feedback
  controllers: Nonconservative design via a matrix {S}-lemma,'' \emph{IEEE
  Trans. Autom. Control}, vol.~67, pp. 162--175, Jan 2022.

\bibitem{Mayne1993AMPC}
D.~Mayne and H.~Michalska, ``Adaptive receding horizon control for constrained
  nonlinear systems,'' in \emph{Proc. IEEE Conf. Decis. Control}, 1993, pp.
  1286--1291.

\bibitem{Fukushima2007AMPC}
H.~Fukushima, T.-H. Kim, and T.~Sugie, ``Adaptive model predictive control for
  a class of constrained linear systems based on the comparison model,''
  \emph{Automatica}, vol.~43, no.~2, pp. 301--308, Feb. 2007.

\bibitem{vicente2019stabilizing}
B.~A.~H. Vicente and P.~A. Trodden, ``Stabilizing predictive control with
  persistence of excitation for constrained linear systems,'' \emph{Syst.
  Control Lett.}, vol. 126, pp. 58--66, Apr. 2019.

\bibitem{Adetola2011AMPC_Set}
V.~Adetola and M.~Guay, ``Robust adaptive {MPC} for constrained uncertain
  nonlinear systems,'' \emph{Int. J. Adaptive Control Signal Process.},
  vol.~25, no.~2, pp. 155--167, Aug. 2011.

\bibitem{Lu2019AMPC_Set}
X.~Lu and M.~Cannon, ``Robust adaptive tube model predictive control,'' in
  \emph{Proc. Am. Control Conf.}, 2019, pp. 3695--3701.

\bibitem{Lorenz2019RMPC_update}
M.~Lorenzen, M.~Cannon, and F.~Allgöwer, ``Robust {MPC} with recursive model
  update,'' \emph{Automatica}, vol. 103, pp. 461--471, May. 2019.

\bibitem{Zhang2020AMPC_uncertainty}
K.~Zhang and Y.~Shi, ``Adaptive model predictive control for a class of
  constrained linear systems with parametric uncertainties,''
  \emph{Automatica}, vol. 117, Jul. 2020, {A}rt. no. 108974.

\bibitem{willemsNotePersistencyExcitation2005}
J.~C. Willems, P.~Rapisarda, I.~Markovsky, and B.~L.~M. De~Moor, ``A note on
  persistency of excitation,'' \emph{Syst. Control Lett.}, vol.~54, no.~4, pp.
  325--329, Apr. 2005.

\bibitem{coulson2019deepc}
J.~Coulson, J.~Lygeros, and F.~Dorfler, ``Data-enabled predictive control: In
  the shallows of the {DeePC},'' in \emph{2019 18th European Control Conference
  (ECC)}.\hskip 1em plus 0.5em minus 0.4em\relax Naples, Italy: IEEE,
  Conference Proceedings, pp. 307--312.

\bibitem{mehrnooshEventTriggeredRobustDataDriven2023}
A.~Mehrnoosh and M.~Haeri, ``An event-triggered robust data-driven predictive
  control with transient response improvement,'' in \emph{Int. Conf. Electr.
  Eng.}, May 2023, pp. 488--491.

\bibitem{liuDataDrivenSelfTriggeringMechanism2023}
W.~Liu, Y.~Li, J.~Sun, G.~Wang, and J.~Chen, ``Data-driven self-triggering
  mechanism for state feedback control,'' \emph{IEEE Control Syst. Lett.},
  vol.~7, pp. 1975--1980, 2023.

\bibitem{liuDataDrivenResilientPredictive2023}
W.~Liu, J.~Sun, G.~Wang, F.~Bullo, and J.~Chen, ``Data-driven resilient
  predictive control under denial-of-service,'' \emph{IEEE Trans. Autom.
  Control}, vol.~68, no.~8, pp. 4722--4737, Aug. 2023.

\bibitem{schmittDataDrivenPredictiveControl2024}
L.~Schmitt, J.~Beerwerth, M.~Bahr, and D.~Abel, ``Data-driven predictive
  control with online adaption: Application to a fuel cell system,'' \emph{IEEE
  Trans. Control Syst. Technol.}, vol.~32, no.~1, pp. 61--72, Jan. 2024.

\bibitem{barosOnlineDataenabledPredictive2022}
S.~Baros, C.-Y. Chang, G.~E. {Col{\'o}n-Reyes}, and A.~Bernstein, ``Online
  data-enabled predictive control,'' \emph{Automatica}, vol. 138, p. 109926,
  Apr. 2022.

\bibitem{russoTubeBasedZonotopicDataDriven2023}
A.~Russo and A.~Proutiere, ``Tube-based zonotopic data-driven predictive
  control,'' in \emph{Am. Control Conf.}, May 2023, pp. 3845--3851.

\bibitem{alanwarRobustDatadrivenPredictive2022}
A.~Alanwar, Y.~St{\"u}rz, and K.~H. Johansson, ``Robust data-driven predictive
  control using reachability analysis,'' \emph{Eur. J. Control}, vol.~68, p.
  100666, Nov. 2022.

\bibitem{2021_Trade-off}
A.~Bisoffi, C.~De~Persis, and P.~Tesi, ``Trade-offs in learning controllers
  from noisy data,'' \emph{Syst. Control Lett.}, vol. 154, Aug. 2021, {A}rt.
  no. 104985.

\bibitem{1994_LMI}
S.~Boyd, L.~El~Ghaoui, E.~Feron, and V.~Balakrishnan, \emph{Linear {M}atrix
  {I}nequalities in {S}ystem and {C}ontrol {T}heory}.\hskip 1em plus 0.5em
  minus 0.4em\relax SIAM, 1994.

\bibitem{chen2012optimal}
T.~Chen and B.~A. Francis, \emph{Optimal {S}ampled-{D}ata {C}ontrol
  {S}ystems}.\hskip 1em plus 0.5em minus 0.4em\relax Springer, 1995.

\bibitem{markovskyBehavioralSystemsTheory2021}
I.~Markovsky and F.~D{\"o}rfler, ``Behavioral systems theory in data-driven
  analysis, signal processing, and control,'' \emph{Annu. Rev. Control},
  vol.~52, pp. 42--64, Jan. 2021.

\bibitem{farjadniaRobustDatadrivenPredictive2023}
M.~Farjadnia, A.~Alanwar, M.~U.~B. Niazi, M.~Molinari, and K.~H. Johansson,
  ``Robust data-driven predictive control of unknown nonlinear systems using
  reachability analysis,'' \emph{Eur. J. Control}, vol.~74, p. 100878, Nov.
  2023.

\bibitem{dorflerBridgingDirectIndirect2023}
F.~D{\"o}rfler, J.~Coulson, and I.~Markovsky, ``Bridging direct and indirect
  data-driven control formulations via regularizations and relaxations,''
  \emph{IEEE Trans. Autom. Control}, vol.~68, no.~2, pp. 883--897, Feb. 2023.

\bibitem{alanwarDataDrivenReachabilityAnalysis2023}
A.~Alanwar, A.~Koch, F.~Allg{\"o}wer, and K.~H. Johansson, ``Data-driven
  reachability analysis from noisy data,'' \emph{IEEE Trans. Autom. Control},
  vol.~68, no.~5, pp. 3054--3069, May 2023.

\bibitem{bisoffi2022data}
A.~Bisoffi, C.~De~Persis, and P.~Tesi, ``Data-driven control via {P}etersen's
  lemma,'' \emph{Automatica}, vol. 145, p. 110537, 2022.

\bibitem{golub2013matrix}
G.~H. Golub and C.~F. Van~Loan, \emph{Matrix {C}omputations}.\hskip 1em plus
  0.5em minus 0.4em\relax Johns Hopkins University Press, 2013.

\bibitem{yeAffineScalingAlgorithms1992}
Y.~Ye, ``On affine scaling algorithms for nonconvex quadratic programming,''
  \emph{Math. Program.}, vol.~56, no. 1-3, pp. 285--300, Aug. 1992.

\bibitem{zwartGlobalMaximizationConvex1974}
P.~B. Zwart, ``Global maximization of a convex function with linear inequality
  constraints,'' \emph{Oper. Res.}, vol.~22, no.~3, pp. 602--609, Jun. 1974.

\bibitem{bronsteinApproximationConvexSets2008}
E.~M. Bronstein, ``Approximation of convex sets by polytopes,'' \emph{J. Math.
  Sci.}, vol. 153, no.~6, pp. 727--762, Sep. 2008.

\bibitem{2007_MDL}
J.~Ding and A.~Zhou, ``Eigenvalues of rank-one updated matrices with some
  applications,'' \emph{Appl. Math. Lett}, vol.~20, pp. 1223--1226, Dec. 2007.

\bibitem{zhang2006schur}
F.~Zhang, \emph{The Schur {C}omplement and its {A}pplications}.\hskip 1em plus
  0.5em minus 0.4em\relax Springer, 2006.

\bibitem{huNonconservativeMatrixInequality2010}
T.~Hu and F.~Blanchini, ``Non-conservative matrix inequality conditions for
  stability/stabilizability of linear differential inclusions,''
  \emph{Automatica}, vol.~46, no.~1, pp. 190--196, Jan. 2010.

\bibitem{CVX}
\BIBentryALTinterwordspacing
M.~Grant and S.~Boyd. (2014, Mar.) {CVX}: Matlab software for disciplined
  convex programming, version 2.1. [Online]. Available:
  \url{https://cvxr.com/cvx}
\BIBentrySTDinterwordspacing

\bibitem{ishiharaDesignDiscretetimeIntegral1992a}
T.~Ishihara, H.-J. Guo, and H.~Takeda, ``A design of discrete-time integral
  controllers with computation delays via loop transfer recovery,''
  \emph{Automatica}, vol.~28, no.~3, pp. 599--603, May 1992.

\end{thebibliography}

\end{document}